\def\R{\mathbb{R}}
\def\la{\langle}
\def\ra{\rangle}
\def\to{\rightarrow}
\def\eps{\varepsilon}
\newtheorem{theorem}{Theorem}[section]
\newtheorem{lemma}[theorem]{Lemma}
\newtheorem{proposition}[theorem]{Proposition}
\newtheorem{definition}[theorem]{Definition}
\newtheorem{remark}[theorem]{Remark}
\newtheorem{rk&ex}[theorem]{Remarks \& Examples}
\newtheorem{corollary}[theorem]{Corollary}
\newcommand{\beqar}{\begin{eqnarray*}}
\newcommand{\eeqar}{\end{eqnarray*}}
\newcommand{\beqarl}{\begin{eqnarray}}
\newcommand{\eeqarl}{\end{eqnarray}}
\newcommand{\lp}{\left(}
\newcommand{\rp}{\right)}
\newcommand{\be}{\begin{equation}}
\newcommand{\ee}{\end{equation}}
\newcommand{\fe}{f_{\varepsilon}}
\newcommand{\nn}{\nonumber}
\newcommand{\asymn}{{[\mathbf{n}]_{\times}}}
\newcommand{\nvec}{\mathbf{n}}
\newcommand{\rvec}{\mathbf{r}}
\newcommand{\vezero}{\mathbf{e_{1}}}
\newcommand{\uu}{\mathbf{u}}
\newcommand{\vv}{\mathbf{v}}
\newcommand{\ww}{\mathbf{w}}
\newcommand{\Id}{\mathrm{Id}}
\newcommand{\PD}{\mbox{PD}}
\newcommand{\tr}{\mathrm{tr}}
\title{A new flocking model through body attitude coordination}
\author[(1)]{Pierre Degond}
\author[(2)]{Amic Frouvelle}
\author[(3)]{Sara Merino-Aceituno}
\affil[(1)(3)]{Department of Mathematics, Imperial College London, 
London SW7 2AZ, UK}
\affil[(1)]{pdegond@imperial.ac.uk}
\affil[(3)]{s.merino-aceituno@imperial.ac.uk}
\affil[(2)]{CEREMADE, UMR CNRS 7534, Universit\'e de Paris-Dauphine, PSL Research University\\

Place du Mar\'echal De Lattre De Tassigny, 75775 PARIS CEDEX 16 - FRANCE\\

frouvelle@ceremade.dauphine.fr
}
\begin{document}

\maketitle
\begin{abstract}
We present a new model for multi-agent dynamics where each agent is described by its position and body attitude: agents travel at a constant speed in a given direction and their body can rotate around it adopting different configurations. In this manner, the body attitude is described by three orthonormal axes giving an element in~$SO(3)$ (rotation matrix). Agents try to coordinate their body attitudes with the ones of their neighbours. In the present paper, we give the Individual Based Model (particle model) for this dynamics and derive its corresponding kinetic and macroscopic equations.

The work presented here is inspired by the Vicsek model and its study in \cite{degond2008continuum}. This is a new model where collective motion is reached through body attitude coordination.

\bigskip

\textbf{Key words:} Body attitude coordination; collective motion; Vicsek model; Generalized Collision Invariant; Rotation group.

\bigskip
\textbf{AMS Subject Classification:}  	35Q92, 82C22, 82C70, 92D50.

\end{abstract}

\tableofcontents

\section{Introduction}

In this paper we model collective motion where individuals or agents are described by their position and body attitude. The body attitude is given by three orthonormal axis; one of the axes describes the direction in which the agent moves at a constant speed; the other two axis indicate the relative position of the body with respect to this direction. Agents try to coordinate their body attitude with those of near neighbours (see Figure~\ref{fig:coordination}). Here we present an Individual Based Model (particle model) for body attitude coordination and derive the corresponding macroscopic equations from the associated mean-field equation, which we refer to as the Self-Organized Hydrodynamics  for body attitude coordination (SOHB), by reference to the Self-Organized Hydrodynamics (SOH) derived from the Vicsek dynamics (see \cite{degond2008continuum} and discussion below).

There exists already a variety of models for collective behaviour depending on the type of interaction between agents. However, to the best of our knowledge, this is the first model that takes into account interactions based on body attitude coordination. This has applications in the study of collective motion of animals such as birds and fish and it is a stepping stone to model more complex agents formed by articulated bodies (corp\-ora)~\cite{constantin2010onsager,constantin2010high}. In this section we present related results in the literature and the structure of the document. 

\bigskip

\renewcommand\thempfootnote{\arabic{mpfootnote}} %so it will show numbers instead of symbols
\begin{figure}
  \begin{minipage}{\textwidth}
    \centering
    \begin{subfigure}[b]{0.4\textwidth}
        \includegraphics[trim={0 4cm 0 0},clip,width=\textwidth]{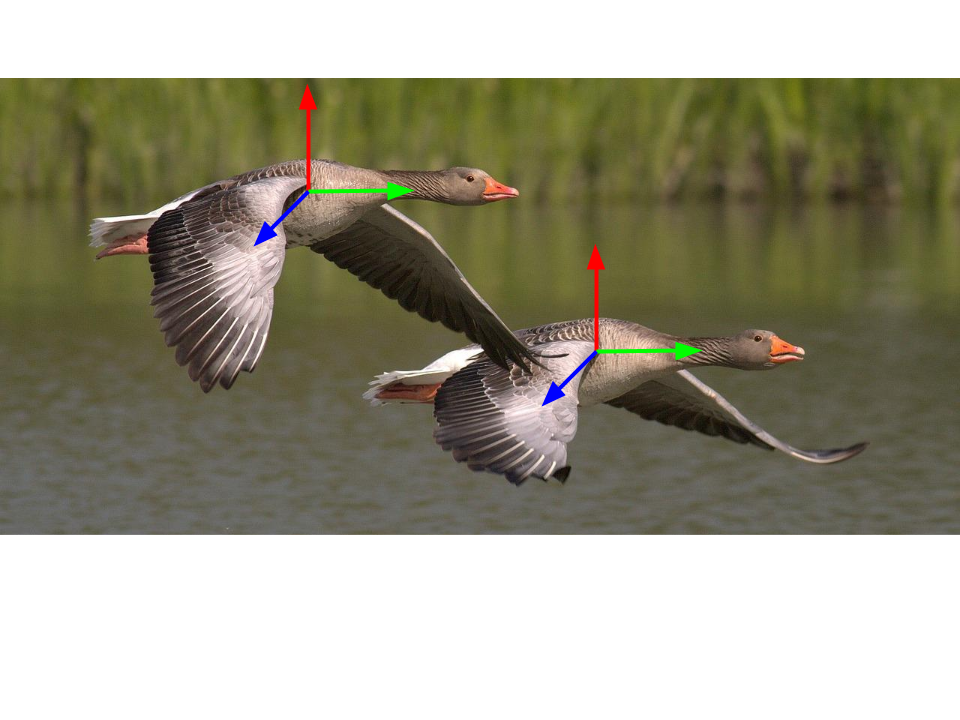}
        \caption{Birds with coordinated body attitude. Three orthonormal axis describe the body attitude: the green arrow indicates the direction of movement; the blue and red ones indicate the position of the body with respect to this direction.}
        %\label{fig:1}
    \end{subfigure}
    ~ %add desired spacing between images, e. g. ~, \quad, \qquad, \hfill etc. 
      %(or a blank line to force the subfigure onto a new line)
    \begin{subfigure}[b]{0.4\textwidth}
        \includegraphics[trim={4cm 5cm 4cm 0},clip,width=\textwidth]{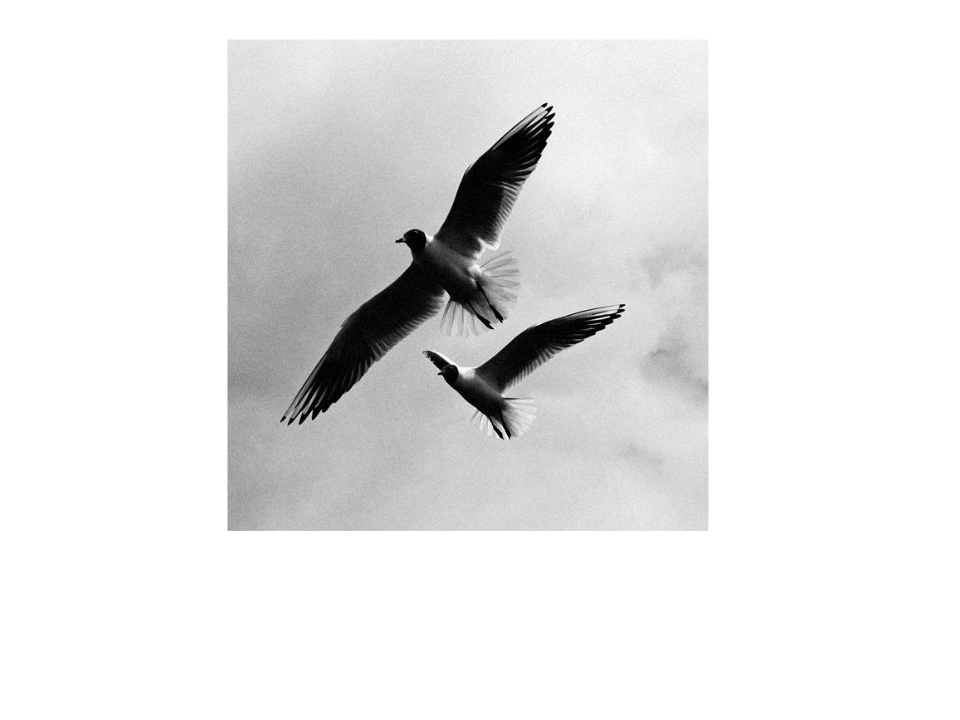}
        \caption{Birds with  no coordinated body attitude.}
        %\label{fig:2}
    \end{subfigure}
    ~ %add desired spacing between images, e. g. ~, \quad, \qquad, \hfill etc. 
    %(or a blank line to force the subfigure onto a new line)

    \begin{subfigure}[b]{0.6\textwidth}
        \includegraphics[trim={0 4cm 0 0},clip,width=\textwidth]{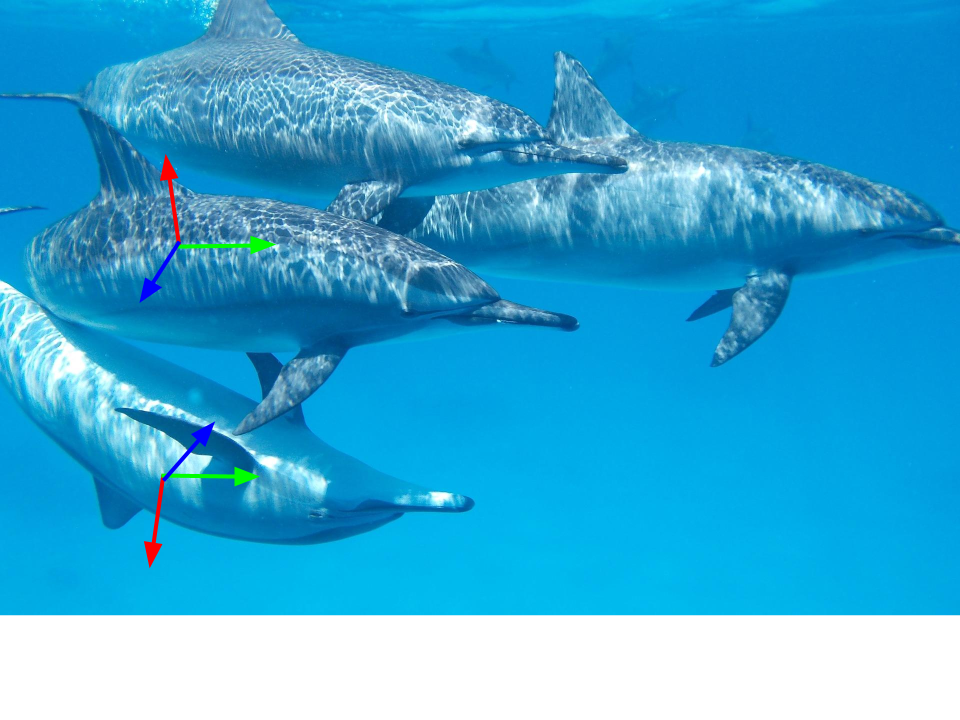}
        \caption{Dolphins moving in the same direction but with different body attitude. In this example one can see that the body attitude coordination model gives more information than the Vicsek model (which only describes the direction of movement).}
        %\label{fig:3}
    \end{subfigure}
    \caption[Examples]{Examples of body attitude coordination/dis-coordination and the use of the rotation matrix representation.\footnote{These images are in public domain (released under Creative Commons CC0 by pixabay.com).}}\label{fig:coordination}
  \end{minipage}
\end{figure}

There exists a vast literature on collective behaviour. In particular, here we deal with the case of self-propelled particles which is ubiquitous in nature. It includes, among others, fish schools, flocks of birds, herds \cite{buhl2006disorder,cavagna2010scale,parrish1997animal}; bacteria \cite{ben2000cooperative,zhang2010collective}; human walking behaviour~\cite{helbing2007dynamics}.  
The interest in studying these systems is to gain understanding on the emergent properties: local interactions give rise to large scale structures in the form of patterns and phase transitions (see the review \cite{vicsek2012collective}). These large scale structures take place when the number of individuals or agents is very large. In this case a statistical description of the system is more pertinent than an agent-based one. With this in mind mean-field limits are devised when the number of agents tend to infinity. From  them macroscopic equations can be obtained using hydrodynamic limit techniques (as we explain below).

\medskip
The results presented here are inspired from those in reference \cite{degond2008continuum}. There the authors consider the Vicsek model which is a particular type of model for self-propelled particles \cite{aldana2003phase,couzin2002collective,gregoire2004onset,vicsek1995novel}. The Vicsek model describes collective motion where agents travel at a constant speed in a given direction. At each time step the direction of movement is updated to the averaged one of the neighbouring agents, with some noise. The position is updated considering the distance travelled during that time step.

 Our results here are inspired by the Self-Organized Hydrodynamics (SOH)  model (the continuum version of the Vicsek model), where we have substituted velocity alignment by body attitude coordination. Other refinements and adaptations of the Vicsek model (at the particle level) or the SOH model (at the continuum level) have been proposed in the literature, we just mention the following ones as examples: in \cite{cavagna2014flocking} an individual-based model is proposed to better describe collective motion of turning birds; in \cite{degond2015multi} agents are considered to have the shape of discs and volume exclusion is included in the dynamics; a description of nematic alignment in rods is done in \cite{degond2015continuum}.

In \cite{degond2008continuum} the authors  investigate the mean-field limit and macroscopic limit of the Vicsek model. The mean-field limit gives a kinetic equation that takes the form of a Fokker-Planck equation referred to as the mean-field limit Vicsek model. 

To obtain the macroscopic equations (the SOH model), the authors in \cite{degond2008continuum} use the well-known tools of hydrodynamic limits, first developed in the framework of the Boltzmann equation for rarefied gases \cite{cercignani2013mathematical,degond2004macroscopic,sone2012kinetic}. Since its first appearance, hydrodynamics limits have been used in other different contexts, including traffic flow modeling \cite{aw2002derivation,helbing2001traffic} and supply chain research~\cite{armbruster2006model,degond2007stochastic}. However, in \cite{degond2008continuum} a new concept is introduced: the Generalized Collision Invariant (GCI). Typically to obtain the macroscopic equations we require as many conserved quantities in the kinetic equation as the dimension of the equilibria (see again \cite{vicsek2012collective}). In the mean-field limit Vicsek model this requirement is not fulfilled and the GCI is used to sort out this problem. For other cases where the GCI concept has been used see \cite{degond2014hydrodynamics,degond2014macroscopic,degond2012hydrodynamics,degond2014evolution,degond2015self,frouvelle2012continuum}.

\bigskip
After this introduction and the following discussion about the main result, the next part of the paper is dedicated to the modelling. In Section~\ref{sec:derivation_IBM} we explain the derivation of the Individual Based Model for body coordination dynamics:  given~$N$ agents labelled by~$k=1,\hdots, N$ the positions and body attitudes~$(\mathbf{X_k}, A_k) \in \R^3\times SO(3)$ over time are given by the Stochastic Differential Equations \eqref{eq:IBM}-\eqref{eq:IBM2}. In Section~\ref{sec:mean_field_limit} we give the corresponding (formal) mean-field limit (Prop.~\ref{prop:mean_field_limit}) for the evolution of the empirical measure when the number of agents~$N\to \infty$. 

The last part concerns the derivation of the macroscopic equations (Theorem~\ref{th:macro_limit}) for the total density of the particles~$\rho=\rho(t,x)$ and the matrix of the mean body attitude~$\Lambda = \Lambda(t,x)$. To obtain these equations we first study the rescaled mean-field equation (Eq. \eqref{eq:MFlimit} in Section~\ref{sec:scaling}), which is, at leading order, a Fokker-Planck equation. We determine its equilibria, which are given by a von Mises distribution on~$SO(3)$ (Eq. \eqref{eq:Von_Mises_equilibria}, Section~\ref{sec:FP}). Finally in Section~\ref{sec:GCI} we obtain the Generalized Collision Invariants (Prop.~\ref{prop:non_constant_GCI}), which are the main tool to be able to derive the macroscopic equations in Section~\ref{sec:macro-limit}.

\section{Discussion of the main result: the Self-Organized Hydrodynamics for body attitude coordination (SOHB)}
\label{sec:discussion_macro}

The main result of this paper is Theorem~\ref{th:macro_limit} which gives the following macroscopic equations for  the density of agents~$\rho=\rho(t,x)$ and  the matrix of the mean body attitude~$\Lambda = \Lambda(t,x)\in SO(3)$ (i.e., the Self-Organized Hydrodynamics for body attitude coordination (SOHB)):
\begin{align} \label{eq:macro1}
&\partial_t \rho + c_1 \nabla_x \cdot ( \rho\, \Lambda \vezero) =0,\\
&\rho \Big( \partial_t\Lambda+ c_2 \big((\Lambda \vezero) \cdot \nabla_x\big)\Lambda \Big) +\left[ (\Lambda \vezero) \times \big(c_3\nabla_x \rho+c_4\rho\,\rvec_x(\Lambda)\big) + c_4\rho\,\delta_x(\Lambda)\Lambda \vezero\right]_\times \Lambda =0. \label{eq:macro2} 
\end{align} 
In the equations above~$c_1$,~$c_2$,~$c_3$ and~$c_4$ are explicit constants which depend on the parameters of the model (namely the rate of coordination and the level of noise). The expressions of the constants~$c_2$,~$c_3$ and~$c_4$ depend on the Generalized Collision Invariant mentioned in the introduction (and computed thanks to Prop.~\ref{prop:non_constant_GCI}). The constant~$c_1$ is obtained as a ``consistency'' relation (Lemma~\ref{lem:consistency_relation}). 
In \eqref{eq:macro2}, the operation~$[\cdot]_\times$ transforms a vector~$\vv$ in an antisymmetric matrix such that~$[\vv]_\times \uu=\vv\times \uu$ for any vector~$\uu$ (see \eqref{eq:def_operator_asym} for the exact definition). The scalar~$\delta_x(\Lambda)$ and the vector~$\rvec_x(\Lambda)$ are first order differential operators intrinsic to the dynamics : if~$\Lambda(x)=\exp\lp[\mathbf{b}(x)]_\times \rp\Lambda(x_0)$ with~$\mathbf{b}$ smooth around~$x_0$ and~$\mathbf{b}(x_0)=0$, then 
\[
\delta_x(\Lambda)(x_0) = \left.\nabla_x \cdot \mathbf{b}(x)\right|_{x=x_0}\quad\text{ and }\quad
\rvec_x(\Lambda)(x_0) = \left.\nabla_x \times \mathbf{b}(x)\right|_{x=x_0},
\]
where~$\nabla_x\times$ is the curl operator. These operators are well-defined as long as~$\Lambda$ is smooth: as we will see in the next section, we can always express a rotation matrix as~$\exp\lp[\mathbf{b}]_\times \rp$ for some vector~$\mathbf{b}\in \R^3$, and this function~$\mathbf{b}\mapsto\exp\lp[\mathbf{b}]_\times \rp$ is a local diffeomorphism between a neighborhood of~$0\in\R^3$ and the identity of~$SO(3)$. This gives a unique smooth representation of~$\mathbf{b}$ in the neighborhood of~$0$ when~$x$ is in the neighborhood of~$x_0$ since then~$\Lambda(x)\Lambda(x_0)^{-1}$ is in the neighborhood of~$\Id$.

\medskip

We express Eq. \eqref{eq:macro2} in terms of the basis vectors~$\{\Omega=\Lambda \mathbf{e_1},\uu=\Lambda \mathbf{e_2},\vv=\Lambda \mathbf{e_3}\}$ and we write~$\Lambda=\Lambda(\Omega,\uu,\vv)$. System~\eqref{eq:macro1}-\eqref{eq:macro2} can be expressed as an evolution system for~$\rho$ and the basis~$\{\Omega,\uu,\vv\}$ as follows:
\begin{align}
&\partial_t \rho + c_1 \nabla_x \cdot ( \rho\, \Omega) =0,\label{eq:rho_aux}\\
&\rho D_t\Omega + P_{\Omega^\perp}\lp c_3 \nabla_x\rho + c_4 \rho\, \rvec \rp =0, \label{eq:Omega_aux}\\
& \rho D_t \uu - \uu \cdot \lp c_3 \nabla_x \rho + c_4 \rho \,\rvec\rp \Omega + c_4\rho \,\delta\, \vv =0,\label{eq:u_aux}\\
&\rho D_t \vv - \vv \cdot \lp c_3 \nabla_x \rho + c_4 \rho\, \rvec\rp \Omega - c_4 \rho \,\delta\, \uu =0, \label{eq:v_aux}
\end{align}
where~$D_t := \partial_t + c_2 (\Omega\cdot \nabla_x)$,~$\delta=\delta_x(\Lambda(\Omega,\uu,\vv))$  and~$\rvec=\rvec_x(\Lambda(\Omega,\uu,\vv))$. The operator~$P_{\Omega^\perp}$ denotes the projection on the orthogonal of~$\Omega$. We easily see here that these equations preserve the constraints~$|\Omega|=|\uu|=|\vv|=1$ and~$\Omega\cdot\uu=\Omega\cdot\vv=\uu\cdot\vv=0$.
The expressions of~$\delta$ and~$\rvec$ are:
\beqar
\delta &=& [(\Omega \cdot \nabla_x) \uu] \cdot \vv + [(\uu\cdot \nabla_x)\vv]\cdot \Omega + [(\vv\cdot \nabla_x)\Omega]\cdot \uu, \\
\rvec &=&  (\nabla_x\cdot \Omega) \Omega + (\nabla_x \cdot \uu) \uu + (\nabla_x\cdot \vv) \vv .  
\eeqar

\bigskip
Eq. \eqref{eq:macro1} (or equivalently Eq. \eqref{eq:rho_aux}) is the continuity equation for~$\rho$ and ensures mass conservation. The convection velocity is given by~$c_1\Lambda \vezero=c_1\Omega$ and $\Omega$ gives the direction of motion.
Eq. \eqref{eq:macro2} (or equivalently Eqs. \eqref{eq:Omega_aux}-\eqref{eq:v_aux}) gives the evolution of~$\Lambda$. 
%%In the case of the evolution of the direction of movement~$\Omega$, Eq. \eqref{eq:Omega_aux},  we obtain the analogous to a momentum balance equation though it is not a conservative equation due to the projection term. The shape of Eq. \eqref{eq:Omega_aux} resembles an Euler equation, however in this case~$c_1 \neq c_2$ and we have the extra term in~$c_4$.
We remark that every term in  Eq. \eqref{eq:macro2} belongs to the tangent space at~$\Lambda$ in~$SO(3)$; this is true for the first term since~$(\partial_t + c_2 (\Lambda \vezero) \cdot \nabla_x)$ is a differential operator and it also holds for the second term because it is the product of an antisymmetric matrix with~$\Lambda$ (see Prop.~\ref{prop:tangentspaceSOn}). Alternately, this means that~$(\Omega(t),\uu(t),\vv(t))$ is a direct orthonormal basis as soon as~$(\Omega(0),\uu(0),\vv(0))$.

\medskip
The term corresponding to~$c_3$ in \eqref{eq:macro2} gives the influence of~$\nabla_x \rho$ (pressure gradient) on the body attitude~$\Lambda$. It has the effect of rotating the body around the vector directed by~$(\Lambda\vezero) \times  \nabla_x \rho$ at an angular speed given by~$\frac{c_3}\rho\|(\Lambda\vezero) \times  \nabla_x \rho\|$, so as to align~$\Omega$ with~$-\nabla_x\rho$. Indeed the solution of the differential equation~$\frac{d\Lambda}{dt}+\gamma[\nvec]_\times \Lambda=0$, when~$\nvec$ is a constant unit vector and~$\gamma$ a constant scalar, is given by~$\Lambda(t)=\exp(-\gamma\, t[\nvec]_\times)\Lambda_0$, and~$\exp(-\gamma\, t[\nvec]_\times)$ is the rotation of axis~$\nvec$ and angle~$-\gamma\, t$ (see~\eqref{eq:exponential_form}, it is called Rodrigues' formula).
Since we will see that~$c_3$ is positive the influence of this term consists of relaxing the direction of displacement~$\Lambda \vezero$ towards~$\nabla_x\rho$. 
Alternately, we can see from~\eqref{eq:Omega_aux} that~$\Omega$ turns in the opposite direction to~$\nabla_x\rho$, showing that the~$\nabla_x\rho$ term has the same effect as a pressure gradient in classical hydrodynamics. We note that the pressure gradient has also the effect of rotating the whole body frame (see influence of~$\nabla_x\rho$ on~$\uu$ and~$\vv$) just to keep the frame orthonormal.
Similarly to what happens with the~$\nabla_x\rho$ term in Eq. \eqref{eq:macro2}, the term containing~$c_4\rho\,\rvec$ in Eq. \eqref{eq:Omega_aux} has the effect of relaxing the direction of displacement~$\Omega$ towards~$-\rvec$ (we will indeed see that~$c_4$ is positive).
Finally, the last terms of Eqs. \eqref{eq:u_aux}-\eqref{eq:v_aux} have the effect of rotating the vectors~$\uu$ and~$\vv$ around~$\Omega$ along the flow driven by~$D_t$ at angular speed~$c_4\delta$. 

If we forget the term proportional to~$\rvec$ in~\eqref{eq:Omega_aux}, System~\eqref{eq:rho_aux}-\eqref{eq:Omega_aux} is decoupled from~\eqref{eq:u_aux}-\eqref{eq:v_aux}, and is an autonomous system for~$\rho$ and~$\Omega$, which coincides with the Self-Organized Hydrodynamic (SOH) model. The SOH model provides the fluid description of a particle system obeying the Vicsek dynamics~\cite{degond2008continuum}. As already discussed in~\cite{degond2008continuum}, the SOH model bears analogies with the compressible Euler equations, where~\eqref{eq:rho_aux} is obviously the mass conservation equation and~\eqref{eq:Omega_aux} is akin to the momentum conservation equation, where momentum transport~$\rho D_t\Omega$ is balanced by a pressure force~$-P_{\Omega^\perp}\nabla_x\rho$. There are however major differences. The first one is the presence of the projection operation~$P_{\Omega^\perp}$ which is there to preserve the constraint~$|\Omega|=1$. Indeed, while the velocity in the Euler equations is an arbitrary vector, the quantity~$\Omega$ in the SOH model is a velocity orientation and is normalized to~$1$. The second one is that the convection speed~$c_2$ in the convection operator~$D_t$ is a priori different from the mass convection speed~$c_1$ appearing in the continuity equation. This difference is a signature of the lack of Galilean invariance of the system, which is a common feature of all dry active matter models.

The major novelty of the present model, which can be referred to as the Self-Organized Hydrodynamic model with Body coordination (or SOHB) is that the transport of the direction of motion~$\Omega$ involves the influence of another quantity specific to the body orientation dynamics, namely the vector~$\rvec$. The overall dynamics tends to align the velocity orientation~$\Omega$, not opposite to the density gradient~$\nabla_x\rho$ but opposite to a composite vector~$(c_3\nabla_x\rho+c_4\rho\,\rvec)$. The vector~$\rvec$ is the rotational of a vector~$\mathbf{b}$ locally attached to the frame (namely the unit vector of the local rotation axis multiplied by the local angle of rotation around this axis). This vector gives rise to an effective pressure force which adds up to the usual pressure gradient. It would be interesting to design specific solutions where this effective pressure force has a demonstrable effect on the velocity orientation dynamics.

In addition to this effective force, spatial inhomogeneities of the body attitude also have the effect of inducing a proper rotation of the frame about the direction of motion. This proper rotation is also driven by spatial inhomogeneities of the vector~$\mathbf{b}$ introduced above, but are now proportional to its divergence.

\section{Modelling: the Individual Based Model and its mean-field limit}
\label{sec:modelling}

The body attitude is given by a rotation matrix. Therefore, we work on the Riemannian manifold~$SO(3)$ (Special Orthogonal Group), which is formed by the subset of matrices~$A$ such that~$AA^T=\Id$ and~$\det(A)=1$, where~$\Id$ stands for the identity matrix. 

\medskip

In this document
$\mathcal{M}$ indicates the set of square matrices of dimension~$3$; 
$\mathcal{A}$ is the set of antisymmetric matrices of dimension~$3$;
$\mathcal{S}$ is the set of symmetric matrices of dimension~$3$. Typically we will denote by~$A, \Lambda$ matrices in~$SO(3)$ and by~$P$ matrices in~$\mathcal{A}$. Bold symbols~$\mathbf{n}, \vv,\vezero$ indicate vectors.

\medskip

We will often use the so-called Euler axis-angle parameters to represent an element in~$SO(3)$: to~$A\in SO(3)$ there is associated an angle~$\theta \in [0,\pi]$ and a vector~$\nvec\in S^2$ so that~$A=A(\theta,\nvec)$ corresponds to the anticlockwise rotation of angle~$\theta$ around the vector~$\nvec$. It is easy to see that 
\be \label{eq:trace_theta}
\tr(A)=1+2\cos\theta
\ee
 (for instance expressing~$A$ in an orthonormal basis with~$\nvec$), so the angle~$\theta$ is uniquely defined as~$\arccos(\frac12(\tr(A)-1))$. Notice that~$\nvec$ is uniquely defined whenever~$\theta\in(0,\pi)$ (if~$\theta=0$ then~$\nvec$ can be any vector in~$S^2$ and if~$\theta = \pi$ then the direction of~$\nvec$ is uniquely defined but not its orientation). For a given vector~$\uu$, we introduce the antisymmetric matrix~$\left[ \uu\right]_\times$, where~$[\cdot]_\times$ is the linear operator from~$\mathbb{R}^3$ to~$\mathcal{A}$ given by 
\be \label{eq:def_operator_asym}
\left[ \uu\right]_\times := \left[
\begin{array}{ccc}
0 & -u_3 & u_2\\
u_3 & 0 & -u_1\\
-u_2 & u_1 & 0
\end{array}
\right],
\ee  
 so that for any vectors~$\uu,\vv\in \R^3$, we have~$\left[ \uu\right]_\times \vv= \uu\times \vv$. In this framework, we have the following representation for~$A\in SO(3)$ (called Rodrigues' formula): 
\beqarl 
A=A(\theta,\nvec)&=& \Id + \sin\theta\, \asymn + (1-\cos\theta) [\nvec]_\times^2 \label{eq:Rodrigues_formula} \\
&=& \exp(\theta \asymn) \label{eq:exponential_form}.
\eeqarl
 We also have~$\nvec\times(\nvec\times\vv)=(\nvec\cdot\vv)\,\nvec-(\nvec\cdot\nvec)\vv$, therefore when~$\nvec$ is a unit vector, we have :
\begin{equation}\label{eq-asym2-proj}
[\nvec]_\times^2=\nvec\otimes\nvec-\Id,
\end{equation}
where the tensor product~$\mathbf{a}\otimes\mathbf{b}$ is the matrix defined by~$(\mathbf{a}\otimes\mathbf{b})\uu=(\uu\cdot\mathbf{b}) \mathbf{a}$ for any~$\uu\in\R^3$.
Finally,~$SO(3)$ has a natural Riemannian metric (see~\cite{huynh2009metrics}) induced by the following inner product in the set of square matrices of dimension~$3$:
\be \label{eq:dot_product_SO3} 
A \cdot B=\frac{1}{2}\,\tr(A^T B)=\frac{1}{2} \sum_{i,j}A_{ij}B_{ij}.
\ee
This normalization gives us that for any vectors~$\uu,\vv\in \R^3$, we have that
\begin{equation}\label{eq:properties_asym_matrix}
\left[ \uu\right]_\times \cdot \left[ \vv\right]_\times = (\uu \cdot \vv).
\end{equation}
Moreover, the geodesic distance on~$SO(3)$ between~$\Id$ and a rotation of angle~$\theta\in[0,\pi]$ is exactly given by~$\theta$ (the geodesic between~$\Id$ and~$A$ is exactly~$t\in[0,\theta]\mapsto\exp(t\asymn)$). See Appendix~\ref{ap:SO(3)} for some properties of~$SO(3)$ used throughout this work.

Seeing~$SO(3)$ as a Riemannian manifold, we will use the following notations:~$T_A$ is the tangent space in~$SO(3)$ at~$A\in SO(3)$;~$P_{T_A}$ denotes the orthogonal projection onto~$T_A$; the operators~$\nabla_A , \nabla_A \cdot$ are the gradient and divergence in~$SO(3)$, respectively. These operators are computed in Section~\ref{sec:differential_calculus} in the Euler axis-angle coordinates.

\subsection{The Individual Based Model}
\label{sec:derivation_IBM}

Consider~$N$ agents labelled by~$k=1, \hdots, N$ with positions~$\mathbf{X_k}(t) \in \mathbb{R}^3$ and associated matrices (body attitudes)~$A_k(t) \in SO(3)$. For each~$k$, the three unit vectors representing the frame correspond to the vectors of the matrix~$A_k(t)$ when written as a matrix in the canonical basis~$(\vezero, \mathbf{e_{2}}, \mathbf{e_{3}})$ of~$\mathbb{R}^3$. In particular, the direction of displacement of the agent is given by its first vector~$A_k(t)\vezero$.
\bigskip

	 \textit{Evolution of the positions.} Agents move in the direction of the first axis with constant speed~$v_0$
$$\frac{d\mathbf{X_k}(t)}{dt}=v_0 A_k(t) \vezero.$$

\bigskip
 \textit{Evolution of the body attitude matrix.} Agents try to coordinate their body attitude with those of their neighbours. So we are facing two different problems from a modelling viewpoint, namely to define the target body attitude, and to define the way agents relax their own attitude towards this ``average'' attitude.

As for the Vicsek model~\cite{degond2008continuum}, we consider a kernel of influence~$K=K(x)\geq 0$ and define the matrix 
\be \label{eq:Mk}
   \mathbb{M}_k(t) := \frac{1}{N}\sum_{i=1}^N K(|\mathbf{X}_i(t)-\mathbf{X_k}(t)|) A_i(t).
\ee
This matrix corresponds to the averaged body attitude of the agents inside the zone of influence corresponding to agent~$k$. Now~$\mathbb{M}_k(t) \notin SO(3)$, so we need to orthogonalize and remove the dilations, in order to construct a target attitude in~$SO(3)$. We will see that the polar decomposition of~$\mathbb{M}_k(t)$ is a good choice in the sense that it minimizes a weighted sum of the squared distances to the attitudes of the neighbours. We also refer to~\cite{moakher2002means} for some complements on averaging in~$SO(3)$.

\medskip

We give next the definition of polar decomposition:
\begin{lemma}[Polar decomposition of a square matrix, {\cite[Section 4.2.10]{golub2012matrix}}]~
\label{lem:polar_decomposition}

Given a matrix~$M\in \mathcal{M}$, if~$\det(M)\neq 0$ then there exists a unique orthogonal matrix~$A$ (given by~$A= M(\sqrt{M^TM})^{-1}$) and a unique symmetric positive definite matrix~$S$ such that~$M=AS$.
\end{lemma}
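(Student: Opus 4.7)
The plan is to reduce the problem to the existence and uniqueness of a positive definite square root of a symmetric positive definite matrix, and then simply recover $A$ by division.

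First I would observe that, since $\det(M)\neq 0$, the matrix $M^TM$ is symmetric positive definite: it is clearly symmetric, and for every $\vv\neq 0$ we have $\vv^T M^T M\vv = |M\vv|^2>0$ because $M$ is invertible. This is the key input.

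Next, I would establish that any symmetric positive definite matrix $T$ admits a unique symmetric positive definite square root. The standard route is the spectral theorem: write $T=QDQ^T$ with $Q$ orthogonal and $D$ diagonal with strictly positive entries, and set $\sqrt{T}:=QD^{1/2}Q^T$. For uniqueness, if $R$ is symmetric positive definite with $R^2=T$, diagonalise $R=Q'D'Q'^T$ with $D'$ positive diagonal; then $Q'D'^2Q'^T=T$, so $D'^2$ and $D$ are similar diagonal matrices with positive entries and must coincide up to permutation, which forces $R=\sqrt{T}$ as defined above (the freedom in the eigenbases is absorbed by the uniqueness of the spectral decomposition on each eigenspace). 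Applying this to $T=M^TM$ gives a unique $S:=\sqrt{M^TM}$, symmetric positive definite, and in particular invertible.

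With $S$ in hand, I would define $A:=MS^{-1}$ and verify orthogonality:
\[
A^TA=S^{-T}M^TM S^{-1}=S^{-1}S^2 S^{-1}=\Id,
\]
so $A$ is orthogonal and by construction $M=AS$, giving existence. For uniqueness, suppose $M=A'S'$ is another such decomposition with $A'$ orthogonal and $S'$ symmetric positive definite. Then
\[
M^TM=S'^T A'^T A' S'=S'^2,
\]
so $S'$ is a symmetric positive definite square root of $M^TM$, hence $S'=S$ by the uniqueness above, and then $A'=MS'^{-1}=MS^{-1}=A$. The formula $A=M(\sqrt{M^TM})^{-1}$ in the statement is exactly this construction.

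The only non-routine step is the uniqueness of the positive definite square root; everything else is a direct manipulation. I would present that uniqueness argument carefully (via simultaneous diagonalisation of $R$ and $T$ on common eigenspaces) since it is the sole place where one really uses the positive definiteness hypothesis rather than mere invertibility.
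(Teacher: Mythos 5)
Your proof is correct, but there is nothing in the paper to compare it against: Lemma~\ref{lem:polar_decomposition} is stated without proof and simply cited from Golub--Van Loan \cite[Section 4.2.10]{golub2012matrix}. What you have written is the standard self-contained argument: reduce to the existence and uniqueness of the symmetric positive definite square root of $M^TM$, set $S=\sqrt{M^TM}$ and $A=MS^{-1}$, and check $A^TA=\Id$; uniqueness of the factorization follows because any admissible $S'$ satisfies $S'^2=M^TM$. All the steps are sound, including the observation that positive definiteness of $M^TM$ is exactly where invertibility of $M$ enters. The one place to be careful, as you yourself flag, is the uniqueness of the square root: matching the eigenvalue lists of $D'^2$ and $D$ up to permutation is not quite enough on its own; the clean way to finish is to note that $R$ commutes with $T=R^2$, hence preserves each eigenspace of $T$, and on the $\lambda$-eigenspace of $T$ the restriction of $R$ is symmetric positive definite with square $\lambda\,\Id$, forcing $R=\sqrt{\lambda}\,\Id$ there. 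With that detail written out, your argument is a complete proof of the cited lemma.
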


\begin{proposition} Suppose that the matrix~$\mathbb{M}_k(t)$ has positive determinant. Then the following assertions are equivalent:
\label{prop:polar-decomp-as-minimization}
\begin{enumerate}
\item[(i)] The matrix~$A$ minimizes the quantity~$\frac{1}{N}\sum_{i=1}^N K(|\mathbf{X}_i(t)-\mathbf{X_k}(t)|) \|A_i(t)-A\|^2$ among the elements of~$SO(3)$.
\item[(ii)] The matrix~$A$ is the element of~$SO(3)$ which maximizes the quantity~$A\cdot \mathbb{M}_k(t)$.
\item[(iii)] The matrix~$A$ is the polar decomposition of~$\mathbb{M}_k(t)$.
\end{enumerate}
\end{proposition}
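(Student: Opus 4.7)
The plan is to prove the chain (i) $\Leftrightarrow$ (ii) $\Leftrightarrow$ (iii) using the inner product structure \eqref{eq:dot_product_SO3} together with the polar decomposition from Lemma~\ref{lem:polar_decomposition}.

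\textbf{Step 1: (i) $\Leftrightarrow$ (ii).} This is essentially the polarization identity. Using \eqref{eq:dot_product_SO3}, for any $A, A_i \in SO(3)$,
\[
\|A_i - A\|^2 = A_i\cdot A_i - 2\, A_i\cdot A + A\cdot A = 3 - 2\, A_i\cdot A,
\]
since $A\cdot A = \frac{1}{2}\tr(A^TA) = \frac{1}{2}\tr(\Id) = \tfrac{3}{2}$. Summing against the weights $\frac{1}{N}K(|\mathbf{X}_i - \mathbf{X}_k|)$ and using the definition \eqref{eq:Mk} of $\mathbb{M}_k$, one obtains
\[
\frac{1}{N}\sum_{i=1}^N K(|\mathbf{X}_i - \mathbf{X}_k|)\, \|A_i - A\|^2 = C - 2\, A\cdot \mathbb{M}_k,
\]
where $C$ depends only on the data. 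Minimizing the left-hand side over $SO(3)$ is therefore equivalent to maximizing $A\cdot \mathbb{M}_k$.

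\textbf{Step 2: (ii) $\Leftrightarrow$ (iii).} This is the substantive step. Write the polar decomposition $\mathbb{M}_k = A_* S$ with $S$ symmetric positive definite; since $\det(\mathbb{M}_k)>0$ and $\det(S)>0$, the orthogonal factor $A_*$ lies in $SO(3)$. For any $A \in SO(3)$ set $R := A_*^T A \in SO(3)$; then
\[
A\cdot \mathbb{M}_k = \tfrac{1}{2}\tr(A^T A_* S) = \tfrac{1}{2}\tr(R^T S).
\]
Diagonalize $S = U\,\mathrm{diag}(\sigma_1,\sigma_2,\sigma_3)\,U^T$ with $\sigma_i > 0$ and $U \in O(3)$, and set $Q := U^T R^T U$, which is orthogonal with $\det Q = \det R^T = 1$, so $Q \in SO(3)$. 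Then
\[
2\, A\cdot \mathbb{M}_k = \tr\bigl(Q\,\mathrm{diag}(\sigma_1,\sigma_2,\sigma_3)\bigr) = \sigma_1 Q_{11} + \sigma_2 Q_{22} + \sigma_3 Q_{33}.
\]
The diagonal entries of an orthogonal matrix satisfy $Q_{ii} \leq 1$; since $\sigma_i > 0$, this quantity is maximized precisely when $Q_{11}=Q_{22}=Q_{33}=1$, which (because each row of $Q$ is a unit vector) forces $Q = \Id$. Tracing back, $Q = \Id \Leftrightarrow R = \Id \Leftrightarrow A = A_*$, i.e., $A$ is the orthogonal factor of the polar decomposition.

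\textbf{Expected obstacle and remarks.} The only delicate point is the maximization step in Step~2: one must verify both that the maximum is attained (not merely a supremum) and that it is unique. Uniqueness uses strict positivity of the $\sigma_i$'s, which is precisely guaranteed by the assumption $\det \mathbb{M}_k > 0$ (equivalently, $\det S > 0$). Once this is handled, the rest consists of bookkeeping with the trace inner product. The argument simultaneously yields existence and uniqueness of a minimizer in (i), consistently with Lemma~\ref{lem:polar_decomposition}.
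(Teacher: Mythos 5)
Your proposal is correct and follows essentially the same route as the paper: the polarization identity $\|A_i-A\|^2=3-2\,A_i\cdot A$ for (i)$\Leftrightarrow$(ii), then diagonalizing the symmetric factor and reducing to maximizing $\sum_i\sigma_i Q_{ii}$ over $Q\in SO(3)$, with $Q_{ii}\le 1$ and equality forcing $Q=\Id$. Your conjugation $Q=U^TR^TU$ is just the paper's $\bar B=PB^TAP^T$ in different notation.
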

\begin{proof}
We get the equivalence between the first two assertions by expanding:
\begin{equation*}
\|A_i(t)-A\|^2=\frac12[\tr(A_i(t)^TA_i(t))+\tr(A^TA)]-2A\cdot A_i(t)=3-2A\cdot A_i(t),
\end{equation*}
since~$A$ and~$A_i(t)$ are both orthogonal matrices. So minimizing the weighted sum of the squares distances amounts to maximizing inner product of~$A$ and the weighted sum~$\mathbb{M}_k$ of the matrices~$A_i$ given by~\eqref{eq:Mk}.

Therefore if~$\det \mathbb{M}_k>0$, and~$A$ is the polar decomposition of~$\mathbb{M}_k$, we immediately get that~$\det A>0$, hence~$A\in SO(3)$. We know that~$S$ can be diagonalized in an orthogonal basis :~$S=P^TDP$ with~$P^TP=\Id$ and~$D$ is a diagonal matrix with positive diagonal elements~$\lambda_1,\lambda_2,\lambda_3$.
Now if~$B\in SO(3)$ maximizes~$\frac12\tr(B^T\mathbb{M}_k)$ among all matrices in~$SO(3)$, then it maximizes~$\tr(B^TAP^TDP)=\tr(PB^TAP^TD)$. So the matrix~$\bar{B}=PB^TAP^T$ maximizes~$\tr(\bar{B}D)=\lambda_1\bar{b}_{11}+\lambda_2\bar{b}_{22}+\lambda_3\bar{b}_{33}$ among the elements of~$SO(3)$ (the map~$B\mapsto PB^TAP^T$ is a one-to-one correspondence between~$SO(3)$ and itself). But since~$\bar{B}$ is an orthogonal matrix, all its column vectors are unit vectors, and so~$b_{ii}\leqslant 1$, with equality for~$i=1,2$ and~$3$ if and only if~$\bar{B}=\Id$, that is to say~$PB^TAP^T=\Id$, which is exactly~$B=A$.
\end{proof}

We denote by~$\PD(\mathbb{M}_k)\in O(3)$ the corresponding orthogonal matrix coming from the Polar Decomposition of~$\mathbb{M}_k$.
	 
We now have two choices for the evolution of~$A_k$. We can use the second point of Proposition~\ref{prop:polar-decomp-as-minimization} and follow the gradient of the function to maximize :
\begin{equation}
\frac{dA_k(t)}{dt}= \left.\nu \nabla_{A}(\mathbb{M}_k\cdot A)\right|_{A=A_k} = \nu P_{T_{A_k}}\mathbb{M}_k,\label{eq:nopolar}
\end{equation}
 (see~\eqref{eq:gradient_product} for the last computation,~$P_{T_{A_k}}$ is the projection on the tangent space, this way the solution of the equation stays in~$SO(3)$). 

Or we can directly relax to the polar decomposition~$\PD(\mathbb{M}_k)$, in the same manner:

$$ \frac{dA_k(t)}{dt}= \nu P_{T_{A_k}} \lp \PD(\mathbb{M}_k) \rp.$$

We can actually see that the trajectory of this last equation, when~$\PD(\mathbb{M}_k)$ belongs to~$SO(3)$ and does not depend on~$t$, is exactly following a geodesic (see Prop.~\ref{prop:relax_geodesic}). Therefore in this paper we will focus on this type of coordination. The positive coefficient~$\nu$ gives the intensity of coordination, in the following we will assume that it is a function of the distance between~$A_k$ and~$\PD(\mathbb{M}_k)$ (the angle of the rotation~$A_k^T\PD(\mathbb{M}_k)$), which is equivalent to say that~$\nu$ depends on~$A_k\cdot\PD(\mathbb{M}_k)$.

	\begin{remark} Some comments:
	\label{rem:IBM}
	\begin{enumerate}
		\item One could have used the Gram-Schmidt orthogonalization instead of the Polar Decomposition, but it depends on the order in which the vector basis is taken (for instance if we start with~$\vezero$, it would define the first vector as the average of all the directions of displacement, independently of how the other vectors of the body attitudes of the individuals are distributed). The Polar Decomposition gives a more canonical way to do this. 
		\item We expect that the orthogonal matrix coming from the Polar Decomposition of~$\mathbb{M}_k$ belongs in fact to~$SO(3)$. Firstly, we notice that~$O(3)$ is formed by two disconnected components:~$SO(3)$ and the other component formed by the matrices with determinant -1. We assume that the motion of the agents is smooth enough so that the average~$\mathbb{M}_k$ stays `close' to~$SO(3)$ and that, in particular,~$\det(\mathbb{M}_k)> 0$. 

A simple example is when we only average two different matrices~$A_1$ and~$A_2$ of~$SO(3)$. We then have~$\mathbb{M}=\frac12(A_1+A_2)$. If we write~$A_1A_2^T=\exp(\theta\asymn)$ thanks to Rodrigues' formula~\eqref{eq:exponential_form} and we define~$A=A_2\exp(\frac12\theta\asymn)$, we get that~$A_1=A\exp(\frac12\theta\asymn)$ and so~~$\mathbb{M}=A\frac12(\exp(\frac12\theta\asymn)+\exp(-\frac12\theta\asymn))=A(\cos\frac{\theta}2\Id+(1-\cos\frac{\theta}2)\nvec\otimes\nvec)$, thanks to Rodrigues' formula~\eqref{eq:Rodrigues_formula} and to~\eqref{eq-asym2-proj}. Since the matrix~$S=\cos\frac{\theta}2\Id+(1-\cos\frac{\theta}2)\nvec\otimes\nvec$ is a positive-definite symmetric matrix as soon as~$\theta\in[0,\pi)$, we have that~$\det(\mathbb{M})> 0$. The polar decomposition of~$\mathbb{M}$ is then~$A$, which is the midpoint of the geodesic joining~$A_1$ to~$A_2$ (which corresponds to the curve~$t\in[0,\theta]\mapsto A_1\exp(t\asymn)$).

As soon as we average more than two matrices, there exist cases for which~$\det(\mathbb{M})<0$: for instance if we take 
\[A_1=\begin{pmatrix}1&0&0\\0&-1&0\\0&0&-1\end{pmatrix},A_2=\begin{pmatrix}-1&0&0\\0&1&0\\0&0&-1\end{pmatrix},A_3=\begin{pmatrix}-1&0&0\\0&-1&0\\0&0&1\end{pmatrix},\]
we have~$\mathbb{M}=\frac13(A_1+A_2+A_3)=-\frac{1}3\Id$.
	\end{enumerate}
	\end{remark}

\medskip	
	\textit{Noise term.} Agents make errors when trying to coordinate their body attitude with that of their neighbours. This is represented in the equation of~$A_k$ by a noise term:~$2\sqrt{D} dW^k_t$ where~$D>0$ and~$W^{k}_t = \lp W_t^{k,i,j}\rp_{i,j=1,2,3}$ are independent Gaussian distributions (Brownian motion). 

\medskip

From all these considerations, we obtain the IBM \begin{eqnarray} \label{eq:IBM}
	d\mathbf{X_k}(t) &=& v_0 A_k(t) \vezero dt,\\
	dA_k(t) &=& P_{T_{A_k}} \circ \left[ \nu(\PD(\mathbb{M}_k)\cdot A_k) \PD(\mathbb{M}_k) dt + 2\sqrt{D} dW^k_t \right], \label{eq:IBM2}
\end{eqnarray}
where the Stochastic Differential Equation is in  Stratonovich sense (see \cite{gardiner}). The projection~$P_{T_{A_k}}$ and the fact that we consider the SDE in Stratonovich sense ensures that the solution~$A_k(t)$ stays in~$SO(3)$. The normalization constant~$2\sqrt{D}$ ensures that the diffusion coefficient is exactly~$D$ : the law~$p$ of the underlying process given by~$dA_k=2\sqrt{D} P_{T_{A_k}} \circ dW^k_t$ satisfies~$\partial_tp=D\Delta_Ap$ where~$\Delta_A=\nabla_A\cdot\nabla_A$ is the Laplace-Beltrami operator on~$SO(3)$. Notice the factor~$2\sqrt{D}$ instead of the usual~$\sqrt{2D}$ which is encountered when considering diffusion process on manifolds isometrically embedded in the euclidean space~$\mathbb{R}^n$, because we are here considering~$SO(3)$ embedded in~$\mathcal{M}$ (isomorphic to~$\mathbb{R}^9$), but with the metric~\eqref{eq:dot_product_SO3}, which corresponds to the canonical metric of~$\mathbb{R}^9$ divided by a factor~$2$. We refer to the book~\cite{hsu2002stochastic} for more insight on such stochastic processes on manifolds.

\subsection{Mean-field limit}
\label{sec:mean_field_limit}
 \medskip

We assume that the kernel of influence~$K$ is Lipschitz, bounded, with the following properties:
\be \label{eq:properties_for_K}
K=K(|x|)\geq 0, \quad \int_{R^3}K(|x|)\,dx =1, \quad \int_{\R^3}|x|^2 K(|x|) \, dx<\infty.
\ee
In \cite{bolley2012mean} the mean-field limit is proven for the Vicsek model. Using the techniques there it is straightforward to see that for
$$\mathbb{M}(x,t):= \frac{1}{N}\sum_{i=1}^N K(\mathbf{X}_i- x) A_i$$
the law~$f^N=f^N(x,A,t)$ of the empirical measure associated to the Stratonovich Stochastic Differential Equation (SDE):
\begin{eqnarray} \label{eq:stratonovichSDE}
	d\mathbf{X_k}(t) &=&v_0 A_k(t) \vezero dt,\\
	dA_k(t) &=& P_{T_{A_k}} \circ \left[ \nu(\mathbb{M}(\mathbf{X_k},t)\cdot A_k) \mathbb{M}(\mathbf{X_k},t) dt + 2\sqrt{D} dW^k_t \right], \label{eq:stratonovichSDE_eqA}
\end{eqnarray}
converges weakly~$f^N \to f$ as~$N\to \infty$. The limit satisfies the kinetic equation:
\[\partial_t f + v_0 A \vezero \cdot \nabla_x f = D \Delta_A f - \nabla_A \cdot \lp F[f]f \rp ,\]
with 
\beqar
  F[f] &:=& \nu(\mathbb{M}_f\cdot A) P_{T_{A}} (\mathbb{M}_f),\\
  \mathbb{M}_f &=& \int_{\R^3 \times SO(3)}K(x-x')f(x',A',t) A' \, dA'dx'.
\eeqar

\bigskip

The equations we are dealing with \eqref{eq:IBM}-\eqref{eq:IBM2}, since we consider the Polar Decomposition of the averaged body attitude~$\mathbb{M}_k$, are slightly different from \eqref{eq:stratonovichSDE}-\eqref{eq:stratonovichSDE_eqA}, which would correspond to the modelling point of view of Eq. \eqref{eq:nopolar}. As a consequence, the corresponding coefficient of the SDE is not Lipschitz any more and the known results for existence of solutions and mean-field limit (see \cite[Theorem 1.4]{sznitman}) fail. More precisely, the problem arises when dealing with matrices with determinant zero; the orthogonal matrix of the Polar Decomposition is not uniquely defined for matrices with determinant zero and, otherwise,~$\PD(\mathbb{M}_k) = \mathbb{M}_k(\sqrt{\mathbb{M}_k^T \mathbb{M}_k})^{-1}$ (Lemma~\ref{lem:polar_decomposition}).

 A complete proof of the previous results in the case of Eq. \eqref{eq:IBM}-\eqref{eq:IBM2}  would involve proving that solutions to the equations stay away from the singular case~$\det(\mathbb{M}_k)=0$. This is an assumption that we make on the Individual Based Model (see the second point of Remark~\ref{rem:IBM}). This kind of analysis has been done for the Vicsek model  (explained in the introduction) in \cite{Figalli} where the authors prove global well-posedness for the kinetic equation in the spatially homogeneous case.
 
  In our case one expects the following to hold:
\begin{proposition}[Formal]
\label{prop:mean_field_limit}
When the number of agents in \eqref{eq:IBM}-\eqref{eq:IBM2}~$N\to \infty$,  its corresponding empirical distribution 
$$f^N(x,A,t) = \frac{1}{N}\sum_{k=1}^N \delta_{\lp X_k(t), A_k(t) \rp}$$
 converges weakly to~$f=f(x,A,t)$,~$(x,A,t)\in \mathbb{R}^3 \times SO(3)\times [0,\infty)$ satisfying
\begin{eqnarray} \label{eq:MFlimit_no_scaled}
	&&\partial_t f +  v_0 A \vezero\cdot \nabla_x f =  D\Delta_A f - \nabla_A \cdot \lp f F[f] \rp,\\
	&&F[f] := \nu P_{T_{A}} (\bar{\mathbb{M}}[f]), \nonumber\\
	&&\bar{\mathbb{M}}[f]= \PD(\mathbb{M}[f]), \quad \mathbb{M}[f](x, t) := \int_{\mathbb{R}^3 \times SO(3)} K\lp x-x' \rp f(x', A',t) A' dA' dx', \nonumber
\end{eqnarray}
 where~$\PD(\mathbb{M}[f])$ corresponds to the orthogonal matrix obtained on the Polar Decomposition of~$\mathbb{M}[f]$ (see Lemma~\ref{lem:polar_decomposition}); and~$\nu=\nu(\bar{\mathbb{M}}[f] \cdot A)$.
 
\end{proposition}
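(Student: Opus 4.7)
The plan is to adapt the standard coupling-plus-It\^o approach for mean-field limits of McKean--Vlasov type SDEs on manifolds, following the strategy of Sznitman~\cite{sznitman} and its adaptation to the Vicsek dynamics in~\cite{bolley2012mean}, while paying careful attention to the only non-Lipschitz ingredient in the drift, namely the polar decomposition~$\PD(\mathbb{M}_k)$.

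First, I would derive the weak formulation satisfied by $f^N$. For a test function~$\varphi\in C^2(\R^3\times SO(3))$, apply It\^o's formula in the Stratonovich sense to~$\varphi(\mathbf{X_k}(t),A_k(t))$, using the fact that Stratonovich SDEs with projection~$P_{T_{A_k}}$ obey the usual chain rule on the embedded manifold~$SO(3)$ and that~$\Delta_A$ is the correct generator associated with the normalisation~$2\sqrt{D}$ discussed after~\eqref{eq:IBM2}. Summing over~$k$ and dividing by~$N$ yields
\begin{equation*}
\langle f^N(t),\varphi\rangle = \langle f^N(0),\varphi\rangle + \int_0^t \langle f^N(s),\, v_0 A\vezero\cdot\nabla_x\varphi + F[f^N]\cdot\nabla_A\varphi + D\Delta_A\varphi\rangle\,ds + M^N_\varphi(t),
\end{equation*}
where~$M^N_\varphi$ is a martingale arising from the Brownian increments and~$F[f^N]$ is the nonlocal drift defined in~\eqref{eq:MFlimit_no_scaled} but evaluated on~$f^N$ itself.

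Second, I would show that~$M^N_\varphi$ has quadratic variation of order~$1/N$ (as the sum of~$N$ independent contributions each of size~$1/N$), and establish tightness of the laws of~$(f^N)$ on~$C([0,T];\mathcal{P}(\R^3\times SO(3)))$ endowed with the weak topology, using standard Aldous-type estimates based on compactness of~$SO(3)$ and uniform boundedness of the drift on the support of~$f^N$. By standard arguments (passing to a subsequence via Prokhorov and identifying limit points), any cluster point~$f$ satisfies the deterministic weak formulation obtained by dropping~$M^N_\varphi$ and replacing~$f^N$ by~$f$ in the drift; integration by parts on~$SO(3)$ using the calculus recalled in Section~\ref{sec:differential_calculus} and Appendix~\ref{ap:SO(3)} then gives the strong form~\eqref{eq:MFlimit_no_scaled}.

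The hard part is propagating closeness of~$f^N$ to~$f$ through the nonlinear coefficient~$A\mapsto \nu(\PD(\mathbb{M})\cdot A)\PD(\mathbb{M})$, because~$\PD$ is only Lipschitz on the set of matrices with~$\det(M)$ bounded away from~$0$ and fails to be single-valued on~$\{\det M=0\}$, so the global Lipschitz hypothesis of~\cite[Theorem 1.4]{sznitman} breaks down. To handle this I would, as the authors suggest after Remark~\ref{rem:IBM}, impose the a priori assumption
\begin{equation*}
\inf_{N,\,x,\,t\in[0,T]}\det\bigl(\mathbb{M}[f^N](x,t)\bigr)\geq\delta>0,
\end{equation*}
under which the full drift becomes Lipschitz with a constant depending on~$\delta$, the Lipschitz constant of~$K$, and the regularity of~$\nu$. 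Sznitman's coupling then produces a system of independent nonlinear processes~$(\bar{\mathbf{X}}_k,\bar{A}_k)$ driven by the same Brownian motions~$W^k_t$, whose one-particle law is exactly~$f$, and for which Gronwall yields~$\mathbb{E}\sup_{t\leq T}(|\mathbf{X_k}-\bar{\mathbf{X}}_k|^2+\|A_k-\bar{A}_k\|^2)=O(1/N)$, implying the claimed weak convergence of~$f^N$ to~$f$. A fully rigorous justification of the a priori bound on~$\det(\mathbb{M}[f^N])$ (by propagation of regularity of~$f^N$, in the spirit of the analysis of the homogeneous Vicsek equation in~\cite{Figalli}) is exactly what prevents the statement from being made non-formal, and this is the main obstacle one should expect to meet.
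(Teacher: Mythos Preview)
The paper does not give a proof of this proposition: it is explicitly labelled ``Formal'' and is preceded only by a discussion pointing to~\cite{bolley2012mean} for the Lipschitz variant~\eqref{eq:stratonovichSDE}--\eqref{eq:stratonovichSDE_eqA}, noting that the polar decomposition destroys the Lipschitz property near~$\det(\mathbb{M}_k)=0$, and stating that a rigorous proof would require showing that solutions stay away from this singular set (with a nod to~\cite{Figalli} for the analogous issue in the Vicsek setting). Your sketch is therefore strictly more than the paper provides, and it is aligned with the paper's own discussion: you invoke the same references, you isolate the same obstruction (non-Lipschitzness of~$\PD$ at~$\det M=0$), and you make explicit the a~priori assumption~$\det(\mathbb{M}[f^N])\geq\delta>0$ under which the Sznitman coupling goes through.

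One minor caveat: your tightness/martingale-problem route and the Sznitman coupling route are two separate strategies (compactness versus quantitative propagation of chaos), and you mix them in the same argument. Either suffices on its own once the Lipschitz bound is secured, so this is a matter of presentation rather than a gap. The substantive obstacle you identify at the end --- justifying the lower bound on~$\det(\mathbb{M}[f^N])$ uniformly in~$N$ --- is exactly what the paper flags as the reason the result remains formal.
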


\section{Hydrodynamic limit}
\label{sec:macro}

The goal of this section will be to derive the macroscopic equations (Theorem~\ref{th:macro_limit}).
From now on, we consider the kinetic equation given in \eqref{eq:MFlimit_no_scaled}.

	\subsection{Scaling and expansion}
        \label{sec:scaling}
	
	We express the kinetic Eq. \eqref{eq:MFlimit_no_scaled} in dimensionless variables. Let~$\nu_0$ be the typical interaction frequency scale so that~$\nu(\bar{A} \cdot A) =\nu_0 \nu'(\bar{A} \cdot A)$ with~$\nu'(\bar{A} \cdot A) = \mathcal{O}(1)$. We introduce also the typical time and space scales~$t_0$,~$x_0$ such that~$t_0=\nu_0^{-1}$ and~$x_0=v_0t_0$; the associated variables will be~$t'=t/t_0$ and~$x'=x/x_0$. Consider the dimensionless diffusion coefficient~$d=D/\nu_0$ and the rescaled influence kernel~$K'(|x'|)=K(x_0|x'|)$. 
	Skipping the primes we get 
\begin{eqnarray*} 
	&&\partial_t f +  A \vezero\cdot \nabla_x f = d\Delta_A f - \nabla_A \cdot \lp f F[f] \rp, \\
	&&F[f] := \nu(\bar{\mathbb{M}}[f]\cdot A)\, P_{T_{A}} (\bar{\mathbb{M}}[f]),\\
	&&\bar{\mathbb{M}}[f]= \PD(\mathbb{M}[f]), \quad \mathbb{M}[f](x, t) := \int_{\mathbb{R}^3 \times SO(3)} K\lp x-x' \rp f(x', A',t) A' dA' dx'.
\end{eqnarray*}
 Here~$d$,~$\nu$ and~$K$ are assumed to be of order 1.
\begin{remark}
\label{rem:ratio_after_rescaling}
Notice in particular that before and after scaling the ratio
$$ \frac{\nu}{D}=\frac{\nu'}{d}$$
remains the same.
\end{remark}

	\medskip
	
Now, to carry out the macroscopic limit we rescale the space and time variables by setting~$\tilde t = \eps t$,~$\tilde x = \eps x$ to obtain (skipping the tildes):
\begin{align*}
	&\partial_t f^\eps +  A \vezero\cdot \nabla_x f^\eps = \frac{1}{\eps}\lp d\Delta_A f^\eps - \nabla_A \cdot \lp f^\eps F^\eps[f^\eps] \rp \rp,\\
	&F^\eps[f] := \nu (\bar{\mathbb{M}}^\eps[f]\cdot A)\,P_{T_{A}} (\bar{\mathbb{M}}^\eps[f^\eps]),\\
	&\bar{\mathbb{M}}^\eps[f]= \PD(\mathbb{M}^\eps[f]), \quad \mathbb{M}^\eps[f](x, t) := \int_{\mathbb{R}^3 \times SO(3)} K\lp \frac{x-x'}{\eps} \rp f(x', A',t) A' dA' dx'.
\end{align*}

\begin{lemma} \label{lem:expansion_operator}
Assuming that~$f$ is sufficiently smooth (with bounded derivatives), we have the expansion
$$\bar{\mathbb{M}}^\eps[f](x, t) = \Lambda[f](x,t) + \mathcal{O}(\eps^2),$$
 where
$$\Lambda[f](x,t) = \PD ( \lambda[f] ) \quad \mbox{and} \quad \lambda[f] =\int_{SO(3)} A' f(x,A',t) dA'.$$
\end{lemma}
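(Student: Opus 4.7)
\textbf{Proof plan for Lemma~\ref{lem:expansion_operator}.}
The strategy is to exploit two ingredients: the radial symmetry of $K$ (which kills odd moments, in particular the first-order term in a spatial Taylor expansion), and the smoothness of the polar decomposition map on non-degenerate matrices (which allows us to propagate the expansion from $\mathbb{M}^\eps[f]$ to $\bar{\mathbb{M}}^\eps[f]=\PD(\mathbb{M}^\eps[f])$).

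First I would perform the change of variables $y=(x-x')/\eps$ in the definition of $\mathbb{M}^\eps[f]$, so that
$$\mathbb{M}^\eps[f](x,t) = \int_{\R^3\times SO(3)} K(|y|)\, f(x-\eps y,A',t)\, A'\, dA'\, dy,$$
up to the Jacobian factor absorbed in the normalization of $K_\eps$. Since $f$ is smooth with bounded derivatives (by assumption), Taylor expansion of $x'\mapsto f(x',A',t)$ around $x'=x$ yields
$$f(x-\eps y,A',t) = f(x,A',t) - \eps\, y\cdot\nabla_x f(x,A',t) + \tfrac{\eps^2}{2}(y\cdot\nabla_x)^2 f(x,A',t) + \mathcal{O}(\eps^3),$$
with a remainder uniformly controlled by the sup-norm of third derivatives of $f$.

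Second, I would integrate term by term against $K(|y|)\,A'\, dA'\, dy$. The crucial observation is that $y\mapsto K(|y|)\,y_i$ is odd, so $\int_{\R^3} K(|y|)\,y_i\, dy = 0$ for each $i$ and the $\mathcal{O}(\eps)$ contribution vanishes. The $\mathcal{O}(\eps^2)$ contribution is finite by the bounded second-moment hypothesis $\int|y|^2 K(|y|)\,dy<\infty$ in~\eqref{eq:properties_for_K}. Combined with $\int K(|y|)\,dy=1$, this gives
$$\mathbb{M}^\eps[f](x,t) = \lambda[f](x,t) + \mathcal{O}(\eps^2),$$
with the error estimated uniformly on the region where $f$ has bounded derivatives.

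Finally, I would transport this expansion through $\PD$. Under the standing assumption $\det \lambda[f](x,t)>0$ (a hypothesis in the same spirit as the second item of Remark~\ref{rem:IBM}), the formula $\PD(M) = M(\sqrt{M^T M})^{-1}$ from Lemma~\ref{lem:polar_decomposition} exhibits $\PD$ as a composition of smooth (indeed real-analytic) maps in an open neighborhood of $\lambda[f]$: the matrix square root and matrix inversion are smooth on the cone of positive-definite symmetric matrices. A first-order Taylor expansion of $\PD$ around $\lambda[f]$ then gives
$$\bar{\mathbb{M}}^\eps[f] = \PD\bigl(\lambda[f]+\mathcal{O}(\eps^2)\bigr) = \PD(\lambda[f]) + \mathcal{O}(\eps^2) = \Lambda[f]+\mathcal{O}(\eps^2),$$
which is the claimed expansion.

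The main obstacle in a fully rigorous treatment is precisely the non-degeneracy of $\lambda[f]$: the map $\PD$ is only locally Lipschitz away from $\{\det M = 0\}$, so in order to turn the formal expansion into a quantitative estimate one needs uniform lower bounds on $\det\lambda[f]$ in $(x,t)$. As explained in Remark~\ref{rem:IBM}, this is imposed as a standing hypothesis rather than derived from the PDE, which is consistent with the formal nature of the hydrodynamic derivation.
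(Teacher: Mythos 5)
Your argument is correct and follows essentially the same route as the paper: change of variables, Taylor expansion of $f$ in $\eps$, vanishing of the first-order term by isotropy of $K$, and control of the second-order term by the bounded second moment in~\eqref{eq:properties_for_K}. You additionally spell out the passage through $\PD$ (local smoothness of the polar decomposition away from $\det M=0$, under the standing non-degeneracy assumption), a step the paper leaves implicit, which is a worthwhile clarification.
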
 
\begin{proof}
This is obtained by performing the change of variable~$x'=x+\eps\xi$ in the definition of~$\mathbb{M}^\eps[f]$ and using a Taylor expansion of~$f(x+\eps\xi,A',t)$ with respect to~$\eps$. We use that~$K$ is isotropic and with bounded second moment by assumption (see Eq.~\eqref{eq:properties_for_K}).
\end{proof} 
 
From the lemma, we rewrite
\begin{eqnarray} \label{eq:MFlimit}
	&&\partial_t f^\eps +  A \vezero\cdot \nabla_x f^\eps = \frac{1}{\eps} Q(f^\eps)+ \mathcal{O}(\eps), \\
	&&F_0[f] := \nu(\Lambda[f]\cdot A)\, P_{T_{A}} (\Lambda[f]), \nn\\
	&&\Lambda[f]= \PD(\lambda[f]), \quad \lambda[f](x, t) := \int_{SO(3)} f(x, A',t) A' dA', \nn \\
	&&Q(f) := d\Delta_A f - \nabla_A \cdot \lp f F_0[f] \rp.\nn
\end{eqnarray}
	
$\Lambda[f]$,~$Q(f)$ and~$F_0[f]$ are non-linear operators of~$f$, which only acts on the attitude variable~$A$.

\subsection{Preliminaries: differential calculus in~$SO(3)$}
\label{sec:differential_calculus}

In the sequel we will use the volume form, the gradient and divergence in~$SO(3)$ expressed in the Euler axis-angle coordinates~$(\theta,\nvec)$ (explained at the beginning of Section~\ref{sec:modelling}). In this section we give their explicit forms; the proofs are in appendix~\ref{ap:SO(3)}.

\begin{proposition}[The gradient in~$SO(3)$]
\label{prop:gradient_SO3}
Let~$f:SO(3) \to \R$ be a smooth scalar function. If~$\bar f (\theta, \nvec) = f(A(\theta,\nvec))$ is the expression of~$f$ in the Euler axis-angle coordinates by Rodrigues' formula~\eqref{eq:Rodrigues_formula}, we have then
\be \label{eq:gradient_SO3}
\nabla_A f = \partial_\theta \bar f\, A\asymn + \frac{1}{2\sin(\theta/2)} A\lp \cos(\theta/2)\left[\nabla_{\nvec}\bar f \right]_{\times} + \sin(\theta/2) \left[\nvec \times \nabla_{\nvec}\bar f \right]_\times \rp,
\ee
where~$A=A(\theta, \nvec)$ and~$\nabla_\nvec$ is the gradient on the sphere~$S^2$.
\end{proposition}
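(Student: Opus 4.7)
The plan is to use the Riemannian characterization of the gradient: $\nabla_A f$ is the unique element of $T_A SO(3)$ satisfying $\nabla_A f\cdot \dot A(0)=\frac{d}{dt}f(A(t))\big|_{t=0}$ for every smooth curve $A(t)$ through $A$. Since any tangent vector at $A\in SO(3)$ takes the form $A[\uu]_\times$ with $\uu\in\R^3$, and $(A[\uu]_\times)\cdot(A[\vv]_\times)=\uu\cdot\vv$ by~\eqref{eq:properties_asym_matrix}, writing $\nabla_A f=A[\mathbf{g}]_\times$ reduces the task to identifying a vector $\mathbf{g}\in\R^3$ from the relation $\mathbf{g}\cdot\uu=\frac{d}{dt}f(A(t))\big|_{t=0}$ whenever $\dot A(0)=A[\uu]_\times$.

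First I would treat the $\theta$-direction, keeping $\nvec$ fixed. From~\eqref{eq:exponential_form}, $A=\exp(\theta[\nvec]_\times)$, and since $[\nvec]_\times$ commutes with $A$, this gives $\partial_\theta A=A[\nvec]_\times$. Hence this curve corresponds to $\uu=\nvec$, and the chain rule yields $\mathbf{g}\cdot\nvec=\partial_\theta\bar f$, producing the first term of~\eqref{eq:gradient_SO3}.

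Next I would treat variations of $\nvec\in S^2$ at fixed $\theta$. Given $\vv\in T_\nvec S^2$, I set $\ww=\nvec\times\vv$ (which automatically satisfies $\ww\perp\nvec$ and $\ww\times\nvec=\vv$) and consider the curve $\nvec(t)=\exp(t[\ww]_\times)\nvec$, so that $\dot\nvec(0)=\vv$. The conjugation identity $[R\nvec]_\times=R[\nvec]_\times R^T$ for $R\in SO(3)$ then gives $A(\theta,\nvec(t))=R(t)AR(t)^T$ with $R(t)=\exp(t[\ww]_\times)$. Differentiating at $t=0$ produces $\dot A(0)=[\ww]_\times A-A[\ww]_\times=A[(A^T-\Id)\ww]_\times$, so that $\uu=(A^T-\Id)\ww$. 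Using Rodrigues' formula~\eqref{eq:Rodrigues_formula} for $A^T$, together with $\ww\perp\nvec$, the identity~\eqref{eq-asym2-proj}, and the double cross product relation $\nvec\times\ww=\nvec\times(\nvec\times\vv)=-\vv$, this reduces $\uu$ to a combination of $\vv$ and $\nvec\times\vv$ whose coefficients are $\sin\theta$ and $-(1-\cos\theta)$.

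The final step is purely algebraic: the relation $\mathbf{g}\cdot\uu=\nabla_\nvec\bar f\cdot\vv$ must hold for every $\vv\perp\nvec$. Rewriting $\mathbf{g}\cdot(\nvec\times\vv)=(\mathbf{g}\times\nvec)\cdot\vv$ and expanding the unknown perpendicular component of $\mathbf{g}$ in the basis $\{\nabla_\nvec\bar f,\nvec\times\nabla_\nvec\bar f\}$ of $\nvec^\perp$, one is reduced to a $2\times 2$ linear system whose determinant is $\sin^2\theta+(1-\cos\theta)^2=2(1-\cos\theta)=4\sin^2(\theta/2)$. Applying the half-angle identities $\sin\theta=2\sin(\theta/2)\cos(\theta/2)$ and $1-\cos\theta=2\sin^2(\theta/2)$ then collapses the solution to the factor $\frac{1}{2\sin(\theta/2)}\bigl(\cos(\theta/2)\nabla_\nvec\bar f+\sin(\theta/2)\,\nvec\times\nabla_\nvec\bar f\bigr)$ that appears in~\eqref{eq:gradient_SO3}. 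I expect the principal obstacle to be the disciplined bookkeeping of signs at each step—namely the choice of left- versus right-invariant trivialization of $T_A SO(3)$, the orientation convention relating $\ww$, $\vv$ and $\nvec$, and the repeated action of $[\nvec]_\times$ on $\ww$—since each introduces a sign that must be tracked consistently to land exactly on the formula stated.
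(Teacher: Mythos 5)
Your setup (the Riemannian characterization of the gradient, the left trivialization $\nabla_A f=A[\mathbf{g}]_\times$, and the reduction to identifying $\mathbf{g}$ from directional derivatives) is sound, and the $\theta$-part is fine. Your treatment of the $\nvec$-variation via the conjugation $A(\theta,R\nvec)=R\,A(\theta,\nvec)\,R^T$ is a genuinely different route from the paper's, which differentiates Rodrigues' formula directly and simplifies the resulting products of antisymmetric matrices; your route avoids most of that algebra. Your computation of the tangent vector is internally correct: with $\ww=\nvec\times\vv$ one gets
\[
\dot A(0)=A[\uu]_\times,\qquad \uu=\sin\theta\,\vv-(1-\cos\theta)\,\nvec\times\vv .
\]

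The gap is in the final step. Solving $\mathbf{g}_\perp\cdot\uu=\nabla_\nvec\bar f\cdot\vv$ for all $\vv\perp\nvec$ with \emph{this} $\uu$ does not produce the factor you assert. Writing $\mathbf{g}_\perp=\alpha\,\nabla_\nvec\bar f+\beta\,\nvec\times\nabla_\nvec\bar f$ and testing against $\vv=\nabla_\nvec\bar f$ and $\vv=\nvec\times\nabla_\nvec\bar f$ gives the system $\alpha\sin\theta-\beta(1-\cos\theta)=1$, $\alpha(1-\cos\theta)+\beta\sin\theta=0$, whose solution is $\alpha=\cos(\theta/2)/(2\sin(\theta/2))$ and $\beta=-1/2$, i.e.
\[
\mathbf{g}_\perp=\frac{1}{2\sin(\theta/2)}\lp\cos(\theta/2)\,\nabla_\nvec\bar f-\sin(\theta/2)\,\nvec\times\nabla_\nvec\bar f\rp,
\]
with a minus sign where your last sentence (and the statement) has a plus. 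So your concluding assertion does not follow from your own intermediate computation: as written, the argument proves \eqref{eq:gradient_SO3} with the opposite sign on the second term. Note that the paper's proof reaches the plus sign because its Eq.~\eqref{eq:decomposition_tangent_space} gives the tangent vector as $\uu=\sin\theta\,\vv+(1-\cos\theta)\,\nvec\times\vv$, opposite to yours; the two cannot both be right, and the whole issue reduces to the identity $[\delta_\nvec]_\times\asymn-\asymn[\delta_\nvec]_\times=[\delta_\nvec\times\nvec]_\times=-[\nvec\times\delta_\nvec]_\times$. A concrete test (take $\nvec=\mathbf{e_3}$, $\vv=\mathbf{e_1}$, $\theta=\pi/2$, and $f(A)=[\mathbf{e_1}]_\times\cdot A$, whose gradient is known exactly from \eqref{eq:gradient_product}) supports your sign for $\uu$ and hence the minus sign in the gradient. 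You therefore have to resolve the discrepancy explicitly — either the stated formula carries a sign typo that your write-up has silently absorbed in the last step, or your $\uu$ is wrong — rather than asserting that the linear system "collapses to" the printed expression.
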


The volume form in~$SO(3)$ is left invariant (it is the Haar measure), due to the fact that the inner product in~$\mathcal{M}$ is also left invariant:~$A\cdot B=\frac12\tr(A^TB)=\Lambda A\cdot\Lambda B$ when~$\Lambda\in SO(3)$. We give its expression in the Euler axis-angle coordinates~$(\theta,\mathbf{n})$ :

\begin{lemma}[Decomposition of the volume form in~$SO(3)$]\label{lem-volume-form}
If~$\bar f (\theta, \nvec) = f(A(\theta,\nvec))$ is the expression of~$f$ in the Euler axis-angle coordinates by Rodrigues' formula~\eqref{eq:Rodrigues_formula}, we have % (see \cite{porter})
$$\int_{SO(3)} f(A)\, dA = \int^\pi_0 W(\theta) \int_{S^2} \bar f(\theta, \nvec) \, d\nvec d\theta,~$$
where~$dn$ is the Lebesgue measure on the sphere~$S^2$, normalized to be a probability measure, and
\be \label{eq:volume_element}
W(\theta)= \frac{2}{\pi} \sin^2(\theta/2).
\ee
\end{lemma}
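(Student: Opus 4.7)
The approach is to parametrize $SO(3)$ (minus a set of measure zero) by the axis-angle map $(\theta,\nvec)\in(0,\pi)\times S^2\mapsto A=\exp(\theta[\nvec]_\times)$ and to compute the pullback of the Riemannian metric in these coordinates explicitly, then read off the normalized Haar density as the square root of the Gram determinant. Because the inner product~\eqref{eq:dot_product_SO3} is left-invariant, working with $A^{T}dA\in\mathcal{A}$ reduces everything to computations in $\R^3$ via the isometry $[\cdot]_\times$ guaranteed by~\eqref{eq:properties_asym_matrix}.

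The key calculation uses the standard Lie-group formula for the differential of the exponential: $A^T\,dA=\phi(\mathrm{ad}_{\theta[\nvec]_\times})(d\theta\,[\nvec]_\times+\theta[d\nvec]_\times)$ with $\phi(z)=(1-e^{-z})/z$ and $d\nvec\in T_\nvec S^2$. Transporting to $\R^3$, $\mathrm{ad}_X$ corresponds to $\mathbf{v}\mapsto\theta\,\nvec\times\mathbf{v}$, which kills $\nvec$ and squares to $-\theta^2\Id$ on $\nvec^\perp$. Separating even and odd powers in the series yields
\[
\phi(\theta\,\nvec\times)\mathbf{w}=\tfrac{\sin\theta}{\theta}\,\mathbf{w}-\tfrac{1-\cos\theta}{\theta}\,\nvec\times\mathbf{w}\qquad(\mathbf{w}\in\nvec^\perp),
\]
so $A^{T}dA$ corresponds in $\R^3$ to $\mathbf{z}=d\theta\,\nvec+\sin\theta\,d\nvec-(1-\cos\theta)\,\nvec\times d\nvec$. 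The three summands lie in pairwise orthogonal subspaces, and using $\sin^2\theta+(1-\cos\theta)^{2}=4\sin^{2}(\theta/2)$, the pulled-back line element reads
\[
ds^{2}_{SO(3)}=d\theta^{2}+4\sin^{2}(\theta/2)\,ds^{2}_{S^{2}}.
\]

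From here the Riemannian volume form is $4\sin^{2}(\theta/2)\,d\theta\,dS_{S^2}(\nvec)$ with $dS_{S^2}$ the standard area measure on $S^2$; integration gives total volume $4\pi\cdot 4\int_{0}^{\pi}\sin^{2}(\theta/2)\,d\theta=8\pi^{2}$. Normalizing (the Haar measure on a compact group is the unique left-invariant probability and coincides with the normalized Riemannian volume) and passing to the probability measure $d\nvec=dS_{S^{2}}/(4\pi)$ on $S^{2}$ delivers the density $\frac{2}{\pi}\sin^{2}(\theta/2)$, which is~\eqref{eq:volume_element}.

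The main obstacle is the sign-and-index bookkeeping in the power-series identification of $\phi(\theta\,\nvec\times)$ on $\nvec^\perp$; everything else is routine Gram-determinant accounting plus the verification that the axis-angle map is a diffeomorphism away from the negligible sets $\theta=0$ and $\theta=\pi$. An alternative that avoids Lie-theoretic formulas uses the 2-to-1 covering $S^{3}\to SO(3)$ by unit quaternions, under which $q=(\cos(\theta/2),\sin(\theta/2)\nvec)\mapsto A(\theta,\nvec)$: the uniform probability measure on $S^{3}$ in hyperspherical coordinates carries the factor $\sin^{2}(\theta/2)$ automatically, and the push-forward to $SO(3)$ yields the same $W$ after normalization.
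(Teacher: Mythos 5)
Your proof is correct and takes essentially the same route as the paper's: both pull back the metric \eqref{eq:dot_product_SO3} through the axis-angle parametrization, arrive at the same line element $d\theta^2+4\sin^2(\theta/2)\,ds^2_{S^2}$, and normalize to get $W(\theta)=\frac2\pi\sin^2(\theta/2)$. The only (cosmetic) differences are that you compute the differential via the Lie-group $d\exp$ series whereas the paper differentiates Rodrigues' formula directly (its Eq.~\eqref{eq:decomposition_tangent_space}) and evaluates the Gram matrix in explicit spherical coordinates on $S^2$; your closing quaternion double-cover remark is a genuinely distinct alternative, but it is not the argument you actually carried out.
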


We have seen in Prop.~\ref{prop:gradient_SO3} that the gradient is decomposed in the basis~
$$\{A\asymn,A\left[\nabla_{\nvec}\bar f \right]_{\times}, A\left[\nvec \times \nabla_{\nvec}\bar f \right]_\times\},$$
 which are three orthogonal vectors of~$T_A$ (by Prop.~\ref{prop:tangentspaceSOn}).

More generally if~$B\in T_A$ for~$A=A(\theta,\nvec)\in SO(3)$, then~$B$ is of the form~$AH$ with~$H$ antisymmetric, so~$H=[\uu]_\times$ for some~$\uu\in\R^3$.  Decomposing~$\uu$ on~$\nvec$ and its orthogonal, we get that there exists~$\vv\perp \nvec$ and~$b\in\mathbb{R}$ such that~$B=bA\asymn+A \left[ \vv(\theta,\nvec)\right]_\times$. Expressing~$B$ in this form, we compute the divergence in~$SO(3)$.

\begin{proposition}[The divergence in~$SO(3)$]
\label{prop:divergence_SO3}
Consider~$B:SO_3\to T(SO(3))$ a smooth function (so that~$B(A)\in T_A$ for all~$A\in SO(3)$), and suppose that
$$B(A(\theta, \nvec)) = b(\theta,\nvec) A\asymn+ A \left[ \vv(\theta,\nvec)\right]_\times$$ for some smooth function~$b$ and smooth vector function~$\vv$ such that~$\vv(\theta,\nvec)\perp \nvec$.
Then
\beqar
\nabla_A \cdot  B &=& \frac{1}{\sin^2(\theta/2)} \partial_\theta \lp\sin^2(\theta/2) b (\theta, \nvec) \rp \\
&&\,+ \frac{1}{2\sin(\theta/2)} \nabla_\nvec \cdot \Big( \vv(\theta, \nvec) \cos(\theta/2) + \lp \vv(\theta, \nvec) \times \nvec \rp \sin(\theta/2) \Big).
\eeqar

\end{proposition}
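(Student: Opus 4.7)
The plan is to prove the formula by duality with the gradient: for every smooth test function $f:SO(3)\to\R$, one has $\int_{SO(3)} f\, (\nabla_A\cdot B)\, dA = -\int_{SO(3)} \nabla_A f \cdot B\, dA$, and we identify the divergence by reading off the integrand on the right after two integration-by-parts steps performed in the Euler axis-angle coordinates.

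First, I would compute the pointwise pairing $\nabla_A f\cdot B$. Writing $\nabla_A f$ in the form \eqref{eq:gradient_SO3} and $B = b\,A\asymn + A[\vv]_\times$, I apply the left-invariant inner product identity $A[\uu]_\times\cdot A[\ww]_\times = \uu\cdot\ww$ coming from \eqref{eq:properties_asym_matrix}. The constraints $\vv\perp\nvec$ and $\nabla_\nvec\bar f\perp\nvec$ (the latter because $\nabla_\nvec$ is the spherical gradient) kill all the cross terms, and using the cyclic property of the scalar triple product on the remaining mixed term, I obtain
\[
\nabla_A f \cdot B \;=\; b\,\partial_\theta\bar f \;+\; \frac{\cos(\theta/2)}{2\sin(\theta/2)}\,\vv\cdot\nabla_\nvec\bar f \;+\; \tfrac{1}{2}(\vv\times\nvec)\cdot\nabla_\nvec\bar f.
\]
Next, I integrate against the volume form $W(\theta) = \frac{2}{\pi}\sin^2(\theta/2)$ from Lemma~\ref{lem-volume-form} and integrate by parts in each variable separately. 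In the $\theta$ direction, $\int_0^\pi \sin^2(\theta/2)\, b\, \partial_\theta\bar f\, d\theta$ transforms (modulo boundary terms) into $-\int_0^\pi \partial_\theta[\sin^2(\theta/2) b]\,\bar f\,d\theta$, which, after factoring $W(\theta)$ back out, reproduces the first term of the target formula. On $S^2$ I apply the divergence theorem to the vector field $\frac{\cos(\theta/2)}{2\sin(\theta/2)}\vv + \frac{1}{2}(\vv\times\nvec)$, both summands being genuine tangent fields to $S^2$ since they are perpendicular to $\nvec$; factoring $\frac{1}{2\sin(\theta/2)}$ outside the spherical divergence yields exactly the second term of the formula.

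The main obstacle is the treatment of the boundary terms of the $\theta$-integration by parts. At $\theta=0$ the factor $\sin^2(\theta/2)$ vanishes, so this endpoint contributes nothing; at $\theta=\pi$ it does not, but the parametrization is degenerate there (the map $\nvec\mapsto A(\pi,\nvec)$ is two-to-one with $A(\pi,\nvec)=A(\pi,-\nvec)$), so the apparent boundary at $\theta=\pi$ is in fact an internal locus of the closed manifold $SO(3)$. The cleanest way to handle this rigorously is to first establish the identity for vector fields $B$ and test functions whose coefficients are compactly supported in $(0,\pi)\times S^2$ (removing both degenerate loci), at which point both integration-by-parts steps produce no boundary contribution, and then extend to arbitrary smooth $B$ and $f$ by a density argument using the smoothness of the Haar measure representation. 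Once the identity holds for all smooth test $f$, the divergence formula follows by the fundamental lemma of calculus of variations.
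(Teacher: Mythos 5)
Your proposal is correct and follows essentially the same route as the paper's proof: duality against the gradient formula of Prop.~\ref{prop:gradient_SO3}, reduction of the pairing $\nabla_A f\cdot B$ via~\eqref{eq:properties_asym_matrix} and the orthogonality constraints, then integration by parts in $\theta$ against the weight $\sin^2(\theta/2)$ and the divergence theorem on $S^2$. Your explicit treatment of the boundary/degeneracy at $\theta=0$ and $\theta=\pi$ is a detail the paper leaves implicit, and it is handled correctly.
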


\medskip

Now we can compute the Laplacian in~$SO(3)$:

\begin{corollary}\label{cor:laplacian_SO3}
The Laplacian in~$SO(3)$ can be expressed as
$$\Delta_A f = \frac{1}{\sin^2(\theta/2)} \partial_\theta\lp\sin^2(\theta/2) \partial_\theta \tilde f \rp + \frac{1}{4 \sin^2(\theta/2)} \Delta_\nvec \tilde f,$$
where~$\Delta_\nvec$ is the Laplacian on the sphere~$S^2$ and~$f(A)=f(A(\theta,\nvec))=\tilde{f}(\theta,\nvec)$.
\end{corollary}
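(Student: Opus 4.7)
The plan is simply to combine Prop.~\ref{prop:gradient_SO3} and Prop.~\ref{prop:divergence_SO3} via $\Delta_A \tilde f=\nabla_A\cdot(\nabla_A \tilde f)$. Reading off Prop.~\ref{prop:gradient_SO3}, the gradient is already written in the form required by Prop.~\ref{prop:divergence_SO3}, with
\[
b(\theta,\nvec)=\partial_\theta \tilde f,\qquad \vv(\theta,\nvec)=\frac{1}{2\sin(\theta/2)}\bigl(\cos(\theta/2)\,\nabla_\nvec \tilde f+\sin(\theta/2)\,\nvec\times\nabla_\nvec \tilde f\bigr).
\]
Since $\nabla_\nvec \tilde f$ is tangent to $S^2$ at $\nvec$, both $\nabla_\nvec \tilde f$ and $\nvec\times\nabla_\nvec \tilde f$ are perpendicular to $\nvec$, so the hypothesis $\vv\perp\nvec$ of Prop.~\ref{prop:divergence_SO3} holds and the setup is legitimate.

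The $\theta$-contribution of the divergence formula is immediate: $\frac{1}{\sin^2(\theta/2)}\partial_\theta(\sin^2(\theta/2)b)$ reproduces the first term of the claimed expression verbatim. The work is entirely in simplifying the vector field fed to $\nabla_\nvec\cdot$, namely $\vv\cos(\theta/2)+(\vv\times\nvec)\sin(\theta/2)$. For this I would use the identity
\[
(\nvec\times\nabla_\nvec \tilde f)\times\nvec=(\nvec\cdot\nvec)\,\nabla_\nvec \tilde f-(\nvec\cdot\nabla_\nvec \tilde f)\,\nvec=\nabla_\nvec \tilde f,
\]
valid because $|\nvec|=1$ and $\nabla_\nvec \tilde f\perp\nvec$, together with $\nabla_\nvec \tilde f\times\nvec=-\nvec\times\nabla_\nvec \tilde f$. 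A short computation shows that the $\nvec\times\nabla_\nvec \tilde f$ terms cancel and the $\nabla_\nvec \tilde f$ terms combine via $\cos^2(\theta/2)+\sin^2(\theta/2)=1$, yielding
\[
\vv\cos(\theta/2)+(\vv\times\nvec)\sin(\theta/2)=\frac{1}{2\sin(\theta/2)}\nabla_\nvec \tilde f.
\]

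Plugging this back into the angular part of the divergence formula and pulling the $\theta$-dependent scalar $1/(2\sin(\theta/2))$ outside $\nabla_\nvec\cdot$ (since it is independent of $\nvec$) gives $\frac{1}{4\sin^2(\theta/2)}\nabla_\nvec\cdot\nabla_\nvec \tilde f=\frac{1}{4\sin^2(\theta/2)}\Delta_\nvec \tilde f$, matching the second term in the statement. There is no real obstacle here beyond the algebraic cancellations just outlined; the only point requiring a moment's care is verifying $\vv\perp\nvec$ so that Prop.~\ref{prop:divergence_SO3} applies, and correctly interpreting $\nabla_\nvec$ and $\Delta_\nvec$ as the intrinsic gradient and Laplace--Beltrami operator on $S^2$ (so that $\nabla_\nvec \tilde f$ is automatically tangential).
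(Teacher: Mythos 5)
Your proposal is correct and follows exactly the paper's route: read off $b$ and $\vv$ from Prop.~\ref{prop:gradient_SO3}, apply Prop.~\ref{prop:divergence_SO3}, and use $(\nvec\times\nabla_\nvec\tilde f)\times\nvec=\nabla_\nvec\tilde f$ to collapse the angular part to $\frac{1}{2\sin(\theta/2)}\nabla_\nvec\tilde f$. You simply spell out the algebraic cancellation that the paper leaves implicit; nothing is missing.
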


\begin{proof}
 Let~$B(\theta,\nvec) :=\nabla_A f(A(\theta,\nvec))\in T_A$. Then, using the notations of Prop.~\ref{prop:divergence_SO3} and the result of Prop.~\ref{prop:gradient_SO3}, we have that
\beqar
b &=& \partial_\theta \tilde f,\\
\vv &=& \frac{1}{2 \sin(\theta/2)} \big( \cos(\theta/2) \nabla_\nvec \tilde f + \sin(\theta/2) (\nvec \times \nabla_\nvec \tilde f) \big),
\eeqar
from here we just need to apply Prop.~\ref{prop:divergence_SO3} knowing that~$(\nvec \times \nabla_\nvec \tilde f) \times \nvec = \nabla_\nvec \tilde f$ since~$\nabla_\nvec \tilde f$ is orthogonal to~$\nvec$. 
\end{proof}

\subsection{Equilibrium solutions and Fokker-Planck formulation}
\label{sec:FP}

We define a generalization of the von-Mises distributions on~$SO(3)$ by
\be	\label{eq:Von_Mises_equilibria}
	M_{\Lambda}(A)= \frac{1}{Z}\exp \lp \frac{\sigma(A \cdot \Lambda)}{d} \rp, \quad \int_{SO(3)}M_{\Lambda}(A)\,dA =1,\quad \Lambda\in SO(3), 
\ee
	where~$Z=Z(\nu,d)$ is a normalizing constant and~$\sigma=\sigma(\mu)$ is such that~$(d/d\mu)\sigma=\nu(\mu)$. Observe that~$Z<\infty$ is independent of~$\Lambda$ since the volume form on~$SO(3)$ is left-invariant. Therefore we have
\[Z=\int_{SO(3)}\exp(d^{-1}\sigma(A\cdot\Lambda))dA=\int_{SO(3)}\exp(d^{-1}\sigma(\Lambda^TA\cdot\Id))dA=\int_{SO(3)}\exp(d^{-1}\sigma(A\cdot\Id))dA,\]
and we also obtain that~$M_\Lambda(A)$ is actually~$M_\Id(\Lambda^TA)$.

	\medskip
We are now ready to describe the properties of~$Q$ in terms of these generalized von-Mises distributions.

	\begin{lemma}[Properties of~$Q$] \label{lem:propertiesQ}
The following holds:
	\begin{itemize}
		\item[i)] The operator~$Q$ can be written as
		$$Q(f) = d \nabla_A \cdot \left[ M_{\Lambda[f]} \nabla_A \lp \frac{f}{M_{\Lambda[f]}} \rp\right]$$
		and we have
\be		\label{eq:H}
		H(f) := \int_{SO(3)} Q(f) \frac{f}{M_{\Lambda[f]}} dA = -d \int_{SO(3)} M_{\Lambda[f]} \left|\nabla_A \lp \frac{f}{M_{\Lambda[f]}}\rp \right|^2 dA \leq 0.
		\ee
		\item[ii)] The equilibria, i.e., the functions~$f=f(x,A,t)$ such that~$Q(f)=0$ form a~$4$-dimensional manifold~$\mathcal{E}$ given by
		$$\mathcal{E}= \{ \rho M_{\Lambda}(A) \quad |\quad \rho >0, \quad \Lambda \in SO(3)\},$$
		where~$\rho$ is the total mass while~$\Lambda$ is mean body attitude of~$\rho M_{\Lambda}(A)$, i.e.,
		\begin{eqnarray*}
			\rho &=&\int_{SO(3)} \rho M_{\Lambda}(A) dA , \\
			\Lambda &=&\Lambda[\rho M_\Lambda].
		\end{eqnarray*}		 
	Furthermore,~$H(f)=0$ iff~$f=\rho M_\Lambda$ for arbitrary~$\rho \in \mathbb{R}_+$ and~$\Lambda \in SO(3)$.
	\end{itemize}
	\end{lemma}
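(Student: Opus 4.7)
The strategy is to recast $Q$ as a weighted Laplacian (Fokker--Planck form) around the local equilibrium $M_{\Lambda[f]}$; this simultaneously produces the dissipation identity in (i) and, via a standard entropy argument, the characterization of $\ker Q$ in (ii). The key computation is $\nabla_A M_{\Lambda[f]}$: since $\Lambda[f]$ is integrated out in the $A$ variable it is constant with respect to $A$, so using the identity $\nabla_A(A\cdot \Lambda) = P_{T_A}\Lambda$ (formula \eqref{eq:gradient_product}) and the chain rule together with $\sigma' = \nu$ one obtains
\[
\nabla_A M_{\Lambda[f]} \;=\; \frac{\nu(A\cdot\Lambda[f])}{d}\, M_{\Lambda[f]}\, P_{T_A}(\Lambda[f]) \;=\; \frac{M_{\Lambda[f]}}{d}\,F_0[f].
\]
Expanding $d\nabla_A\cdot[M_{\Lambda[f]}\,\nabla_A(f/M_{\Lambda[f]})]$ by the product rule then collapses to $d\Delta_A f - \nabla_A\cdot(fF_0[f]) = Q(f)$. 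Multiplying this identity by $f/M_{\Lambda[f]}$ and integrating by parts on the closed (compact, boundaryless) manifold $SO(3)$ kills the boundary term and produces the sign-definite expression~\eqref{eq:H}.

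With the Fokker--Planck form in hand, the ``only if'' direction of (ii) is immediate: $Q(f)=0$ forces $H(f)=0$, which by the squared-gradient form imposes $\nabla_A(f/M_{\Lambda[f]})=0$, i.e.\ $f/M_{\Lambda[f]}$ is constant in $A$. Since $\int_{SO(3)} M_{\Lambda[f]}\,dA = 1$, this constant is exactly $\rho := \int_{SO(3)} f\,dA$, so $f=\rho M_{\Lambda[f]}$ with $\Lambda := \Lambda[f] \in SO(3)$; the same argument yields $H(f)=0 \iff f=\rho M_\Lambda$. The ``if'' direction---that every $\rho M_\Lambda$ belongs to $\ker Q$---then reduces to the consistency relation $\Lambda[\rho M_\Lambda]=\Lambda$ for arbitrary $\rho>0$ and $\Lambda\in SO(3)$, since once this holds we have $M_{\Lambda[\rho M_\Lambda]} = M_\Lambda$ and trivially $\nabla_A(\rho M_\Lambda/M_\Lambda) = \nabla_A\rho = 0$.

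The consistency relation is the main technical obstacle. The plan is to first exploit the left-invariance of the Haar measure via the change of variables $A=\Lambda B$ (noting $A\cdot\Lambda = B\cdot\Id$) to reduce the claim to $\lambda[M_\Lambda] = \Lambda\,\lambda[M_\Id]$. Then, using Lemma~\ref{lem-volume-form}, I would integrate $\lambda[M_\Id]$ in Euler axis-angle coordinates by fixing $\theta$ and averaging over $\nvec\in S^2$ first: Rodrigues' formula \eqref{eq:Rodrigues_formula} together with \eqref{eq-asym2-proj} gives $A=\cos\theta\,\Id + \sin\theta\,\asymn + (1-\cos\theta)\,\nvec\otimes\nvec$, the antisymmetric term vanishes under $\nvec\mapsto-\nvec$, and $\int_{S^2}\nvec\otimes\nvec\,d\nvec = \frac{1}{3}\Id$ collapses the remainder to $\frac{1}{3}(1+2\cos\theta)\Id = \frac{1}{3}\tr(A)\Id$. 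Thus $\lambda[M_\Id] = c\,\Id$ with $c = \frac{1}{3}\int_{SO(3)}\tr(A)\,M_\Id(A)\,dA$, and $\lambda[\rho M_\Lambda]=\rho c\,\Lambda$ is already in polar form provided $c>0$. The delicate point---deferred to a separate consistency lemma---is to verify this positivity, which is plausible since $\nu\geq 0$ makes $\sigma$ increasing and therefore concentrates $M_\Id$ near $\Id$ where $\tr(A)>0$; the resulting scalar $c$ is precisely the macroscopic constant $c_1$ advertised in Section~\ref{sec:discussion_macro}.
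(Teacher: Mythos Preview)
Your proposal is correct and follows essentially the same route as the paper: the Fokker--Planck rewriting via $\nabla_A M_{\Lambda[f]} = d^{-1}M_{\Lambda[f]}F_0[f]$ (equivalently, the paper's $\nabla_A\ln M_{\Lambda[f]} = d^{-1}F_0[f]$), integration by parts for the dissipation identity, and the reduction of the ``if'' direction to the consistency relation $\Lambda[\rho M_\Lambda]=\Lambda$ proved by left-invariance plus averaging Rodrigues' formula over $\nvec\in S^2$. The paper isolates this last step as a separate lemma (the consistency Lemma~\ref{lem:consistency_relation}, yielding $\lambda[M_{\Lambda_0}]=c_1\Lambda_0$ with $c_1\in(0,1)$) and gives a short argument for $c_1>0$ based on the monotonicity of $\sigma$; your sketch of that computation and your identification of the constant with $c_1$ match it exactly.
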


	To prove  Lemma~\ref{lem:propertiesQ} we require the following one, which is of independent interest and for which we introduce the following notation: for any scalar function~$g:(0,\pi)\to \R$ and a given integrable scalar function~$h:(0,\pi)\to \R$ which remains positive (or negative) on~$(0,\pi)$, we define
	\be \label{eq:integral_notation}
	\la g(\theta)\ra_{h(\theta)}:= \int^\pi_0 g(\theta)\frac
	{h(\theta)}{\int^\pi_0 h(\theta')\, d\theta'}\, d\theta .
	\ee

\begin{lemma}[Consistency relation for the `flux'] \label{lem:consistency_relation}
	$$\lambda[M_{\Lambda_0}]=c_1\Lambda_0$$
	where~$c_1\in (0,1)$ is equal to
    \be \label{eq:c1}	
	c_1=\tfrac23\langle\tfrac12+\cos\theta\rangle_{m(\theta)\sin^2(\theta/2)} \, 
	\ee
	for 
\begin{equation}\label{eq-def-m}
m(\theta)= \exp( d^{-1}\sigma(\tfrac{1}{2}+\cos\theta)) .
\end{equation}
\end{lemma}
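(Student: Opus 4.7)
The plan is to reduce everything to the case $\Lambda_0 = \Id$ via the left-invariance of the Haar measure on $SO(3)$, and then to do the angular/spherical integral explicitly using Rodrigues' formula.

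First I would use the left-invariance: if I perform the change of variables $A \mapsto \Lambda_0 A$ in the defining integral $\lambda[M_{\Lambda_0}] = \int_{SO(3)} M_{\Lambda_0}(A)\,A\,dA$, then since $\Lambda_0 A \cdot \Lambda_0 = A \cdot \Id$ (as $\Lambda_0 \in SO(3)$ preserves the inner product~\eqref{eq:dot_product_SO3}), I obtain $M_{\Lambda_0}(\Lambda_0 A) = M_{\Id}(A)$, and hence $\lambda[M_{\Lambda_0}] = \Lambda_0 \,\lambda[M_{\Id}]$. So it suffices to show $\lambda[M_{\Id}] = c_1 \Id$.

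Next I would pass to the Euler axis–angle coordinates. By~\eqref{eq:trace_theta}, $A\cdot \Id = \frac12\tr(A) = \tfrac12 + \cos\theta$, so $M_{\Id}(A(\theta,\nvec)) = Z^{-1} m(\theta)$ with $m$ as in~\eqref{eq-def-m}. Using Lemma~\ref{lem-volume-form} the integral splits as
\[
\lambda[M_{\Id}] = \frac{1}{Z}\int_0^\pi W(\theta)\, m(\theta) \Big(\int_{S^2} A(\theta,\nvec)\, d\nvec\Big) d\theta.
\]
For the inner integral I would use Rodrigues' formula in the form $A(\theta,\nvec) = \cos\theta\, \Id + \sin\theta\, [\nvec]_\times + (1-\cos\theta)\, \nvec\otimes\nvec$ (which follows from \eqref{eq:Rodrigues_formula} and \eqref{eq-asym2-proj}). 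The uniform measure on $S^2$ is invariant under $\nvec\mapsto -\nvec$, killing the $[\nvec]_\times$ contribution; and the standard identity $\int_{S^2}\nvec\otimes\nvec\, d\nvec = \tfrac13 \Id$ gives
\[
\int_{S^2} A(\theta,\nvec)\, d\nvec = \Big(\cos\theta + \tfrac{1-\cos\theta}{3}\Big)\Id = \tfrac23\Big(\tfrac12+\cos\theta\Big)\Id.
\]

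Plugging this back in and recognizing that $Z = \int_0^\pi W(\theta) m(\theta)\,d\theta$, the factors $W(\theta) = \frac{2}{\pi}\sin^2(\theta/2)$ cancel in numerator and denominator, yielding exactly $\lambda[M_{\Id}] = c_1 \Id$ with $c_1$ given by~\eqref{eq:c1}. For the bounds on $c_1$, the upper bound $c_1 < 1$ is immediate from $\tfrac12+\cos\theta \leqslant \tfrac32$ with strict inequality almost everywhere in the weight $\sin^2(\theta/2)m(\theta)d\theta$. The lower bound $c_1>0$ is the only mildly subtle point: since $\sigma' = \nu>0$, the function $m(\theta)$ is strictly decreasing in $\theta$, so $m(\theta) > m(2\pi/3)$ for $\theta<2\pi/3$ and $m(\theta) < m(2\pi/3)$ for $\theta>2\pi/3$; pairing this with the sign change of $\tfrac12+\cos\theta$ at $\theta=2\pi/3$ gives $(\tfrac12+\cos\theta)(m(\theta)-m(2\pi/3))\geqslant 0$ a.e., with strict inequality on a set of positive measure, hence $\int_0^\pi(\tfrac12+\cos\theta)m(\theta)\sin^2(\theta/2)d\theta > m(2\pi/3)\int_0^\pi(\tfrac12+\cos\theta)\sin^2(\theta/2)d\theta$. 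A direct computation (or using that $\lambda[M_{\Id}]=0$ in the uniform case $m\equiv\text{const}$) shows the right-hand integral vanishes, which gives strict positivity.
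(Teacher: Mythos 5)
Your proposal is correct and follows essentially the same route as the paper: reduction to $\Lambda_0=\Id$ by left-invariance, Rodrigues' formula plus $\int_{S^2}\nvec\otimes\nvec\,d\nvec=\tfrac13\Id$ for the angular integral, and a sign argument pivoting on $\theta=2\pi/3$ (where $\tfrac12+\cos\theta$ vanishes) together with $\int_0^\pi(\tfrac12+\cos\theta)\sin^2(\theta/2)\,d\theta=0$ for the positivity of $c_1$. Your phrasing of the last step as a Chebyshev-type correlation inequality, and the observation that the baseline integral vanishes because the uniform measure has $\lambda=0$, are only cosmetic variants of the paper's computation.
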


\begin{proof}
Using the fact that the measure on~$SO(3)$ is left invariant, we obtain
\begin{eqnarray*}
	\lambda[M_{\Lambda_0}]&=&\frac{1}{Z}\int_{SO(3)} A\exp(d^{-1}\sigma( (A\cdot \Lambda_0))) dA\\
	&=& \frac{\Lambda_0}{Z}\int_{SO(3)}  \Lambda_0^T A  \exp(d^{-1}\sigma(\tfrac12\tr(\Lambda_0^T A ))) dA\\
	&=& \frac{\Lambda_0}{Z}\int_{SO(3)}  B \exp(d^{-1}\sigma( \tfrac12\tr(B ))) dB.\\
%	&=&c_1 \Lambda_0\,.
\end{eqnarray*}

We now write~$B=\Id+\sin\theta\,[\nvec]_\times+(1-\cos\theta)[\nvec]_\times^2$ thanks to Rodrigues' formula~\eqref{eq:Rodrigues_formula}. Therefore, using Lemma~\ref{lem-volume-form}, we get
\begin{align*}
\lambda[M_{\Lambda_0}]&=\Lambda_0\frac{\int_{SO(3)}  B \exp(d^{-1}\sigma( \frac12\tr(B ))) dB}{\int_{SO(3)}  \exp(d^{-1}\sigma( \frac12\tr(B ))) dB}\\
&=\Lambda_0\frac{\int_0^\pi\sin^2(\theta/2) \exp(d^{-1}\sigma(\frac12+\cos\theta)) \big( \int_{S^2}(\Id+\sin\theta\,[\nvec]_\times+(1-\cos\theta)[\nvec]_\times^2) d\nvec \big) d\theta}{\int_0^\pi\sin^2(\theta/2) \exp(d^{-1}\sigma(\frac12+\cos\theta))d\theta}.
\end{align*}
Next, we see that since the function~$\nvec\mapsto[\nvec]_\times$ is odd, we have~$\int_{S^2}[\nvec]_\times d\nvec=0$. We also have (see~\eqref{eq-asym2-proj}) that~$[\nvec]_\times^2=\nvec\otimes\nvec-\Id$. Since we know that~$\int_{S^2}\nvec\otimes\nvec d\nvec=\frac13\Id$ (by invariance by rotation), it is easy to see that the integral in~$S^2$ has to be proportional to~$\Id$, the coefficient is given by computing the trace), we get that
\begin{align*}
\lambda[M_{\Lambda_0}]&=\Lambda_0\frac{\int_{SO(3)}  B \exp(d^{-1}\sigma( \frac12\tr(B ))) dB}{\int_{SO(3)} \exp(d^{-1}\sigma( \frac12\tr(B ))) dB}\\
&=\Lambda_0\frac{\int_0^\pi\sin^2(\theta/2) \exp(d^{-1}\sigma(\frac12+\cos\theta))(\Id+(1-\cos\theta)(\frac13-1)\Id) d\theta}{\int_0^\pi\sin^2(\theta/2) \exp(d^{-1}\sigma(\frac12+\cos\theta))d\theta}\\
&=\frac{\int_0^\pi\frac23(\frac12+\cos\theta)\sin^2(\theta/2) \exp(d^{-1}\sigma(\frac12+\cos\theta)) d\theta}{\int_0^\pi\sin^2(\theta/2) \exp(d^{-1}\sigma(\frac12+\cos\theta))d\theta}\Lambda_0=c_1 \Lambda_0,
\end{align*}
which gives the formula~\eqref{eq:c1} for~$c_1$. 

It remains to prove that~$c_1\in(0,1)$. We have that~$c_1$ is the average of~$\frac23(\frac12+\cos\theta)$ for the probability measure on~$(0,\pi)$ proportional to~$\sin^2(\theta/2) \exp(d^{-1}\sigma(\frac12+\cos\theta))$. Since we have~$\frac23(\frac12+\cos\theta)\leq 1$ with equality only for~$\theta=0$, we immediately get that~$c_1<1$. To prove the positivity, we remark that the function in the exponent~$\theta\mapsto d^{-1}\sigma(\frac12+\cos\theta)$ is strictly decreasing for~$\theta\in(0,\pi)$ (since~$\nu>0$ is the derivative of~$\sigma$), so we obtain that~$\sigma(\frac{1}{2}+\cos\theta)>\sigma(\frac{1}{2}+ \cos\frac{2\pi}{3})=\sigma(0)$ for~$\theta\in(0,\frac{2\pi}3)$. Therefore, for~$\theta\in(0,\frac{2\pi}3)$,
\[(\tfrac12+\cos\theta)\exp(d^{-1}\sigma(\tfrac12+\cos\theta))>(\tfrac12+\cos\theta)\exp(d^{-1}\sigma(0)),\]
since~$\frac12+\cos\theta>0$. When~$\theta\in(\frac{2\pi}3,\pi)$, we have exactly the same inequality above since we have~$\frac12+\cos\theta<0$.
Therefore we get
\[c_1>\frac{\int_0^\pi\frac23(\frac12+\cos\theta)\sin^2(\theta/2) \exp(d^{-1}\sigma(0)) d\theta}{\int_0^\pi\sin^2(\theta/2) \exp(d^{-1}\sigma(\frac12+\cos\theta))d\theta}=0,\]
since~$\int_0^\pi(\frac12+\cos\theta)\sin^2(\theta/2)d\theta=\int_0^\pi(\frac12+\cos\theta)(\frac12-\frac12\cos\theta)d\theta=\frac{\pi}4-\frac12\int_0^\pi\cos^2\theta d\theta=0$.
\end{proof}

\begin{proof}[Proof of Lemma~\ref{lem:propertiesQ}]
We follow the structure of the analogous proof in \cite{degond2008continuum}:
\begin{itemize}
	\item[i)] To prove the first identity we have that (see expression \eqref{eq:gradient_product})
	\beqar
	 \nabla_A \lp \ln M_{\Lambda[f]}\rp &=& d^{-1}\nabla_A \lp \sigma( A \cdot \Lambda[f])\rp\\
	 &=& d^{-1} \nu(A\cdot \Lambda[f])\,P_{T_A}\lp \Lambda[f]\rp\\
	 &=& d^{-1}F_0[f]
	\eeqar
	and so
	\beqar
	d \nabla_A \cdot \left[ M_{\Lambda[f]}\nabla_A \lp\frac{f}{M_{\Lambda[f]}} \rp\right] &=& d \nabla_A \cdot \left[\nabla_A f - f \nabla_A \lp\ln(M_{\Lambda[f]}) \rp \right]\\
	&=& d \nabla_A f -\nabla_A \cdot \lp fF_0[f] \rp.
	\eeqar
	Inequality \eqref{eq:H} follows from this last expression  and the Stokes theorem in~$SO(3)$.
	\item[ii)]  From the inequality \eqref{eq:H} we have that if~$Q(f)=0$, then~$\frac{f}{M_{\Lambda[f]}}$ is a constant that we denote by~$\rho$ (which is positive since~$f$ and~$M_{\Lambda[f]}$ are positive). Conversely, if~$f =\rho M_\Lambda$ then
	$$ \lambda[\rho M_\Lambda] = \int_{SO(3)} \rho M_\Lambda(A) A\, dA= \rho c_1 \Lambda$$
	by Lemma~\ref{lem:consistency_relation}. Now, by uniqueness of the Polar Decomposition and since~$\rho c_1 \Id$ is a symmetric positive-definite matrix, we have that~$\Lambda[\rho M_\Lambda]=\Lambda$.
\end{itemize}
\end{proof}	

Let us describe the behavior of these equilibrium distributions for small and large noise intensities. We have that for any function~$g$, the average~$\langle g(\tfrac12+\cos\theta)\rangle_{m(\theta)\sin^2(\theta/2)}$ is the average of~$g(A\cdot\Lambda)$ with respect to the probability measure~$M_\Lambda$ (by left-invariance, this is independent of~$\Lambda$).

One can actually check that the probability measure~$M_\Lambda$ on~$SO(3)$
converges in distribution to the uniform measure when~$d \rightarrow \infty$ (by Taylor expansion) and it converges to a Dirac delta at matrix~$\Lambda$ when~$d \rightarrow 0$ (this can be seen for~$M_\Id$ thanks to the decomposition of the volume form and the Laplace method, since the maximum of~$\sigma(\frac12+\cos\theta)$ is reached only at~$\theta=0$ which corresponds to the identity matrix, and we then get the result for any~$\Lambda$ since~$M_\Lambda(A)=M_\Id(\Lambda^TA)$). So for small diffusion, at equilibrium, agents tend to adopt the same body attitude close to~$\Lambda$.

With these asymptotic considerations, we have in particular the behaviour of~$c_1$ :
$$c_1\underset{d \rightarrow\infty}{\longrightarrow} 0$$
and
$$c_1 \underset{d\rightarrow 0}{\longrightarrow} 1.$$

	\subsection{Generalized Collision Invariants}
	\label{sec:GCI}

To obtain the macroscopic equation, we start by looking for the conserved quantities of the kinetic equation: we want to find the functions~$\psi=\psi(A)$ such that
$$\int_{SO(3)}Q(f) \psi \, dA=0 \qquad \mbox{for all } f.$$
By Lemma~\ref{lem:propertiesQ}, this can be rewritten as 
$$
0=-\int_{SO(3)}M_\Lambda[f] \nabla_A \lp\frac{f}{M_\Lambda[f]}	\rp \cdot \nabla_A \psi \, dA.
$$
 This happens if~$\nabla_A \psi \in T_{A}^\perp$ which holds true only if~$\nabla_A \psi =0$, implying that~$\psi$ is constant.

Consequently, our model has only one conserved quantity: the total mass. However the equilibria is~$4$-dimensional (by Lemma~\ref{lem:propertiesQ}). To obtain the macroscopic equations for~$\Lambda$, a priori we would need 3 more conserved quantities. This problem is sorted out by using Generalized Collision Invariants (GCI) a concept first introduced in \cite{degond2008continuum}.

\subsubsection{Definition and existence of GCI}

Define the operator
$$\mathcal{Q}(f, \Lambda_0) := \nabla_A \cdot \lp M_{\Lambda_0} \nabla_A \lp \frac{f}{M_{\Lambda_0}}\rp \rp,$$
notice in particular that
$$Q(f) = \mathcal{Q}(f, \Lambda[f]).$$
Using this operator we define:
\begin{definition}[Generalised Collision Invariant]\label{def-GCI}
For a given~$\Lambda_0 \in SO(3)$ we say that a real-valued function~$\psi: SO(3) \rightarrow \mathbb{R}$ is a Generalized Collision Invariant associated to~$\Lambda_0$, or for short~$\psi \in GCI(\Lambda_0)$, if
$$\int_{SO(3)} \mathcal{Q}(f, \Lambda_0) \psi\, dA =0 \quad \mbox{for all } f \mbox{ s.t } P_{T_{\Lambda_0}}(\lambda[f])=0.$$
\end{definition}
In particular, the result that we will use is :
\begin{equation}
\label{eq-property-GCI}
\psi\in GCI(\Lambda[f])\implies \int_{SO(3)} Q(f) \psi\, dA =0.
\end{equation}
Indeed, since~$\Lambda[f]$ is the polar decomposition of~$\lambda[f]$, we have~$\lambda[f]=\Lambda[f]S$, with~$S$ a symmetric matrix. Therefore (see Proposition~\ref{prop:tangentspaceSOn}), we get that~$\lambda[f]$ belongs to the orthogonal of~$T_{\Lambda[f]}$, so the definition~\ref{def-GCI} and the fact that~$Q(f) = \mathcal{Q}(f, \Lambda[f])$ gives us the property~\eqref{eq-property-GCI}.

The definition~\ref{def-GCI} is equivalent to the following:
\begin{proposition} \label{prop:equivalent_definition_GCI}
We have that~$\psi \in GCI(\Lambda_0)$ if and only if 
\begin{equation} \label{eq:for_psi}
\text{there exists }B\in T_{\Lambda_0}\text{ such that }\nabla_A \cdot (M_{\Lambda_0}\nabla_A \psi) = B \cdot A\, M_{\Lambda_0}.
\end{equation}

\end{proposition}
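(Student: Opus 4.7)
The plan is to reformulate the GCI condition as an $L^2$-orthogonality statement on $SO(3)$ and then invoke duality. Starting from Definition~\ref{def-GCI}, applying Stokes' theorem on $SO(3)$ twice gives
\[
\int_{SO(3)} \mathcal{Q}(f,\Lambda_0)\,\psi \, dA \;=\; -\int_{SO(3)} M_{\Lambda_0}\, \nabla_A\!\left(\tfrac{f}{M_{\Lambda_0}}\right)\cdot \nabla_A\psi \, dA \;=\; \int_{SO(3)} \frac{f}{M_{\Lambda_0}}\,\nabla_A \cdot (M_{\Lambda_0} \nabla_A \psi) \, dA.
\]
Setting $h := M_{\Lambda_0}^{-1}\,\nabla_A \cdot (M_{\Lambda_0} \nabla_A \psi)$, the condition $\psi\in GCI(\Lambda_0)$ becomes: $\int_{SO(3)} f\, h \, dA = 0$ for every $f$ with $P_{T_{\Lambda_0}} \lambda[f] = 0$.

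Next I rewrite the constraint on $f$ itself as an orthogonality relation: $P_{T_{\Lambda_0}} \lambda[f] = 0$ is equivalent to $\lambda[f]\cdot B = 0$ for every $B\in T_{\Lambda_0}$, which in turn is $\int_{SO(3)} f\,(A\cdot B)\,dA = 0$ for every $B\in T_{\Lambda_0}$. Hence the set of admissible $f$ is exactly the $L^2(SO(3))$-orthogonal complement of the subspace
\[
V := \{\, A\mapsto A\cdot B \;:\; B\in T_{\Lambda_0}\,\}\subset L^2(SO(3)).
\]
The linear map $B\mapsto(A\mapsto A\cdot B)$ from $T_{\Lambda_0}$ to $L^2(SO(3))$ is injective (the nine matrix-coefficient functions $A\mapsto A_{ij}$ are linearly independent on $SO(3)$), so $\dim V = 3$ and $V$ is closed.

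The implication $(\Leftarrow)$ is then immediate: if $h(A) = A\cdot B$ for some $B\in T_{\Lambda_0}$, then for any admissible $f$,
\[
\int_{SO(3)} f\,h\,dA \;=\; \lambda[f]\cdot B \;=\; P_{T_{\Lambda_0}}\lambda[f]\cdot B \;=\; 0.
\]
For $(\Rightarrow)$, the GCI condition says $h\in (V^\perp)^\perp$; since $V$ is closed, $(V^\perp)^\perp = V$, so $h(A) = A\cdot B$ for some $B\in T_{\Lambda_0}$. Multiplying through by $M_{\Lambda_0}$ yields~\eqref{eq:for_psi}.

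The step requiring most care is the duality argument: Definition~\ref{def-GCI} must be read as quantifying over all $f$ in a linear subspace of $L^2(SO(3))$, not only over nonnegative densities. This is the standard convention of~\cite{degond2008continuum} and is consistent with both sides of the GCI identity being linear in $f$; otherwise one simply subtracts a fixed admissible reference density to reduce to the linear setting. Everything else — the two integrations by parts (which use that $SO(3)$ is a closed manifold so there is no boundary term) and the finite-dimensional closedness of $V$ — is routine.
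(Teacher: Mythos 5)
Your proof is correct and follows essentially the same route as the paper's: compute the adjoint $\mathcal{L}^*(\psi)=M_{\Lambda_0}^{-1}\nabla_A\cdot(M_{\Lambda_0}\nabla_A\psi)$ by integration by parts, recognize the admissible $f$ as the $L^2$-orthogonal complement of the finite-dimensional space $V=\{A\mapsto A\cdot B : B\in T_{\Lambda_0}\}$, and conclude via $(V^\perp)^\perp=V$. Your explicit justification of the injectivity of $B\mapsto(A\mapsto A\cdot B)$ (linear independence of the matrix coefficients $A_{ij}$) is a small welcome addition that the paper merely asserts.
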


\begin{proof}[Proof of Prop.~\ref{prop:equivalent_definition_GCI}]
We denote~$\mathcal{L}$ the linear operator~$\mathcal{Q}(\cdot,\Lambda_0)$, and~$\mathcal{L}^*$ its adjoint. We have the following sequence of equivalences, starting from Def.~\ref{def-GCI}:
\begin{align*}
\psi\in GCI(\Lambda_0)&\Leftrightarrow \int_{SO(3)}\psi \mathcal{L}(f) \, dA =0, \quad \mbox{for all } f\mbox{ such that } P_{T_{\Lambda_0}}(\lambda[f])=0\\
&\Leftrightarrow  \int_{SO(3)}\mathcal{L}^*(\psi) f \, dA =0, \quad \mbox{for all } f\mbox{ such that } \int_{SO(3)} Af(A) dA \in \lp T_{\Lambda_0}\rp^\perp \\
&\Leftrightarrow  \int_{SO(3)}\mathcal{L}^*(\psi) f \, dA =0, \mbox{ for all } f\mbox{ s.t. } \forall B\in T_{\Lambda_0}, \int_{SO(3)} (B\cdot A)f(A) dA=0\\
&\Leftrightarrow  \int_{SO(3)}\mathcal{L}^*(\psi) f \, dA =0, \quad \mbox{for all } f\in F_{\Lambda_0}^\perp\\
&\Leftrightarrow  \mathcal{L}^*(\psi) \in \lp F^\perp_{\Lambda_0}\rp^\perp,
\end{align*}
where
$$F_{\Lambda_0}:= \left\{ g:SO(3)\to \R, \mbox{ with } g(A)=(B\cdot A), \, \mbox{ for some } B\in T_{\Lambda_{0}}\right\},$$
and~$F_{\Lambda_0}^\perp$ is the space orthogonal to~$F_{\Lambda_0}$ in~$L^2$.~$F_{\Lambda_0}$ is a vector space in~$L^2$ isomorphic to~$T_{\Lambda_0}$ and~$(F_{\Lambda_0}^\perp)^\perp=F_{\Lambda_0}$ since~$F_{\Lambda_0}$ is closed (finite dimensional). Therefore we get
\[\psi\in GCI(\Lambda_0)\Leftrightarrow  \mathcal{L}^*(\psi) \in  F_{\Lambda_0}\Leftrightarrow  \mbox{ there exists } B \in T_{\Lambda_0}\mbox{ such that } \mathcal{L}^*(\psi)(A)= B\cdot A ,
\]
which ends the proof since the expression of the adjoint is~$\mathcal{L}^*(\psi)=\frac1{M_{\Lambda_0}}\nabla_A \cdot (M_{\Lambda_0}\nabla_A \psi)$.
\end{proof}

We prove the existence and uniqueness of the solution~$\psi$ satisfying Eq. \eqref{eq:for_psi} in the following:
\begin{proposition}[Existence of the GCI]
\label{prop:existence_GCI}
For a given~$B\in T_\Lambda$ fixed, there exists a unique (up to a constant)~$\psi_B\in H^1(SO(3))$, satisfying the relation \eqref{eq:for_psi}.
\end{proposition}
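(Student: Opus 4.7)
The plan is a standard Lax--Milgram argument applied to the elliptic operator $\mathcal{L}^* = M_{\Lambda_0}^{-1}\nabla_A\cdot(M_{\Lambda_0}\nabla_A\,\cdot\,)$ on the compact Riemannian manifold $SO(3)$, working in the quotient space $H^1(SO(3))/\mathbb{R}$. Multiplying \eqref{eq:for_psi} by a test function $\varphi$ and integrating by parts against the volume form leads naturally to the weak formulation
\[
a(\psi,\varphi) := \int_{SO(3)} M_{\Lambda_0}\,\nabla_A\psi\cdot\nabla_A\varphi\,dA = -\int_{SO(3)}(B\cdot A)\,M_{\Lambda_0}(A)\,\varphi(A)\,dA =: \ell(\varphi).
\]

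The first step is to verify the solvability (zero-mean) condition needed for the problem to be well-posed on $H^1(SO(3))/\mathbb{R}$, namely $\ell(1)=0$. Using Lemma~\ref{lem:consistency_relation},
\[
\int_{SO(3)}(B\cdot A)M_{\Lambda_0}(A)\,dA = B\cdot\int_{SO(3)} A\,M_{\Lambda_0}(A)\,dA = B\cdot\lambda[M_{\Lambda_0}] = c_1\,B\cdot\Lambda_0.
\]
Now $T_{\Lambda_0}$ consists of matrices of the form $\Lambda_0 H$ with $H$ antisymmetric, so $\Lambda_0\cdot(\Lambda_0 H)=\tfrac12\tr(\Lambda_0^T\Lambda_0 H)=\tfrac12\tr(H)=0$; hence $B\cdot\Lambda_0=0$ for every $B\in T_{\Lambda_0}$ and the compatibility condition holds.

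Next I would apply Lax--Milgram on the Hilbert space $V = \{\varphi\in H^1(SO(3)) : \int_{SO(3)}\varphi\,dA=0\}$ endowed with the $H^1$ inner product. Since $SO(3)$ is compact and $\Lambda_0\mapsto M_{\Lambda_0}$ is continuous and strictly positive, there exist constants $0<m_-\leqslant M_{\Lambda_0}\leqslant m_+<\infty$ on $SO(3)$. Continuity of $a$ is immediate from the upper bound, while coercivity on $V$ follows from the lower bound combined with the Poincar\'e inequality on the compact manifold $SO(3)$: $a(\varphi,\varphi)\geqslant m_-\|\nabla_A\varphi\|_{L^2}^2\geqslant C\|\varphi\|_{H^1}^2$ for $\varphi\in V$. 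The linear form $\ell$ is continuous on $V$ by Cauchy--Schwarz and the boundedness of $A\mapsto B\cdot A$ and of $M_{\Lambda_0}$, and is well-defined on the quotient thanks to the compatibility check above. Lax--Milgram then yields a unique $\psi_B\in V$ solving $a(\psi_B,\varphi)=\ell(\varphi)$ for all $\varphi\in V$, which, by the compatibility condition, extends to all $\varphi\in H^1(SO(3))$. Undoing the integration by parts shows that $\psi_B$ solves \eqref{eq:for_psi} in the distributional sense, and any two $H^1$ solutions differ by a constant because the kernel of $\mathcal{L}^*$ on $H^1$ reduces to the constants (take $\varphi=\psi$ in $a(\psi,\psi)=0$ and use coercivity modulo constants).

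The main subtlety is the compatibility condition, which is the only place where the structural hypothesis $B\in T_{\Lambda_0}$ really enters: it is precisely what guarantees $B\cdot\Lambda_0=0$ and makes $\ell$ vanish on constants. Everything else is a routine Lax--Milgram calculation on a compact manifold, once one notes that $M_{\Lambda_0}$ is bounded above and below and that the Poincar\'e inequality holds on $SO(3)/\mathbb{R}$.
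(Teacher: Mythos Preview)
Your proof is correct and follows essentially the same Lax--Milgram argument as the paper: work on the zero-mean subspace $V=\{\varphi\in H^1(SO(3)):\int\varphi=0\}$, use the Poincar\'e inequality on the compact manifold for coercivity, then extend to all of $H^1$ via the compatibility condition and conclude uniqueness up to constants. The only (minor) difference is in how you verify $\ell(1)=0$: the paper observes directly that the integrand is antisymmetric under the measure-preserving involution $A\mapsto\Lambda_0 A^T\Lambda_0^T$ (equivalently $\Lambda_0^TA\mapsto(\Lambda_0^TA)^T$), whereas you invoke the consistency relation $\lambda[M_{\Lambda_0}]=c_1\Lambda_0$ and then use $B\cdot\Lambda_0=0$ for $B\in T_{\Lambda_0}$; both arguments are valid and equally short.
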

\begin{proof}[Proof of Prop.~\ref{prop:existence_GCI}]
We would like to apply the Lax-Milgram theorem to prove the existence of~$\psi$ in an appropriate functional space. For this, we rewrite the relation \eqref{eq:for_psi} weakly
\be \label{eq:definition_a_b}
a(\psi,\varphi) := \int_{SO(3)}M_{\Lambda_0}\nabla_A \psi \cdot \nabla_A \varphi\, dA = \int_{SO(3)} B \cdot P_{T_{\Lambda_0}}(A) M_{\Lambda_0} \varphi\, dA=:b(\varphi).
\ee
Our goal is to prove that there exists a unique~$\psi\in H^1(SO(3))$ such that
$a(\psi,\varphi)=b(\varphi)$ for all~$\varphi\in H^1(SO(3))$.

To begin with we apply the Lax-Milgram theorem on the space
$$H^1_0(SO(3)):=\left\{\varphi \in H^1 \, |\, \int_{SO(3)}\varphi \, dA =0 \right\}.$$
In this space the~$H^1$-norm and the~$H^1$ semi-norm are equivalent thanks to the Poincar\'e inequality, i.e., there exists~$C>0$ such that
$$ \int_{SO(3)}|\nabla_A\varphi|^2 dA \geq C \int_{SO(3)} |\varphi|^2 dA   \quad \mbox{for some } C>0, \quad \mbox{for all } \varphi \in H^1_0(SO(3)).$$
Notice that the Poincar\'e inequality holds in~$SO(3)$ because it is compact Riemannian manifold \cite{colbois}. 
This gives us the coercivity estimate to apply the Lax-Milgram theorem. Hence, there exists a unique~$\psi\in H^1_0(SO(3))$ s.t~$a(\psi, \varphi)= b(\varphi)$ for all~$\varphi \in H^1_0(SO(3))$.

\medskip
Now, define  for a given~$\varphi \in H^1(SO(3))$,~$\varphi_0:=\varphi - \int_{SO(3)}\varphi \,dA \in H^1_0(SO(3))$. It holds that
$$a(\psi, \varphi)= a(\psi, \varphi_0) \quad \mbox{ and }\quad b(\varphi)=b(\varphi_0)$$
since~$b(1)=0$ given that it has antisymmetric integrand.
Hence, we obtain that there exists a unique~$\psi \in H^1_0(SO(3))$ such that
$$a(\psi, \varphi)=b(\varphi) \quad \mbox{for all } \varphi \in H^1(SO(3)).$$

Suppose next, that there exists another solution~$\bar \psi \in H^1(SO(3))$ to this problem, then the difference~$\Psi=\psi-\bar \psi$ satisfies:
$$0=a(\Psi, \varphi)= \int_{SO(3)} M_{\Lambda_0} \nabla_A \Psi \cdot \nabla_A \varphi \, dA \qquad \mbox{ for all } \varphi \in H^1(SO(3)).$$
Take in particular~$\varphi = \Psi$, then
$$\int_{SO(3)} M_{\Lambda_0} |\nabla_A \Psi|^2 \, dA =0.$$
Hence,~$\Psi = c$ for some constant~$c$, so all solutions are of the form~$\psi +c$ where~$\psi$ is the unique solution satisfying~$\int_{SO(3)}\psi \, dA =0$. 
\end{proof}

By writing that
\be \label{eq:aux_antisymmetric}
B\in T_{\Lambda_0} \mbox{ if and only if there exists } P\in \mathcal{A}, \, B=\Lambda_0 P,
\ee
with~$\mathcal{A}$ the set of antisymmetric matrices, we deduce the:
\begin{corollary}
For a given~$\Lambda_0 \in SO(3)$, the set of Generalized Collision Invariants associated to~$\Lambda_0$ are
$$GCI(\Lambda_0) =span\{1, \cup_{P\in \mathcal{A}} \psi^{\Lambda_0}_P \}$$
(where~$\mathcal{A}$ is the set of antisymmetric matrices) 
with~$\psi^{\Lambda_0}_P$ the unique solution in~$H^1_0(SO(3))$ of
$$a(\psi^{\Lambda_0}_P, \varphi) = b_P(\varphi) \qquad\mbox{for all } \varphi \in H^1(SO(3)),$$
where~$a$ and~$b_P$ are defined by \eqref{eq:definition_a_b} with~$B$ substituted by~$\Lambda_0 P$.
\end{corollary}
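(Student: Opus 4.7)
The plan is to assemble Proposition~\ref{prop:equivalent_definition_GCI}, the parametrization \eqref{eq:aux_antisymmetric} of $T_{\Lambda_0}$, and the existence/uniqueness statement of Proposition~\ref{prop:existence_GCI}; no substantial new computation is required, only bookkeeping.

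First I would record that $GCI(\Lambda_0)$ is a linear subspace of $L^2(SO(3))$, directly from Definition~\ref{def-GCI} (it is the intersection of kernels of functionals that are linear in $\psi$). In particular, constants trivially satisfy \eqref{eq:for_psi} with $B=0\in T_{\Lambda_0}$, and hence belong to $GCI(\Lambda_0)$ by Proposition~\ref{prop:equivalent_definition_GCI}.

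For the inclusion $\subseteq$, I would take any $\psi\in GCI(\Lambda_0)$. Proposition~\ref{prop:equivalent_definition_GCI} yields some $B\in T_{\Lambda_0}$ with $\nabla_A\cdot(M_{\Lambda_0}\nabla_A\psi)=(B\cdot A)\,M_{\Lambda_0}$, and \eqref{eq:aux_antisymmetric} lets me write $B=\Lambda_0 P$ for a unique $P\in\mathcal{A}$. In weak form this is exactly $a(\psi,\varphi)=b_P(\varphi)$ for every $\varphi\in H^1(SO(3))$. Proposition~\ref{prop:existence_GCI} then gives uniqueness up to an additive constant, hence $\psi=\psi^{\Lambda_0}_P+c$ for some $c\in\mathbb{R}$, placing $\psi$ in the span of $1$ and the family $\{\psi^{\Lambda_0}_P\}_{P\in\mathcal{A}}$.

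Conversely, each $\psi^{\Lambda_0}_P$ by construction satisfies \eqref{eq:for_psi} with $B=\Lambda_0 P\in T_{\Lambda_0}$, hence lies in $GCI(\Lambda_0)$ through Proposition~\ref{prop:equivalent_definition_GCI}; combined with $1\in GCI(\Lambda_0)$ and linearity of the space $GCI(\Lambda_0)$, this gives the reverse inclusion. There is no real obstacle: the only observation worth flagging is that since $P\mapsto b_P$ is linear and the zero-mean solution is unique, $P\mapsto \psi^{\Lambda_0}_P$ is itself linear, so the resulting span is $(1+\dim\mathcal{A})=4$-dimensional, matching the dimension of the equilibrium manifold identified in Lemma~\ref{lem:propertiesQ}.
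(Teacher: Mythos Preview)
Your proposal is correct and matches the paper's approach exactly: the paper states the corollary as an immediate consequence of Proposition~\ref{prop:equivalent_definition_GCI}, Proposition~\ref{prop:existence_GCI}, and the parametrization \eqref{eq:aux_antisymmetric}, without spelling out the two inclusions, and you have simply written out those details carefully. Your closing remark on the dimension also mirrors the paper's comment following the corollary.
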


Notice that since the mapping~$P\mapsto\psi^{\Lambda_0}_P$ is linear and injective from~$\mathcal{A}$ (of dimension~$3$) to~$H^1_0(SO(3))$, the vector space~$GCI(\Lambda_0)$ is of dimension~$4$.
%%Observe that in particular we have that (taking~$\varphi = \psi$)
%%$$\int_{SO(3)}|\nabla_A \psi|^2 M_\Lambda_0\, dA= \int_{SO(3)} \Lambda_0 P \cdot P_{T_\Lambda_0}(A) \psi M_\Lambda_0 \, dA \qquad \mbox{for all }P\in \mathcal{A}.$$
%%

\subsubsection{The non-constant GCIs}

From now on, we omit the subscript on~$\Lambda_0$, and we are interested in a simpler expression for~$\psi^{\Lambda}_P$.
Rewriting expression \eqref{eq:for_psi} using \eqref{eq:aux_antisymmetric}, for any given~$P\in \mathcal{A}$ we want to find~$\psi$ such that 
\begin{equation} \label{eq:strategy_GCI_condition}
\nabla_A \cdot (M_\Lambda \nabla_A \psi) = (\Lambda P) \cdot A\, M_\Lambda = P \cdot (\Lambda^T A) M_\Lambda, \quad P\in\mathcal{A}. 
\end{equation}

%%We want (just by what has been done in the 2D case and to obtain a coherent limiting equation) the GCI to be
%%\begin{equation} \label{eq:phi_shape}
%%\psi^\Lambda_P(A)=P\cdot \Lambda^T A\, \phi(\Lambda^T A)
%%\end{equation}
%%for some scalar function~$\phi$ to be defined later. 
%%
%%\paragraph{We want~$\phi$ to be a function of~$\Lambda^T A$.}\mbox{}\\
%%Firstly, observe that~$\psi^\Lambda_P$ is a function of~$\Lambda^T A$ so, we want to change variables~$B=\Lambda^TA$ in the equation and obtain
\begin{proposition}
Let~$P\in \mathcal{A}$ and~$\psi$ be the solution of \eqref{eq:strategy_GCI_condition} belonging to~$H^1_0(SO(3))$. If we denote~$\bar \psi (B) :=\psi(\Lambda B)$, then~$\bar \psi$ is the unique solution in~$H^1_0(SO(3))$ of:
\be \label{eq:rescaled_GCI}
\nabla_B \cdot \lp M_{\Id}(B)\nabla_B \bar \psi \rp = P \cdot B M_{\Id}(B).
\ee
\end{proposition}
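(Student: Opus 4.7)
The plan is to exploit the fact that left-translation $L_\Lambda \colon B \mapsto \Lambda B$ is an isometry of $SO(3)$ with respect to the Riemannian metric \eqref{eq:dot_product_SO3}. Indeed, the Haar measure is left-invariant (Lemma \ref{lem-volume-form}) and $\Lambda A \cdot \Lambda B = \tfrac12\tr(B^T\Lambda^T\Lambda A) = A\cdot B$, so $L_\Lambda$ preserves both the volume form and the inner product; in particular $(dL_\Lambda)_B \colon T_B SO(3) \to T_{\Lambda B} SO(3)$ is an isometry.

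From this one obtains the covariance of the weighted Laplace operator under $L_\Lambda$: for any smooth weight $M$ and function $\phi$, setting $\tilde\phi := \phi \circ L_\Lambda$, the chain rule together with the isometric character of $(dL_\Lambda)_B$ gives
\begin{equation*}
\nabla_B \cdot \bigl((M \circ L_\Lambda)\, \nabla_B \tilde\phi\bigr)(B) \;=\; \bigl(\nabla_A\cdot (M \nabla_A \phi)\bigr)(\Lambda B).
\end{equation*}
I would apply this with $\phi = \psi$ and $M = M_\Lambda$. Since $M_\Lambda(\Lambda B) = M_\Id(\Lambda^T\Lambda B) = M_\Id(B)$ (as noted right after \eqref{eq:Von_Mises_equilibria}), the left-hand side becomes $\nabla_B \cdot (M_\Id \nabla_B \bar\psi)$. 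Evaluating \eqref{eq:strategy_GCI_condition} at $A = \Lambda B$, the right-hand side transforms as $P \cdot (\Lambda^T\Lambda B)\,M_\Lambda(\Lambda B) = P \cdot B\, M_\Id(B)$. This is exactly \eqref{eq:rescaled_GCI}.

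It then remains to verify that $\bar\psi$ lies in $H^1_0(SO(3))$ and that it is the unique such solution. The zero-mean condition follows from the change of variables $A = \Lambda B$ in $\int_{SO(3)} \psi\, dA = 0$, using left-invariance of the Haar measure. Uniqueness is then an immediate application of Proposition \ref{prop:existence_GCI} with $\Lambda_0 = \Id$ (so $T_\Id = \mathcal{A}$ and $P$ plays directly the role of the element $B$ of the tangent space).

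The only step requiring genuine care is the covariance identity for $\nabla\cdot(M\nabla\,\cdot\,)$ under $L_\Lambda$. Should one wish to avoid invoking Riemannian calculus abstractly, the cleanest alternative is to argue at the level of the weak formulation: test the identity $a(\psi,\varphi) = b_P(\varphi)$ from \eqref{eq:definition_a_b} against $\varphi(A) := \bar\varphi(\Lambda^T A)$ and change variables $A = \Lambda B$. Left-invariance of $dA$ handles the measure, the identity $(\Lambda P)\cdot A = P\cdot \Lambda^T A$ gives the transformation of the right-hand side, and the gradient identity $(\nabla_A\psi\cdot\nabla_A\varphi)(\Lambda B) = (\nabla_B\bar\psi\cdot\nabla_B\bar\varphi)(B)$ — which encodes the isometry — can be checked directly in the Euler axis-angle coordinates of Section \ref{sec:differential_calculus}.
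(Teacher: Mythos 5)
Your proposal is correct and follows essentially the same route as the paper: the paper first establishes the gradient identity $\nabla_A\psi(A)=\Lambda\nabla_B\bar\psi(\Lambda^T A)$ (your isometry statement for $dL_\Lambda$) and then passes to the weak formulation with test functions $\varphi(A)=\bar\varphi(\Lambda^T A)$ and the change of variables $A=\Lambda B$, which is precisely the ``cleanest alternative'' you describe at the end. Your more abstract packaging via covariance of the weighted Laplacian under the isometry $L_\Lambda$ is just a higher-level statement of the same argument, so there is nothing to add.
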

\begin{proof}
Let~$\psi(A)= \bar \psi(\Lambda^T A)$. Consider~$A(\eps)$ a differentiable curve in~$SO(3)$ with 
$$A(0)=A, \qquad \left.\frac{dA(\eps)}{d\eps}\right|_{\eps=0}= \delta_A \in T_A.$$
Then, by definition
$$\lim_{\eps \to 0}\frac{\psi(A(\eps))-\psi(A)}{\eps}= \nabla_A\psi(A) \cdot \delta_A,$$
and therefore we have that
$$\lim_{\eps \to 0}\frac{\bar\psi(\Lambda^T A(\eps)) -\bar \psi(\Lambda^T A)}{\eps}= \nabla_B \bar \psi(\Lambda^T A) \cdot \Lambda^T\delta_A$$
since
$$\Lambda^T A(0)=\Lambda^TA, \qquad \left.\frac{d}{d\eps}\Lambda^TA(\eps)\right|_{\eps=0}=\Lambda^T \delta_A.$$
We conclude that 
$$\nabla_A\psi(A) \cdot \delta_A= \nabla_B \bar \psi(\Lambda^T A) \cdot \Lambda^T\delta_A.$$
Now we check that
\begin{eqnarray*}
\frac12\tr\lp \lp\nabla_A\psi(A)\rp^T \delta_A\rp &=&\frac12\tr\lp \lp \nabla_B \bar \psi(\Lambda^T A)\rp^T \Lambda^T \delta_A\rp\\
&=&\frac12\tr\lp \lp \Lambda\nabla_B \bar \psi(\Lambda^T A)\rp^T  \delta_A\rp,
\end{eqnarray*}
implying (since this is true for any~$\delta_A \in T_A$) that
$$\nabla_A\psi(A) = \Lambda \nabla_B \bar \psi(\Lambda^T A) .$$

\medskip
Now to deal with the divergence term, we consider the variational formulation. Consider~$\varphi\in H^1(SO(3))$, then our equation is equivalent to 
$$-\int_{SO(3)}M_\Lambda(A) \nabla_A \psi(A) \cdot \nabla_A \varphi(A) \, dA= \int_{SO(3)}P\cdot (\Lambda^T A) M_\Lambda(A) \varphi(A) \, dA$$
for all~$\varphi\in H^1(SO(3)).$
The left hand side can be written as:
\[\begin{split}
&-\int_{SO(3)} M_{\Id}(B)(\Lambda^TA)\lp \Lambda \nabla_B\bar\psi(\Lambda^TA) \rp \cdot \lp \Lambda \nabla_B\bar\varphi(\Lambda^TA) \rp dA \\
&=-\int_{SO(3)} M_{\Id}(B) \nabla_B \bar \psi(B) \cdot \nabla_B \bar\varphi(B)\, dB;
\end{split}\]
and the right hand side is equal to 
$$\int_{SO(3)}P\cdot B\, M_{\Id}(B) \bar \varphi(B) \, dB,$$
where we define analogously~$\bar \varphi(B)=\varphi(\Lambda B)$. This concludes the proof.
\end{proof}

Therefore it is enough to find the solution to \eqref{eq:rescaled_GCI}. Inspired by  \cite{degond2008continuum} we make the ansatz:
$$\bar \psi(B) = P\cdot B\, \bar \psi_0(\tfrac12\tr(B))$$
for some scalar function~$\bar \psi_0$.

\begin{proposition}[Non-constant GCI]
\label{prop:non_constant_GCI}
Let~$P\in \mathcal{A}$, then the unique solution~$\bar\psi \in H^1_0(SO(3))$ of \eqref{eq:rescaled_GCI} is given by
\be \label{eq:GCI_explicit}
\bar \psi (B) = P \cdot B \, \bar\psi_0 (\tfrac12\tr(B)),
\ee
where~$\bar \psi_0$ is constructed as follows:
Let~$\widetilde\psi_0: \R\to \R$ be the unique solution to
\be \label{eq:psi_0}
\frac{1}{\sin^2(\theta/2)} \partial_\theta\lp \sin^2(\theta/2) m(\theta)\partial_\theta \lp \sin\theta\widetilde\psi_0 \rp \rp - \frac{m(\theta)\sin\theta}{2\sin^2(\theta/2)}\widetilde\psi_0 = \sin\theta \,m(\theta),
\ee
where~$m(\theta)=M_{\Id}(B)=\exp(d^{-1}\sigma(\frac12+\cos\theta))/Z$. Then~
\be \label{eq:change_variable_psi0}
\widetilde \psi_0(\theta)= \bar\psi_0 \lp \frac12\tr(B) \rp
\ee
 by the relation~$\frac12\tr(B)= \frac12+\cos\theta$.~$\widetilde\psi_0$ is~$2\pi$-periodic, even and negative (by the maximum principle).

Going back to the GCI~$\psi(A)$, we can write it as 
\be  \label{eq:GCI_explicit_not_rescaled}
\psi(A) = P\cdot (\Lambda^T A)\, \bar{\psi}_0(\Lambda \cdot A).
\ee
\end{proposition}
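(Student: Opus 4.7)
The plan is to substitute the ansatz \eqref{eq:GCI_explicit} into the PDE \eqref{eq:rescaled_GCI}, show that the equation reduces to the scalar ODE \eqref{eq:psi_0}, and then invoke uniqueness from Proposition~\ref{prop:existence_GCI}. First, write $P=[\mathbf{p}]_\times$ for some $\mathbf{p}\in\R^3$ and parametrize $B=A(\theta,\nvec)$ via Rodrigues' formula \eqref{eq:Rodrigues_formula}. Since $P$ is antisymmetric while $\Id$ and $[\nvec]_\times^2$ are symmetric, and since \eqref{eq:properties_asym_matrix} gives $P\cdot[\nvec]_\times=\mathbf{p}\cdot\nvec$, one obtains $P\cdot B=\sin\theta\,(\mathbf{p}\cdot\nvec)$. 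Using the change of variables \eqref{eq:change_variable_psi0}, the ansatz becomes
\[
\bar\psi(A(\theta,\nvec))=\sin\theta\,(\mathbf{p}\cdot\nvec)\,\widetilde\psi_0(\theta),
\]
so the $\nvec$-dependence is carried by the first spherical harmonic $\mathbf{p}\cdot\nvec$, satisfying $\Delta_\nvec(\mathbf{p}\cdot\nvec)=-2\,(\mathbf{p}\cdot\nvec)$.

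Next, expand the left-hand side of \eqref{eq:rescaled_GCI} as $M_{\Id}\Delta_B\bar\psi+\nabla_B M_{\Id}\cdot\nabla_B\bar\psi$. Because $M_{\Id}(B)=m(\theta)$ depends only on $\theta$, Proposition~\ref{prop:gradient_SO3} gives $\nabla_B M_{\Id}=m'(\theta)\,B\asymn$. Applying the same formula to $\bar\psi$ and noting that $\nabla_\nvec(\mathbf{p}\cdot\nvec)\perp\nvec$, the two spherical components of $\nabla_B\bar\psi$ are orthogonal to $B\asymn$ in the inner product \eqref{eq:dot_product_SO3}. Thus the cross term collapses to $m'(\theta)\,\partial_\theta(\sin\theta\widetilde\psi_0)\,(\mathbf{p}\cdot\nvec)$, while Corollary~\ref{cor:laplacian_SO3} together with the eigenvalue identity yields
\[
\Delta_B\bar\psi=(\mathbf{p}\cdot\nvec)\left[\frac{1}{\sin^2(\theta/2)}\partial_\theta\lp\sin^2(\theta/2)\partial_\theta(\sin\theta\widetilde\psi_0)\rp-\frac{\sin\theta\,\widetilde\psi_0}{2\sin^2(\theta/2)}\right].
\]
Summing the two contributions and applying the product rule $\partial_\theta(\sin^2(\theta/2)\,m\,\partial_\theta(\cdot))=m\,\partial_\theta(\sin^2(\theta/2)\partial_\theta(\cdot))+m'\sin^2(\theta/2)\partial_\theta(\cdot)$, the left-hand side becomes exactly $(\mathbf{p}\cdot\nvec)$ times the left-hand side of \eqref{eq:psi_0}. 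Since the right-hand side of \eqref{eq:rescaled_GCI} equals $\sin\theta\,m(\theta)\,(\mathbf{p}\cdot\nvec)$, dividing by the common factor reduces the PDE precisely to \eqref{eq:psi_0}.

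To close the argument, I would exhibit a solution $\widetilde\psi_0$ of \eqref{eq:psi_0} that is even and $2\pi$-periodic, so that \eqref{eq:change_variable_psi0} consistently defines $\bar\psi_0$ as a function of $\tfrac12+\cos\theta$ and the ansatz is globally smooth on $SO(3)$. Existence of such a $\widetilde\psi_0$ follows from Sturm--Liouville theory for the singular operator on $(0,\pi)$, with the boundary behaviour at $\theta=0,\pi$ selected so that the resulting $\bar\psi$ lies in $H^1(SO(3))$; evenness is a consequence of both sides of \eqref{eq:psi_0} being odd in $\theta$, and negativity follows from the maximum principle since $\sin\theta\,m(\theta)>0$ on $(0,\pi)$. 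The constructed $\bar\psi$ then belongs to $H^1_0(SO(3))$ because $\int_{S^2}(\mathbf{p}\cdot\nvec)\,d\nvec=0$ forces the mean to vanish, so by the uniqueness part of Proposition~\ref{prop:existence_GCI} it coincides with the GCI. Translating back via $\psi(A)=\bar\psi(\Lambda^T A)$ and $\tfrac12\tr(\Lambda^T A)=\Lambda\cdot A$ yields \eqref{eq:GCI_explicit_not_rescaled}. The main obstacle is the reduction step: one must carefully track the three orthogonal tangent directions appearing in Propositions~\ref{prop:gradient_SO3}--\ref{prop:divergence_SO3} and exploit the fact that $\mathbf{p}\cdot\nvec$ is an eigenfunction of $\Delta_\nvec$, so that the ansatz is preserved by the operator and no higher spherical harmonics appear.
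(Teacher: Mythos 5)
Your proposal is correct and follows essentially the same route as the paper: substitute the ansatz, use $P\cdot B=\sin\theta\,(\mathbf{p}\cdot\nvec)$ together with the gradient/divergence/Laplacian formulas in Euler axis--angle coordinates and the eigenvalue identity $\Delta_\nvec(\mathbf{p}\cdot\nvec)=-2(\mathbf{p}\cdot\nvec)$ to reduce the PDE to the scalar ODE \eqref{eq:psi_0}, then conclude by the uniqueness of Prop.~\ref{prop:existence_GCI}. The only (cosmetic) differences are that you expand $\nabla_B\cdot(M_{\Id}\nabla_B\bar\psi)$ by the product rule where the paper applies Prop.~\ref{prop:divergence_SO3} directly with $b=m\,\partial_\theta(\sin\theta\widetilde\psi_0)(\mathbf{p}\cdot\nvec)$, and that your appeal to Sturm--Liouville theory is carried out in the paper as an explicit Lax--Milgram argument in the weighted space $H$ matching the $H^1_0(SO(3))$ norm of the ansatz.
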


\begin{proof}[Proof of Prop.~\ref{prop:non_constant_GCI}]
Suppose that the solution is given by expression \eqref{eq:GCI_explicit}. We check that~$\widetilde \psi_0$ given by Eq. \eqref{eq:change_variable_psi0} satisfies Eq. \eqref{eq:psi_0} using the gradient and divergence in~$SO(3)$ computed in Prop.~\ref{prop:gradient_SO3} and~\ref{prop:divergence_SO3}. First notice that~$P$ is antisymmetric, thus if we write  Rodrigues' formula~\eqref{eq:Rodrigues_formula} for~$B(\theta,\nvec)$, the symmetric part of~$B(\theta,\nvec)$ gives no contribution when computing~$P\cdot B$ and we get
$$\bar{\psi}(B) = P\cdot B\, \bar{\psi}_0(\tfrac12\tr(B)) = \sin\theta\,\widetilde{\psi}_0(\theta)P\cdot \asymn =\sin\theta\,\widetilde{\psi}_0(\theta) (\mathbf{p} \cdot \nvec),$$
 where the vector~$\mathbf{p}$ is such that~$P=[\mathbf{p}]_\times$ and this leads to
\beqar
\nabla_B\cdot \lp M_{\Id}(B)\nabla_{B}\bar{\psi} \rp & =& \frac{1}{\sin^2(\theta/2)}\partial_\theta\Big(\sin^2(\theta/2)m(\theta)\partial_\theta\lp \sin\theta\,\widetilde \psi_0(\theta) \rp \Big)(\mathbf{p}\cdot \nvec) \\
&& \quad +\,  \frac{m(\theta)\sin\theta}{4\sin^2(\theta/2)}\widetilde\psi_0(\theta)\Delta_n (\mathbf{p}\cdot \nvec).
\eeqar
Using that the Laplacian in the sphere has the property~$\Delta_\nvec (\mathbf{p} \cdot \nvec) = -2(\mathbf{p} \cdot \nvec)$ ($\mathbf{p}\cdot \nvec$ corresponds to the first spherical harmonic), we conclude that  expression \eqref{eq:psi_0} is satisfied. In the computation we used the same procedure as for the proof of the expression of the Laplacian in~$SO(3)$ (Corollary~\ref{cor:laplacian_SO3}), but (using the same notations) we have taken~$b(\theta,\nvec)= m(\theta) \partial_\theta(\sin\theta\, \widetilde{\psi}_0(\theta))(\mathbf{p}\cdot\nvec)$.

\medskip
To conclude the proof we just need to check that~$\widetilde \psi_0$ exists and corresponds to a function~$\bar\psi$ in~$H^1_0(SO(3))$. Using the expression of the volume form, since~$\int_{S^2}\mathbf{p}\cdot\nvec \,d \nvec=0$, we get that if~$\psi_0$ is smooth, we have~$\int_{SO(3)}\bar\psi(A)dA=0$, and using the expression of the gradient, we get that
\[\begin{split}
\int_{SO(3)}|\nabla\bar\psi(A)|^2dA=&\frac2{\pi}\int_0^\pi\sin^2(\theta/2)|\partial_\theta(\sin\theta\widetilde\psi_0(\theta))|^2d\theta\int_{S^2}|\mathbf{p}\cdot\nvec|^2d\nvec\\
&+\frac2{\pi}\int_0^\pi\frac14|\sin\theta\widetilde\psi_0(\theta)|^2d\theta\int_{S^2}|\nabla_\nvec(\mathbf{p}\cdot\nvec)|^2d\nvec\, .
\end{split}\]

Therefore by density of smooth functions in~$H^1_0(SO(3))$, we get that~$\bar\psi\in H^1_0(SO(3))$ if and only if~$\widetilde \psi_0\in H$, where 
$$H:= \left\{ \psi \, | \int_{(0,\pi)}\psi^2\sin^2\theta\, d\theta<\infty, \quad \int_{(0,\pi)}|\partial_\theta( \sin\theta\psi(\theta))|^2 \sin^2(\theta/2) \, d\theta <\infty \right\}.$$
This Hilbert space is equipped with the corresponding norm:
$$\|\psi\|^2_H = \int_{(0,\pi)}\psi^2\sin^2\theta\, d\theta +  \int_{(0,\pi)}|\partial_\theta (\sin\theta\psi(\theta))|^2 \sin^2(\theta/2) \, d\theta. ~$$

Now, Eq. \eqref{eq:psi_0} written in weak form in~$H$ and tested against any~$\phi \in H$ reads
\beqar
a(\widetilde\psi_0,\phi)&:=&-\int_{(0,\pi)}m(\theta)\left[\sin^2(\theta/2)\partial_\theta(\sin\theta\widetilde\psi_0(\theta))\partial_\theta(\sin\theta\phi(\theta))\, d\theta +\frac12\sin^2\theta\widetilde\psi_0(\theta)\phi(\theta)\right]\, d\theta \\
&=& \int_{(0,\pi)}\sin^2\theta\sin^2(\theta/2)m(\theta) \phi\, d\theta =:b(\phi).
\eeqar
It holds for some~$c,c',c''>0$ that:~$|a(\psi,\phi)| \leq  c\|\psi\|_H \|\phi\|_H$ since~$m=m(\theta)$ is bounded; and also~$|a(\psi,\psi)| \geq  c' \|\psi\|^2_{H}$ since there exists~$m_0>0$ such that~$m(\theta)>m_0$ for all~$\theta\in[0,\pi]$; finally, we also have that~$|b(\phi)| \leq c''\|\phi\|_H^2$. Therefore, by the Lax-Milgram theorem, there exists a (unique) solution~$\widetilde\psi_0 \in H$ to \eqref{eq:psi_0}, which corresponds to a (unique)~$\bar\psi$ in~$H^1_0(SO(3))$.
\end{proof}

 	\subsection{The macroscopic limit}
\label{sec:macro-limit}
In this section we investigate the hydrodynamic limit. To state the theorem we first give the definitions of the first order operators~$\delta_x$ and~$\rvec_x$. For a smooth function~$\Lambda$ from~$\R^3$ to~$SO(3)$, and for~$x\in\R^3$, we define the following matrix~$\mathcal{D}_x(\Lambda)$ such that for any~$\ww\in\R^3$, we have
\begin{equation}\label{eq-def-D}
(\ww\cdot\nabla_x)\Lambda=[\mathcal{D}_x(\Lambda)\ww]_\times\Lambda.
\end{equation}
Notice that this first-order differential equation~$\mathcal{D}_x$ is well-defined as a matrix; for a given vector~$\ww$, the matrix~$(\ww\cdot \nabla_x) \Lambda$ is in~$T_\Lambda$ and thanks to Prop.~\ref{prop:projection_tangentspace}, it is of the form~$P\Lambda$, with $P$ an antisymmetric matrix.   
Therefore there exists a vector~$\mathcal{D}_x(\Lambda)(\ww)\in \R^3$ depending on~$\ww$ such that~$P=[\mathcal{D}_x(\Lambda)(\ww)]_\times.$
The function~$\ww \mapsto \mathcal{D}_x(\Lambda)(\ww)$ is linear from~$\R^3$ to~$\R^3$, so~$\mathcal{D}_x(\Lambda)$ can be identified as a matrix. 

We now define the first order operators~$\delta_x$ (scalar) and~$\rvec_x$ (vector), by
\begin{equation}\label{def-delta-r}
\delta_x(\Lambda)=\tr\big(\mathcal{D}_x(\Lambda)\big) \quad\text{ and }\quad [\rvec_x(\Lambda)]_\times=\mathcal{D}_x(\Lambda)-\mathcal{D}_x(\Lambda)^T.
\end{equation}

We first give an invariance property which allows for a simple expression for these operators.
\begin{proposition}\label{prop:interpretation_delta_r} 
The operators~$\mathcal{D}_x$,~$\delta_x$ and~$\rvec_x$ are right invariant in the following sense: if~$A$ is a fixed matrix in~$SO(3)$ and $\Lambda:\R^3\to SO(3)$ a smooth function, we have

\begin{equation*}%\label{eq:invariance_rotation}
\mathcal{D}_x(\Lambda A)=\mathcal{D}_x(\Lambda),\quad\delta_x(\Lambda A)= \delta_x(\Lambda) \quad \text{and} \quad \rvec_x(\Lambda A)= \rvec_x(\Lambda).
\end{equation*}

Consequently, in the neighborhood of~$x_0\in\R^3$, we can write~$\Lambda(x) = \exp\lp [\mathbf{b}(x)]_\times\rp\,\Lambda(x_0)$ where~$\mathbf{b}$ is a smooth function from a neighborhood of~$x_0$ into~$\R^3$ such that~$\mathbf{b}(x_0)=0$, and we have
\[\big(\mathcal{D}_x(\Lambda)\big)_{ij}(x_0)=\partial_j\mathbf{b}_i(x_0),\]
and therefore
\[ \delta_x(\Lambda)(x_0) =(\nabla_x \cdot\mathbf{b})\, (x_0),\quad \text{ and }\quad\rvec_\Lambda(x_0) = (\nabla_x \times \mathbf{b})\, (x_0).\]
\end{proposition}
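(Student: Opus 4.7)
The plan is to prove the right-invariance first, which is essentially a consequence of the product rule, and then use this right-invariance to reduce the local computation at $x_0$ to an infinitesimal calculation at the identity of $SO(3)$, where the exponential map has trivial derivative.

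For the right-invariance, I would start from the defining identity~\eqref{eq-def-D}, namely $(\ww\cdot\nabla_x)\Lambda=[\mathcal{D}_x(\Lambda)\ww]_\times\Lambda$, and apply it to the smooth map $x\mapsto \Lambda(x)A$. Since $A$ is constant, $(\ww\cdot\nabla_x)(\Lambda A)=\bigl((\ww\cdot\nabla_x)\Lambda\bigr)A=[\mathcal{D}_x(\Lambda)\ww]_\times\Lambda A$. By the uniqueness of the defining relation (the map $\uu\mapsto[\uu]_\times\Lambda A$ is injective once $\Lambda A\in SO(3)$), this forces $\mathcal{D}_x(\Lambda A)=\mathcal{D}_x(\Lambda)$. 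The invariance of $\delta_x$ and $\rvec_x$ then follows from their definitions~\eqref{def-delta-r} as trace and antisymmetric part of $\mathcal{D}_x$.

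Next, for the explicit formula, I would apply right-invariance with the constant matrix $A=\Lambda(x_0)^{-1}$, so that $\mathcal{D}_x(\Lambda)=\mathcal{D}_x(\Lambda\Lambda(x_0)^{-1})=\mathcal{D}_x(G)$ where $G(x):=\exp\bigl([\mathbf{b}(x)]_\times\bigr)$ and $\mathbf{b}(x_0)=0$. The well-definedness of the smooth $\mathbf{b}$ near $x_0$ was already noted in the discussion of the main result (the exponential is a local diffeomorphism between a neighbourhood of $0\in\R^3$ and the identity of $SO(3)$). At $x_0$, $G(x_0)=\Id$, so the defining relation reads $(\ww\cdot\nabla_x)G\big|_{x_0}=[\mathcal{D}_x(G)(x_0)\ww]_\times$. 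On the other hand, since $\mathbf{b}(x_0)=0$, the Taylor expansion $\exp([\mathbf{b}]_\times)=\Id+[\mathbf{b}]_\times+O(|\mathbf{b}|^2)$ and the chain rule give $(\ww\cdot\nabla_x)G\big|_{x_0}=\bigl[(\ww\cdot\nabla_x)\mathbf{b}(x_0)\bigr]_\times$. Comparing the two expressions and using the injectivity of $[\cdot]_\times$, I conclude $\mathcal{D}_x(\Lambda)(x_0)\ww=(\ww\cdot\nabla_x)\mathbf{b}(x_0)$ for every $\ww$, which componentwise reads $\bigl(\mathcal{D}_x(\Lambda)\bigr)_{ij}(x_0)=\partial_j\mathbf{b}_i(x_0)$.

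Finally, taking the trace yields $\delta_x(\Lambda)(x_0)=\partial_i\mathbf{b}_i(x_0)=(\nabla_x\cdot\mathbf{b})(x_0)$, and identifying $[\rvec_x(\Lambda)]_\times=\mathcal{D}_x(\Lambda)-\mathcal{D}_x(\Lambda)^T$ with the standard rule $([\uu]_\times)_{ij}=-\varepsilon_{ijk}u_k$ gives $\rvec_x(\Lambda)(x_0)=(\nabla_x\times\mathbf{b})(x_0)$. The only delicate point is the justification that $d\exp|_0=\mathrm{Id}$ in the relevant sense, which here reduces to the trivial observation that the first-order Taylor term of $\exp([\mathbf{b}]_\times)$ around $\mathbf{b}=0$ is precisely $[\mathbf{b}]_\times$, so no Baker--Campbell--Hausdorff machinery is required; the rest is bookkeeping in coordinates.
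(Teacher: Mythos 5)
Your proposal is correct and follows essentially the same route as the paper: right-invariance via the product rule and the injectivity of $\uu\mapsto[\uu]_\times\Lambda A$, then reduction to $G=\exp([\mathbf{b}]_\times)$ with $G(x_0)=\Id$ and the first-order Taylor expansion of the exponential to identify $\mathcal{D}_x(\Lambda)(x_0)\ww=(\ww\cdot\nabla_x)\mathbf{b}(x_0)$, followed by taking the trace and the antisymmetric part. No gaps.
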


\begin{proof}
For any~$\ww\in\R^3$, we have, since~$A$ is constant:
\[[\mathcal{D}_x(\Lambda A)\ww]_\times\Lambda A=\ww\cdot\nabla_x(\Lambda A)=(\ww\cdot\nabla_x\Lambda) A=[\mathcal{D}_x(\Lambda)\ww]_\times\Lambda A.\]
This proves that~$\mathcal{D}_x(\Lambda A)=\mathcal{D}_x(\Lambda)$, and by~\eqref{def-delta-r}, the same is obviously true for~$\delta_x$ and~$\rvec_x$.  

We now write, in the neighborhood of~$x_0$, that~$\Lambda(x)=\exp([\mathbf{b(x)}]_\times)\Lambda(x_0)$, with~$\mathbf{b}$ smooth in the neighborhood of~$x_0$ and~$\mathbf{b}(x_0)=0$. Then we have~$\mathcal{D}_x(\Lambda)=\mathcal{D}_x\big(\exp([\mathbf{b}]_\times)\big)$. We perform a Taylor expansion around~$x_0$ of~$\exp([\mathbf{b}]_\times)$:
\[\exp([\mathbf{b}(x)]_\times) = \Id + [\mathbf{b}(x)]_\times + M(x),\]
where~$M(x)$ is of order 2 in the coordinates~$\mathbf{b}_1$,~$\mathbf{b}_2$,~$\mathbf{b}_3$, (since~$\mathbf{b}$ is smooth in the neighborhood of~$x_0$ and~$\mathbf{b}(x_0)=0$), therefore
$$\partial_1 M(x_0) = \partial_2 M(x_0) = \partial_3 M(x_0)=0.$$
We then get, since~$\exp([\mathbf{b}(x_0)])=\Id$, that
\[[\mathcal{D}_x\big(\exp([\mathbf{b}]_\times)\big)(x_0)\ww]_\times=\ww\cdot\nabla_x\big(\exp([\mathbf{b}]_\times)\big)(x_0)=\big[(\ww\cdot\nabla_x\mathbf{b})(x_0)\big]_\times, \]
and therefore $\mathcal{D}_x(\Lambda)(x_0)\ww=\mathcal{D}_x(\exp([\mathbf{b}]_\times))(x_0)\ww=(\ww\cdot\nabla_x\mathbf{b})(x_0)$. Taking~$\ww=\mathbf{e}_j$, we get~$\mathcal{D}_x(\Lambda)(x_0)\mathbf{e}_j=\partial_j\mathbf{b}(x_0)$, and thus~$\big(\mathcal{D}_x(\Lambda)(x_0)\big)_{ij}=\mathbf{e}_i\cdot\mathcal{D}_x(\Lambda)(x_0)\mathbf{e}_j=\partial_j\mathbf{b}_i$. The formula for~$\delta_x(\Lambda)$ follows from~\eqref{def-delta-r}, since~$\nabla_x\cdot\mathbf{b}=\sum_i\partial_i\mathbf{b}_i$. Finally by the definition of $[\cdot]_\times$ (see~\eqref{eq:def_operator_asym}), we get
\[[\nabla_x\times\mathbf{b}]_\times=\begin{pmatrix}
0&\partial_2\mathbf{b}_1-\partial_1\mathbf{b}_2 & \partial_3\mathbf{b}_1-\partial_1\mathbf{b}_3\\
\partial_1\mathbf{b}_2-\partial_2\mathbf{b}_1 &0&\partial_3\mathbf{b}_2-\partial_2\mathbf{b}_3 \\ 
\partial_1\mathbf{b}_3-\partial_3\mathbf{b}_1& \partial_2\mathbf{b}_3-\partial_3\mathbf{b}_2&0
\end{pmatrix},\] 
so from~\eqref{def-delta-r} we obtain~$(\nabla_x\times\mathbf{b})(x_0)=\rvec_x(\Lambda)(x_0)$.
\end{proof}

We are now ready to state the main theorem of our paper (see Section~\ref{sec:discussion_macro} for a discussion on this result).

\begin{theorem}[(Formal) macroscopic limit]
\label{th:macro_limit}
When~$\eps \to 0$ in the kinetic Eq. \eqref{eq:MFlimit} it holds (formally) that
$$f_\eps \to f=f(x,A,t)=\rho M_\Lambda(A),\qquad \Lambda=\Lambda(t,x)\in SO(3),\, \rho=\rho(t,x)\geq 0.$$
Moreover, if this convergence is strong enough and the functions~$\Lambda$ and~$\rho$ are smooth enough, they satisfy the following first-order system of partial differential equations:
\begin{align}
&\partial_t \rho + \nabla_x \cdot \big( c_1 \rho \Lambda \mathbf{e_1}\big)=0, \label{eq:macro_rho}\\
&\rho \Big( \partial_t\Lambda+ c_2 \big((\Lambda \vezero) \cdot \nabla_x\big)\Lambda \Big) +\left[ (\Lambda \vezero) \times \big(c_3\nabla_x \rho+c_4\rho\,\rvec_x(\Lambda)\big) + c_4\rho\,\delta_x(\Lambda)\Lambda \vezero\right]_\times \Lambda =0,\label{eq:macro_lambda}
\end{align}
where~$c_1=c_1(\nu, d)=\tfrac23\langle\tfrac12+\cos\theta\rangle_{m(\theta)\sin^2(\theta/2)}$ is the constant given in \eqref{eq:c1} and
\beqar
c_2 &=&  \tfrac{1}{5}\langle 2+3\cos\theta\rangle_{\widetilde{m}(\theta)\sin^2(\theta/2)} ,\\
c_3 &=& d \langle \nu(\tfrac12+\cos\theta)^{-1} \rangle_{\widetilde{m}(\theta)\sin^2(\theta/2)},\\
c_4&=& \tfrac{1}{5}\langle 1-\cos\theta\rangle_{\widetilde{m}(\theta)\sin^2(\theta/2)},
\eeqar
where the notation
$\langle \cdot \rangle_{\widetilde{m}(\theta)\sin^2(\theta/2)}$ is defined in \eqref{eq:integral_notation}.
The function~$\widetilde{m}:(0,\pi)\to(0,+\infty)$ is given by
\be \label{eq:weight_mtilde}
\widetilde{m}(\theta):= \nu(\tfrac12+\cos\theta)\,\sin^2\theta\, m(\theta)\, \widetilde\psi_0(\theta),
\ee
where~$m(\theta)=\exp(d^{-1}\sigma(\tfrac{1}{2}+\cos\theta))$ is the same as in~\eqref{eq-def-m} and~$\widetilde \psi_0$ is the solution of Eq. \eqref{eq:psi_0}. 
\end{theorem}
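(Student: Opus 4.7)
The plan is to follow the hydrodynamic program developed in~\cite{degond2008continuum}, adapted to the noncommutative setting of~$SO(3)$. Multiplying Eq.~\eqref{eq:MFlimit} by~$\varepsilon$ and letting~$\varepsilon\to 0$ gives $Q(f^0)=0$, hence by Lemma~\ref{lem:propertiesQ}(ii) the limit takes the form~$f^0=\rho(t,x)\,M_{\Lambda(t,x)}$. Integrating~\eqref{eq:MFlimit} in~$A$, the right-hand side is a pure $A$-divergence and disappears, and $\int A\vezero f^\varepsilon\,dA \to \lambda[\rho M_\Lambda]\vezero = c_1\rho\Lambda\vezero$ by Lemma~\ref{lem:consistency_relation}; this is exactly~\eqref{eq:macro_rho}. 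The remaining work is to derive~\eqref{eq:macro_lambda} by testing the kinetic equation against the three-parameter family of non-constant GCIs from Prop.~\ref{prop:non_constant_GCI}.

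\textbf{GCI moments.} For every antisymmetric $P\in\mathcal{A}$, the GCI $\psi_P^{\Lambda[f^\varepsilon]}$ lies in $GCI(\Lambda[f^\varepsilon])$, so by~\eqref{eq-property-GCI} the stiff term $\varepsilon^{-1}Q(f^\varepsilon)$ integrates to zero against it. Passing to the limit yields
\[\int_{SO(3)}\psi_P^\Lambda\,\bigl(\partial_t(\rho M_\Lambda)+A\vezero\cdot\nabla_x(\rho M_\Lambda)\bigr)\,dA=0.\]
I would then substitute the explicit form~\eqref{eq:GCI_explicit_not_rescaled}, namely $\psi_P^\Lambda(A)=(P\cdot\Lambda^TA)\,\bar\psi_0(\Lambda\cdot A)$, perform the Haar-invariant change of variable $B=\Lambda^TA$, and use $\partial_tM_\Lambda=d^{-1}\nu(\Lambda\cdot A)(A\cdot\partial_t\Lambda)M_\Lambda$ (and the analogous formula for $\nabla_xM_\Lambda$). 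This splits the identity into four contributions, proportional to $\partial_t\rho$, $\nabla_x\rho$, $\partial_t\Lambda$ and $\nabla_x\Lambda$. The $\rho$-derivative pieces reduce to integrals of $(P\cdot B)$ against a tensor depending only on $\tr B$ (and on $B\vezero$); by antisymmetry of $P$ and Lemma~\ref{lem-volume-form}, the $\partial_t\rho$-term vanishes, while the $\nabla_x\rho$-term yields a multiple of $\mathbf{p}\cdot(\Lambda\vezero\times\nabla_x\rho)$ with $P=[\mathbf{p}]_\times$, contributing the constant $c_3$.

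\textbf{Derivative-of-$\Lambda$ terms.} For the remaining pieces I fix $(t_0,x_0)$ and write $\Lambda(t,x)=\exp([\mathbf{b}(t,x)]_\times)\Lambda(t_0,x_0)$ with $\mathbf{b}(t_0,x_0)=0$, as in Prop.~\ref{prop:interpretation_delta_r}, so that $\partial_t\Lambda=[\partial_t\mathbf{b}]_\times\Lambda$ at $(t_0,x_0)$ and $\nabla_x\Lambda$ is encoded by $\nabla_x\mathbf{b}$. Using the conjugation $\Lambda^T[\mathbf{u}]_\times\Lambda=[\Lambda^T\mathbf{u}]_\times$ and the inner product normalization~\eqref{eq:properties_asym_matrix}, every remaining $B$-integral depends only on $\theta=\arccos(\tfrac12(\tr B-1))$; via Lemma~\ref{lem-volume-form}, and using the ODE~\eqref{eq:psi_0} for $\widetilde\psi_0$ to absorb the $\theta$-derivative arising after integration by parts, the resulting scalar moments take the form $\langle\,\cdot\,\rangle_{\widetilde m(\theta)\sin^2(\theta/2)}$ with the weight~\eqref{eq:weight_mtilde}. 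The $\vezero$-aligned combinations of $\nabla_x\mathbf{b}$ assemble into $((\Lambda\vezero)\cdot\nabla_x)\Lambda$ (constant $c_2$), while the trace and the antisymmetric part of $\nabla_x\mathbf{b}$ yield $\nabla_x\cdot\mathbf{b}=\delta_x(\Lambda)$ and $\nabla_x\times\mathbf{b}=\rvec_x(\Lambda)$ (constant $c_4$). Letting $\mathbf{p}$ range over $\R^3$ converts the scalar identity into the vector equation~\eqref{eq:macro_lambda}.

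\textbf{Main obstacle.} The technical heart is the systematic computation of matrix moments in $B$ and, in particular, showing that the $\nabla_x\Lambda$-contribution packages into the three specific operators $((\Lambda\vezero)\cdot\nabla_x)\Lambda$, $\delta_x(\Lambda)$ and $\rvec_x(\Lambda)$, rather than into an unstructured bilinear form in $\mathbf{p}$ and the nine components of $\nabla_x\mathbf{b}$. This hinges on (i) the ansatz~\eqref{eq:GCI_explicit} for the GCI, which kills all axis-angle modes except the first spherical harmonic $\mathbf{p}\cdot\nvec$; (ii) the right-invariance of $\mathcal{D}_x$ (Prop.~\ref{prop:interpretation_delta_r}), which allows the identification of $\delta_x(\Lambda)$ and $\rvec_x(\Lambda)$ with $\nabla_x\cdot\mathbf{b}$ and $\nabla_x\times\mathbf{b}$; and (iii) the equation~\eqref{eq:psi_0} for $\widetilde\psi_0$, which is precisely what is needed to merge the GCI weight with $m(\theta)\,\nu(\tfrac12+\cos\theta)$ into the weight $\widetilde m$ of~\eqref{eq:weight_mtilde}.
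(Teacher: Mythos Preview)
Your overall architecture matches the paper's: identify the equilibrium via Lemma~\ref{lem:propertiesQ}, get~\eqref{eq:macro_rho} by integrating in~$A$ and invoking Lemma~\ref{lem:consistency_relation}, and obtain~\eqref{eq:macro_lambda} by testing against the GCIs~\eqref{eq:GCI_explicit_not_rescaled} and splitting into the four contributions from~$\partial_t\rho$, $\partial_t\Lambda$, $\nabla_x\rho$, $\nabla_x\Lambda$. Your use of the local chart~$\Lambda=\exp([\mathbf{b}]_\times)\Lambda_0$ is equivalent to the paper's direct use of~$\mathcal{D}_x(\Lambda)$, since by Prop.~\ref{prop:interpretation_delta_r} one has~$(\mathcal{D}_x(\Lambda))_{ij}=\partial_j\mathbf{b}_i$ at the base point.

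There is, however, a genuine misconception in your point~(iii). The ODE~\eqref{eq:psi_0} plays \emph{no role} in the moment computation; it is only used upstream to establish the ansatz~\eqref{eq:GCI_explicit}. No $\theta$-derivative ever arises: the operators~$\partial_t$ and~$\nabla_x$ act on~$\rho M_\Lambda$ only, producing the factor~$d^{-1}\nu(\Lambda\cdot A)M_\Lambda$ from differentiating the exponent, and after the change~$B=\Lambda^TA$ and Rodrigues' formula one has~$B-B^T=2\sin\theta\,\asymn$. The weight~$\widetilde m(\theta)=\nu(\tfrac12+\cos\theta)\sin^2\theta\,m(\theta)\widetilde\psi_0(\theta)$ then appears by direct multiplication (one~$\sin\theta$ from~$B-B^T$, the other from the antisymmetric part of~$B$ in the inner products), with no integration by parts.

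The real technical heart, which you do not identify, is the fourth-order spherical moment
\[
\int_{S^2}(\nvec\cdot L\nvec)\,(\nvec\otimes\nvec)\,d\nvec=\tfrac{1}{15}(L+L^T)+\tfrac{1}{15}\tr(L)\,\Id,
\]
with~$L=\Lambda^T\mathcal{D}_x(\Lambda)\Lambda$ (Lemma~\ref{lem:auxiliary_compute_X4} in the paper). It is this identity, not the ODE for~$\widetilde\psi_0$, that splits the $\nabla_x\Lambda$-contribution into~$[L\vezero]_\times$ (giving the convective term with~$c_2$) and~$[L^T\vezero+\tr(L)\vezero]_\times$ (giving, after~$\tr L=\delta_x(\Lambda)$ and~$L-L^T=\Lambda^T[\rvec_x(\Lambda)]_\times\Lambda$, the~$c_4$ terms). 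Your claim that ``every remaining $B$-integral depends only on~$\theta$'' is therefore too optimistic: the~$\nvec$-integration is nontrivial and is precisely where the three operators~$((\Lambda\vezero)\cdot\nabla_x)\Lambda$, $\delta_x(\Lambda)$, $\rvec_x(\Lambda)$ emerge.
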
 
\begin{proof}
Suppose that~$f_\eps \rightarrow f$ as~$\eps \rightarrow 0$, then using~\eqref{eq:MFlimit} we get~$Q(f_\eps) = \mathcal{O}(\eps)$, which formally yields~$Q(f)=0$ and by Lemma~\ref{lem:propertiesQ} we have that
$$f=f(x,A,t)=\rho M_\Lambda(A),\quad \text{with}\quad \Lambda=\Lambda(t,x)\in SO(3),\, \rho=\rho(t,x)\geq 0.$$

\medskip
Using the conservation of mass (integrating~\eqref{eq:MFlimit} on~$SO(3)$), we have that
$$\partial_t \rho_\eps + \nabla_x \cdot j[f_{\eps}] = \mathcal{O}(\eps), $$
where
$$\rho_\eps(t,x):= \int_{SO(3)}f_\eps(x,A,t)\, dA, \quad j[f_\eps]:=\int_{SO(3)} A\vezero f_\eps \, dA,$$
and in the limit (formally)
$$\rho_\eps \rightarrow \rho,$$
$$j[\fe] \rightarrow \rho \int_{SO(3)}A\vezero M_\Lambda(A)\, dA=\rho\lambda[M_{\Lambda}]\vezero =\rho c_1 \Lambda \vezero,$$
thanks to Lemma~\ref{lem:consistency_relation}. 
This gives us the continuity equation~\eqref{eq:macro_rho} for~$\rho$.

\medskip
Now, we want to obtain the equation for~$\Lambda$. We write~$\Lambda^\eps=\Lambda[f^\eps]$, and we take~$P\in\mathcal{A}$ a given antisymmetric matrix. We consider the non-constant GCI associated to~$\Lambda^\eps$ and corresponding to~$P$ in~\eqref{eq:GCI_explicit_not_rescaled}:~$\psi^\eps(A)=P\cdot((\Lambda^\eps)^TA)\bar\psi_0(\Lambda^\eps\cdot A)$. Since we have~$\psi^\eps\in GCI(\Lambda[f^\eps])$, we obtain, thanks to the main property~\eqref{eq-property-GCI} of the GCI, that
\[\int_{SO(3)}Q(f^\eps)\psi^\eps dA=0.\]
Multiplying~\eqref{eq:MFlimit} by~$\psi^\eps$, integrating w.r.t.~$A$ on~$SO(3)$ and using the expression of~$\psi^\eps$ as stated above, we obtain
\[\int_{SO(3)} \big(\partial_t f^\eps+A \vezero \cdot \nabla_xf^\eps + \mathcal{O}(\eps) \big) P \cdot \big((\Lambda^\eps)^TA\big)\, \bar \psi_0(\Lambda^\eps \cdot A)\, dA = 0.
\]
Assuming the convergence~$f^\eps\to f$ is sufficiently strong, we get in the limit
\be \label{eq:applied_GCI}
\int_{SO(3)} \big(\partial_t (\rho M_\Lambda)+A \vezero \cdot \nabla_x(\rho M_\Lambda) \big) \big( P \cdot \Lambda^TA\big)\, \bar \psi_0(\Lambda \cdot A)\, dA =0.
\ee
Since \eqref{eq:applied_GCI} is true for any~$P\in \mathcal{A}$, the matrix
$$
\int_{SO(3)} \big(\partial_t (\rho M_\Lambda)+A \vezero \cdot \nabla_x(\rho M_\Lambda) \big)\, \bar \psi_0(\Lambda \cdot A)\, \Lambda^TA\, dA =0.
$$
is orthogonal to all antisymmetric matrices. Therefore, it must be a symmetric matrix, meaning that we have
\be \label{eq:expression_to_compute}
X:=\int_{SO(3)} \big(\partial_t (\rho M_\Lambda)+A \vezero \cdot \nabla_x(\rho M_\Lambda) \big)\, \bar\psi_0(\Lambda \cdot A)\,(\Lambda^T A-A^T\Lambda) \, dA=0.
\ee

\medskip
We have with the definition of~$M_\Lambda$ in  \eqref{eq:Von_Mises_equilibria} that
\begin{eqnarray*}
\partial_t (\rho M_\Lambda) &=& M_\Lambda(\partial_t \rho +d^{-1}\nu(\Lambda\cdot A) \rho(A\cdot \partial_t \Lambda)),\\
(A \vezero \cdot \nabla_x) (\rho M_\Lambda) &=& M_\Lambda  \lp A\vezero \cdot \nabla_x \rho + d^{-1} \nu(\Lambda\cdot A) \, \rho (A\cdot (A\vezero \cdot \nabla_x) \Lambda) \rp.
\end{eqnarray*}
Inserting the two previous expressions into \eqref{eq:expression_to_compute}, we compute separately each component of~$X$ defined by:
\begin{eqnarray*}
X_1&:=& \int_{SO(3)} \partial_t \rho  M_\Lambda \, \bar\psi_0(\Lambda \cdot A)\,(\Lambda^T A-A^T\Lambda)\,  dA,\\
X_2 &:=& \int_{SO(3)} d^{-1}\nu(\Lambda\cdot A) \rho (A \cdot \partial_t \Lambda)  M_\Lambda \, \bar\psi_0(\Lambda \cdot A)\,(\Lambda^T A-A^T\Lambda) \, dA,\\
X_3 &:=& \int_{SO(3)} A\vezero \cdot \nabla_x \rho\,   M_\Lambda \, \bar\psi_0(\Lambda \cdot A)\,(\Lambda^T A-A^T\Lambda) \, dA,\\
X_4 &:=& \int_{SO(3)} d^{-1} \nu(\Lambda\cdot A) \, \rho (A\cdot (A\vezero \cdot \nabla_x) \Lambda) M_\Lambda\, \bar\psi_0(\Lambda \cdot A)\, (\Lambda^T A-A^T\Lambda) \, dA,
\end{eqnarray*}
so~$X= X_1 +X_2 +X_3+X_4$. 

\medskip 
For the first term we have (changing variables~$B=\Lambda^T A$):
\be \nn
X_1 =\partial_t \rho \int_{SO(3)}   M_{\Id}(B)\,\bar\psi_0(\Id\cdot B)\, (B-B^T) \, dB =0
\ee
since both~$M_{\Id}(B)$ and~$\bar \psi_0(\Id\cdot B)$ are invariant by the change~$B\mapsto B^T$.

% \medskip

% In the following, we will make use of the following:
% \begin{lemma}
% For a scalar function~$a=a(\theta,\nvec)$ we have that
%~$$\int_{(0,\pi)\times S^2}a(n,\theta)\asymn\, d\theta d\nvec =\left[\int_{(0,\pi)\times S^2} a(n,\theta)\nvec d\theta d\nvec\right]_\times.$$
% \end{lemma}
% This lemma holds because the mapping~$\nvec \mapsto \asymn$ is linear.

% \medskip

For the  term~$X_2$ we make the change of variables~$B=\Lambda^T A$ and compute
\beqar
X_2 &=&  \rho\int_{SO(3)} d^{-1}\nu(\Id\cdot B) (\Lambda B \cdot \partial_t \Lambda)  M_{\Id}(B)\bar\psi_0(\Id \cdot B) (B-B^T) \, dB\\
&=&\frac{2d^{-1}\rho}{\pi Z}\int_{(0,\pi)\times S^2}\lp \Lambda \big( \Id + \sin\theta \asymn + (1-\cos\theta) [\nvec]_\times^2\big) \rp \cdot \partial_t \Lambda\\ 
&&\hspace{4cm}\sin^2(\theta/2)\,\nu(\tfrac12+\cos\theta) m(\theta)\,\widetilde\psi_0(\theta)\,2\sin\theta\,\asymn\, d\theta d\nvec,
\eeqar
where we have used the expression of the Haar measure~$dB = \frac{2}{\pi}\sin^2(\theta/2)d\theta d\nvec$ (see Lemma~\ref{lem-volume-form}) and that writing~$B=B(\theta, \nvec)=\Id + \sin\theta \asymn + (1-\cos\theta) [\nvec]_\times^2$ thanks to Rodrigues' formula~\eqref{eq:Rodrigues_formula}, we have~$B-B^T = 2\sin\theta \asymn$. Removing odd terms with respect to the change~$\nvec \mapsto -\nvec$, we obtain 
\[
X_2= \frac{4d^{-1}\rho}{\pi Z} \int_{(0,\pi)\times S^2} \,\nu(\tfrac12+\cos\theta) \sin^2\theta \,m(\theta)\widetilde\psi_0(\theta)\sin^2(\theta/2) (\Lambda \asymn \cdot \partial_t \Lambda)\, \asymn \, d\theta d\nvec.\]
Now since~$\partial_t\Lambda\in T_\Lambda$, we have~$\Lambda^T \partial_t\Lambda\in \mathcal{A}$ (antisymmetric, see Prop.~\ref{prop:tangentspaceSOn}), and so
$$\Lambda^T \partial_t\Lambda= [\pmb{\lambda_t}]_\times$$
for some vector~$\pmb{\lambda_t}$.
Therefore
$$(\Lambda \asymn)\cdot \partial_t \Lambda= \asymn \cdot (\Lambda^T \partial_t \Lambda) = \asymn \cdot [\pmb{\lambda}_t]_\times= (\nvec \cdot \pmb{\lambda_t}).$$
So using the definition~\eqref{eq:weight_mtilde} of~$\widetilde{m}(\theta)$, we get
\beqar
X_2&=& \frac{4d^{-1}\rho}{\pi Z}\int_{(0,\pi)\times S^2} \widetilde{m}(\theta)\sin^2(\theta/2) (\nvec\cdot \pmb{\lambda_t})\, \asymn \, d\theta d\nvec\\
&=& \frac{4d^{-1}\rho}{\pi Z} \left[\int_{(0,\pi)\times S^2} \widetilde{m}(\theta)\sin^2(\theta/2) (\nvec\cdot \pmb{\lambda_t})\, \nvec \, d\theta d\nvec \right]_\times\\
&=& \frac{4d^{-1}\rho}{3\pi Z} \lp \int^\pi_0  \widetilde{m}(\theta)\sin^2(\theta/2)\, d\theta \rp [\pmb{\lambda_t}]_\times,
\eeqar
because the mapping~$\ww \mapsto [\ww]_\times$ is linear, and~$\int_{S^2}\nvec\otimes \nvec \, d\nvec = \frac13\Id$.

Denote by
\[C_2 :=\frac{4d^{-1}}{3\pi Z} \lp \int^\pi_0  \widetilde{m}(\theta)\sin^2(\theta/2)\, d\theta \rp,\]
then we conclude that
$$X_2= C_2\rho\Lambda^T\partial_t \Lambda.$$
\medskip

Now, for the  term~$X_3$ we compute the following, starting again by the change of variables~$B=\Lambda^TA$:
\beqar
X_3 &=& \int_{SO(3)} (\Lambda B \vezero\cdot \nabla_x \rho) M_{\Id}(B) \,\bar\psi_0(Id \cdot B)\, (B-B^T) \, dB\\
&=& \frac{4}{\pi Z}\int_{(0,\pi)\times S^2}m(\theta)\,\widetilde\psi_0(\theta)\,\sin\theta\,\sin^2(\theta/2) \\
&&\hspace{3cm}\lp\Lambda\big(\Id +\sin\theta\asymn+(1-\cos\theta)[\nvec]_\times^2\big) \vezero\cdot \nabla_x\rho \rp\, \asymn\, d\theta d\nvec\\
&=& \frac{4}{\pi Z}\int_{(0,\pi)\times S^2}m(\theta)\,\widetilde\psi_0(\theta)\,\sin^2\theta\,\sin^2(\theta/2)\lp\Lambda \asymn \vezero\cdot \nabla_x\rho \rp\, \asymn\, d\theta d\nvec\\
&=& \frac{4}{\pi Z}\left[\int_{(0,\pi)\times S^2}\frac{\widetilde{m}(\theta)}{\nu(\frac12+\cos\theta)}\,\sin^2(\theta/2)\big( \nvec \cdot (\vezero\times \Lambda^T\nabla_x\rho)\big) \, \nvec d\theta d\nvec\right]_\times\\
&=&\frac{4}{3\pi Z}\lp \int^\pi_0 \frac{\widetilde{m}(\theta)}{\nu(\frac12+\cos\theta)}\sin^2(\theta/2)\, d\theta \rp [\vezero \times \Lambda^T \nabla_x \rho]_\times,
\eeqar
where we used similar considerations as for~$X_2$, as well as that
$$\Lambda \asymn \vezero\cdot \nabla_x \rho  =\asymn\vezero \cdot (\Lambda^T\nabla_x\rho) = (\nvec\times \vezero) \cdot (\Lambda^T\nabla_x\rho) = \nvec \cdot (\vezero\times \Lambda^T \nabla_x\rho).$$
Denote by
\[C_3 :=\frac{4}{3\pi Z}\lp \int^\pi_0 \frac{\widetilde{m}(\theta)}{\nu(\frac12+\cos\theta)}\sin^2(\theta/2)\, d\theta\rp,\]
then 
$$X_3=C_3[\vezero \times \Lambda^T\nabla_x\rho]_x.$$

\medskip

We now compute~$X_4$ in the same way, with the change of variables~$B=\Lambda^T A$:
\[X_4 = \rho d^{-1}\int_{SO(3)} \big( \nu(\Id\cdot B)  (\Lambda B\cdot (\Lambda B\vezero\cdot \nabla_x)\Lambda)\big) M_{\Id}(B)\,\bar\psi_0(\Id\cdot B)\, (B-B^T) \, dB\,.\]
We now use the definition of~$\mathcal{D}_x(\Lambda)$ given in~\eqref{eq-def-D} to get
\[X_4=\rho d^{-1} \int_{SO(3)} \big( \nu(\Id\cdot B) (\Lambda B\cdot ([\mathcal{D}_x(\Lambda)\Lambda B\vezero]_\times\Lambda)\big) M_{\Id}(B) (B-B^T) \bar\psi_0(\Id\cdot B)\, dB\,.\]

Using the fact that~$\Lambda^T[\ww]_\times=[\Lambda^T\ww]_\times\Lambda^T$ for all~$\ww\in\R^3$, we have
\[\Lambda B\cdot ([\mathcal{D}_x(\Lambda)\Lambda B\vezero]_\times\Lambda)=B\cdot [\Lambda^T\mathcal{D}_x(\Lambda)\Lambda B\vezero]_\times.\]
To simplify the notations, we denote~$L=\Lambda^T\mathcal{D}_x(\Lambda)\Lambda$. 
Since the symmetric part of~$B$ does not contribute to the scalar product $B\cdot[LB\vezero]_\times$, we get
\[\Lambda B\cdot ([\mathcal{D}_x(\Lambda)\Lambda B\vezero]_\times\Lambda)=B \cdot [LB \vezero]_\times= \sin\theta\,\asymn \cdot [LB\vezero]_\times = \sin\theta\, \nvec \cdot LB\vezero,\]
Therefore we obtain, in the same manner as before,
\[X_4=\frac{4\rho d^{-1}}{\pi Z}\int^\pi_0 \widetilde{m}(\theta)\sin^2(\theta/2) \left[ \int_{S^2} \Big(\nvec \cdot\big( L( \Id + \sin\theta\asymn+(1-\cos\theta)[\nvec]_\times^2)\vezero\big)\Big)\, \nvec \, d\nvec\right]_\times d\theta,
\]
and we have to know the value of
\beqar
\mathbf{y}(\theta)&:=& \int_{S^2} \Big(\nvec \cdot\big( L( \Id + \sin\theta\asymn+(1-\cos\theta)[\nvec]_\times^2)\vezero\big)\Big)\, \nvec \, d\nvec\\
&=&\int_{S^2} \bigg(\nvec \cdot\Big( L\lp \cos\theta \vezero + (1-\cos\theta)(\nvec\cdot\vezero)\, \nvec \rp \Big)\bigg)\nvec\, d\nvec\\
&=&\frac{1}{3}\cos\theta L\vezero+ (1-\cos\theta)\lp\int_{S^2} \nvec \cdot L\nvec \, (\nvec \otimes \nvec) \, d\nvec \rp \vezero,
\eeqar
where the term involving~$\asymn$ vanishes since its integrand is odd with respect to~$\nvec\mapsto-\nvec$.

To compute the second term of this expression we will make use of the following lemma proved at the end of this section:
\begin{lemma}
\label{lem:auxiliary_compute_X4} For a given matrix~$L\in \mathcal{M}$, we have
	$$\int_{S^2} \nvec \cdot L\nvec \, (\nvec \otimes \nvec)\, d\nvec = \frac{1}{15}(L+L^T)+ \frac{1}{15}\tr(L)\Id.$$
\end{lemma}
Using this lemma we have that
\beqar
\mathbf{y}(\theta) &=& \tfrac{1}{3}\cos\theta \,L\vezero + (1-\cos\theta)\big( \tfrac{1}{15}(L+L^T) + \tfrac{1}{15}\tr(L) \Id \big)\vezero\\
&=&\tfrac{1}{15}(1+4\cos\theta) L \vezero  + \tfrac{1}{15}(1-\cos\theta)\big( L^T\vezero+\tr(L)\vezero\big).
\eeqar
Therefore we obtain
\beqar
X_4 &=&\frac{4\rho d^{-1}}{\pi Z}\int^\pi_0 \widetilde{m}(\theta)\sin^2(\theta/2) [\mathbf{y}(\theta)]_\times \, d\theta\\
&=& \frac{4\rho d^{-1}}{15\pi Z}\int^\pi_0 \widetilde{m}(\theta)\sin^2(\theta/2) \big((1+4\cos\theta)[L\vezero]_\times +(1-\cos\theta)[ L^T\vezero+\tr(L)\vezero]_\times\big) d\theta\\
&=& \rho \big( C_4  [L\vezero]_\times+ C_5[ L^T\vezero+\tr(L)\vezero]_\times \big)
\eeqar
for 
\beqar
 C_4 &:=& \frac{4 d^{-1}}{15\pi Z}\int^\pi_0 \widetilde{m}(\theta)\sin^2(\theta/2) (1+4\cos\theta) \, d\theta,\\
C_5 &:=& \frac{4 d^{-1}}{15\pi Z}\int^\pi_0 \widetilde{m}(\theta)\sin^2(\theta/2) (1-\cos\theta) \, d\theta.
\eeqar
\bigskip

Finally putting all the terms together we have that
\beqar
0= X&=& X_1+X_2+X_3+X_4\\
&=&  C_2 \rho \Lambda^T\partial_t \Lambda+C_3 [\vezero \times \Lambda^T \nabla_x\rho]_\times  +\rho C_4 [L\vezero]_\times + \rho C_5[ L^T\vezero+\tr(L)\vezero]_\times.
\eeqar
In particular~$\Lambda X = 0$ and from the fact that~$\Lambda[\ww]_\times = [\Lambda\ww]_\times \Lambda$ for any~$\ww\in\R^3$ we get
\begin{equation}\label{eqL1}
0=\Lambda X = C_2 \rho \partial_t\Lambda +C_3 [(\Lambda\vezero) \times  \nabla_x\rho]_\times \Lambda +C_4\rho [\Lambda L\vezero ]_\times \Lambda + C_5\rho[ \Lambda L^T\vezero+\tr(L)\Lambda\vezero]_\times \Lambda.
\end{equation}
Since we have taken~$L=\Lambda^T\mathcal{D}_x(\Lambda)\Lambda$, we get that~$\tr(L)=\tr\big(\mathcal{D}_x(\Lambda)\big)=\delta_x(\Lambda)$ and, thanks to~\eqref{def-delta-r}:
\[[\Lambda L^T\vezero]_\times=[\mathcal{D}_x(\Lambda)^T\Lambda\vezero]_\times=[(\mathcal{D}_x(\Lambda)-[\rvec_x(\Lambda)]_\times)\Lambda\vezero]_\times\]
Furthermore, we have~$[\Lambda L\vezero]_\times\Lambda=[\mathcal{D}_x(\Lambda)\Lambda\vezero]_\times\Lambda=\big((\Lambda \vezero) \cdot\nabla_x\big)\Lambda$ thanks to the definition of~$\mathcal{D}_x$ given in~\eqref{eq-def-D}.
Finally, inserting these expressions into~\eqref{eqL1} and dividing by~$C_2$, we get the equation
\[
\rho\Big( \partial_t\Lambda + c_2\big((\Lambda \vezero) \cdot\nabla_x\big)\Lambda \Big)+ c_3[(\Lambda \vezero) \times \nabla_x\rho]_\times \Lambda + c_4\rho[-\rvec_x(\Lambda)\times (\Lambda\vezero) + \delta_x(\Lambda) \, \Lambda \vezero]_\times \Lambda=0,
\]
for
\beqar
c_2&=& \frac{C_4+C_5}{C_2}=\tfrac15\langle 2+3\cos\theta\rangle_{\widetilde{m}(\theta)\sin^2(\theta/2)}, \\
 c_3 &=& \frac{C_3}{C_2}=d\langle \nu(\tfrac12+\cos\theta)^{-1}\rangle_{\widetilde{m}(\theta)\sin^2(\theta/2)}, \\
 c_4&=& \frac{C_5}{C_2}=\tfrac15\langle 1-\cos\theta\rangle_{\widetilde{m}(\theta)\sin^2(\theta/2)},
\eeqar
which ends the proof.
\end{proof}      
    
\begin{proof}[Proof of Lemma~\ref{lem:auxiliary_compute_X4}]
Denote by~$\mathcal{I}(L)$ the integral that we want to compute
$$\mathcal{I}(L):= \int_{S^2}\nvec \cdot L\nvec\, (\nvec\otimes \nvec)\, d\nvec,$$
then, written in components, we have
\beqar
\mathcal{I}(L)_{ij} &=& \int_{S^2}(\nvec \cdot L\nvec) \, (\mathbf{e_i} \cdot \nvec) \, (\mathbf{e_j}\cdot \nvec) \, d\nvec\\
&=&  \left\{\begin{array}{ll}
(L_{ij}+L_{ji}) \int_{S^2}(\mathbf{e_i}\cdot \nvec)^2(\mathbf{e_j}\cdot \nvec)^2 \, d\nvec & \mbox{ if } i\neq j\\
\sum_{k} L_{kk}\int_{S^2}(\mathbf{e_k}\cdot \nvec)^2(\mathbf{e_i}\cdot \nvec)^2 \, d\nvec & \mbox{ if } i=j\\ 
\end{array} \right. \\
&=&\left\{\begin{array}{ll}
\frac{1}{15}(L_{ij}+L_{ji})  & \mbox{ if } i\neq j\\
\frac{1}{15}\sum_{k} L_{kk} + \frac{2}{15} L_{ii} & \mbox{ if } i=j\\ 
\end{array} \right. \\
&=&\frac{1}{15}\lp L_{ij}+L_{ji}\rp +\left\{\begin{array}{ll}
0  & \mbox{ if } i\neq j\\
\frac{1}{15}\sum_{k} L_{kk}  & \mbox{ if } i=j\\ 
\end{array} \right. \,,
\eeqar
from which we conclude the lemma.
In the computations we used that 
\beqar
\text{for }i\neq j, && \int_{S^2}(\mathbf{e_i}\cdot \nvec)^2(\mathbf{e_j}\cdot \nvec)^2 \, d\nvec =\frac{1}{4\pi} \int_{[0,\pi]\times[0,2\pi]}\sin^3 \phi\cos^2\psi\cos^2\phi \, d\phi d\psi =\frac{1}{15};\\
\text{for }k=i, && \int_{S^2}(\mathbf{e_k}\cdot \nvec)^4 \, d\nvec = \frac{1}{4\pi}\int_{[0,\pi]\times[0,2\pi]}\cos^4\phi\sin\phi\, d\phi d\psi = \frac{1}{5}.
\eeqar

\end{proof}     

Finally, we consider the orthonormal basis given by 
\[ %\label{eq:Lambda_Basis}
\{\Lambda \vezero =:\Omega,\, \Lambda \mathbf{e_2}=:\uu, \, \Lambda\mathbf{e_3}=:\vv \},\]
where~$\{\mathbf{e_1}, \, \mathbf{e_2},\, \mathbf{e_3}\}$ is the canonical basis of~$\R^3$. We can have an expression of the operators~$\delta_x$ and~$\rvec_x$ in terms of these unit vectors~$\{\Omega,\uu,\vv\}$, which allows to rewrite the evolution equation of~$\Lambda$ as three evolution equations for these vectors. 

\begin{proposition} We have
\label{prop:evol-omega-u-v}
\begin{align}
\delta_x(\Lambda) &= [(\Omega \cdot \nabla_x) \uu] \cdot \vv + [(\uu\cdot \nabla_x)\vv]\cdot \Omega + [(\vv\cdot \nabla_x)\Omega]\cdot \uu, \label{eq-delta-omega-u-v}\\
\rvec_x(\Lambda)&= (\nabla_x \cdot \Omega)\Omega + (\nabla_x \cdot\uu) \uu + (\nabla_x \cdot \vv) \vv.\label{eq-r-omega-u-v}
\end{align}

Consequently, we have the following evolution equations for~$\Omega$,~$\uu$, and~$\vv$, corresponding to the evolution equation of~$\Lambda$ given in~\eqref{eq:macro_lambda}:
\beqarl \label{eq:Omega}
&&\rho D_t\Omega + P_{\Omega^\perp}\Big( c_3 \nabla_x\rho +c_4 \rho \big((\nabla_x\cdot \uu)\,\uu +  (\nabla_x\cdot \vv)\,\vv\big) \Big) =0,\\ 
&& \rho D_t \uu - \lp c_3\, \uu \cdot \nabla_x \rho + c_4 \rho \nabla_x \cdot \uu\rp \Omega + c_4\rho \,\delta_x(\Omega,\uu,\vv) \,\vv =0, \label{eq:u}\\
&&\rho D_t \vv -  \lp c_3\, \vv \cdot \nabla_x \rho + c_4 \rho \nabla_x \cdot \vv\rp \Omega - c_4 \rho \,\delta_x(\Omega,\uu,\vv) \,\uu =0, \label{eq:v}
\eeqarl
where~$D_t := \partial_t + c_2 (\Omega\cdot \nabla_x)$, and where~$\delta_x(\Omega,\uu,\vv)$ is the expression of~$\delta_x(\Lambda)$ given by~\eqref{eq-delta-omega-u-v}.

\end{proposition}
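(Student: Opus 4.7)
The plan is to handle~\eqref{eq-delta-omega-u-v} and~\eqref{eq-r-omega-u-v} first, and then derive the evolution equations for $\Omega,\uu,\vv$ by applying~\eqref{eq:macro_lambda} columnwise (i.e., to $\vezero,\mathbf{e_2},\mathbf{e_3}$). The common starting point is the defining relation~\eqref{eq-def-D}, which upon applying to each canonical basis vector yields the three identities
\[(\ww\cdot\nabla_x)\Omega=\mathcal{D}_x(\Lambda)\ww\times\Omega,\quad (\ww\cdot\nabla_x)\uu=\mathcal{D}_x(\Lambda)\ww\times\uu,\quad (\ww\cdot\nabla_x)\vv=\mathcal{D}_x(\Lambda)\ww\times\vv\]
for every $\ww\in\R^3$. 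For~\eqref{eq-delta-omega-u-v} I would write the trace $\delta_x(\Lambda)=\tr(\mathcal{D}_x(\Lambda))$ in the orthonormal basis $\{\Omega,\uu,\vv\}$ of~$\R^3$, so that $\delta_x(\Lambda)=\Omega\cdot\mathcal{D}_x(\Lambda)\Omega+\uu\cdot\mathcal{D}_x(\Lambda)\uu+\vv\cdot\mathcal{D}_x(\Lambda)\vv$, and then use the above identity to rewrite each diagonal coefficient: for instance, $[(\Omega\cdot\nabla_x)\uu]\cdot\vv=(\mathcal{D}_x(\Lambda)\Omega\times\uu)\cdot\vv=\mathcal{D}_x(\Lambda)\Omega\cdot(\uu\times\vv)=\Omega\cdot\mathcal{D}_x(\Lambda)\Omega$ since $(\Omega,\uu,\vv)$ is a direct orthonormal basis. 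The two remaining diagonal entries are handled in the same way by cyclic permutation, which gives exactly~\eqref{eq-delta-omega-u-v}.

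For~\eqref{eq-r-omega-u-v} I would compute the Euclidean divergence of each frame vector: $\nabla_x\cdot\Omega=\sum_i\mathbf{e_i}\cdot\partial_i\Omega=\sum_i\mathbf{e_i}\cdot\bigl(\mathcal{D}_x(\Lambda)\mathbf{e_i}\times\Omega\bigr)=\Omega\cdot\sum_i\bigl(\mathbf{e_i}\times\mathcal{D}_x(\Lambda)\mathbf{e_i}\bigr)$ after cyclic permutation of the mixed product. The key ingredient is then the elementary linear-algebra identity
\[\sum_{i=1}^3 \mathbf{e_i}\times M\mathbf{e_i}=\mathbf{a}\quad\text{whenever}\quad M-M^T=[\mathbf{a}]_\times,\]
which I would prove by checking that symmetric matrices contribute $0$ (by swapping indices in the antisymmetric Levi-Civita tensor), while for $M=[\mathbf{a}]_\times$ the expansion $\mathbf{e_i}\times(\mathbf{a}\times\mathbf{e_i})=\mathbf{a}-a_i\mathbf{e_i}$ gives $\sum_i=2\mathbf{a}$; halving accounts for the factor $2$ in $M-M^T=2M_{\text{anti}}$. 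Specialized to $M=\mathcal{D}_x(\Lambda)$ and recalling $[\rvec_x(\Lambda)]_\times=\mathcal{D}_x(\Lambda)-\mathcal{D}_x(\Lambda)^T$, this yields $\nabla_x\cdot\Omega=\Omega\cdot\rvec_x(\Lambda)$, and analogously $\nabla_x\cdot\uu=\uu\cdot\rvec_x(\Lambda)$ and $\nabla_x\cdot\vv=\vv\cdot\rvec_x(\Lambda)$. Decomposing $\rvec_x(\Lambda)$ in the orthonormal basis $\{\Omega,\uu,\vv\}$ gives~\eqref{eq-r-omega-u-v}.

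For the evolution system I would apply~\eqref{eq:macro_lambda} to $\mathbf{e_1},\mathbf{e_2},\mathbf{e_3}$. The transport part becomes $\rho D_t\Omega$, $\rho D_t\uu$, $\rho D_t\vv$, respectively. For the bracket term, write $\mathbf{w}:=\Omega\times(c_3\nabla_x\rho+c_4\rho\,\rvec_x(\Lambda))+c_4\rho\,\delta_x(\Lambda)\Omega$, so that $[\mathbf{w}]_\times\Lambda\mathbf{e_i}=\mathbf{w}\times(\Lambda\mathbf{e_i})$; then apply the vector triple product $(\mathbf{a}\times\mathbf{b})\times\mathbf{c}=(\mathbf{a}\cdot\mathbf{c})\mathbf{b}-(\mathbf{b}\cdot\mathbf{c})\mathbf{a}$ with $\mathbf{a}=\Omega$, using $\Omega\times\Omega=0$, $\Omega\times\uu=\vv$, $\Omega\times\vv=-\uu$. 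For $\mathbf{e_1}$ this produces $P_{\Omega^\perp}(c_3\nabla_x\rho+c_4\rho\,\rvec_x(\Lambda))$, from which~\eqref{eq-r-omega-u-v} allows to drop the $\Omega$-component of $\rvec_x(\Lambda)$ and recover~\eqref{eq:Omega}. For $\mathbf{e_2}$ and $\mathbf{e_3}$ one gets respectively $-(\mathbf{v}\cdot\uu)\Omega+c_4\rho\,\delta_x(\Lambda)\vv$ and $-(\mathbf{v}\cdot\vv)\Omega-c_4\rho\,\delta_x(\Lambda)\uu$ (with $\mathbf{v}=c_3\nabla_x\rho+c_4\rho\,\rvec_x(\Lambda)$), and substituting $\rvec_x(\Lambda)\cdot\uu=\nabla_x\cdot\uu$ and $\rvec_x(\Lambda)\cdot\vv=\nabla_x\cdot\vv$ yields~\eqref{eq:u}--\eqref{eq:v}. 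The only step that is not pure bookkeeping is the linear-algebra identity for $\sum_i\mathbf{e_i}\times M\mathbf{e_i}$ used in Step~3: once this is in hand, the rest is a matter of carefully applying the triple product rule and keeping track of signs in the direct orthonormal frame $(\Omega,\uu,\vv)$.
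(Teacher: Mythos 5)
Your proposal is correct, and for the two identities \eqref{eq-delta-omega-u-v}--\eqref{eq-r-omega-u-v} it takes a genuinely different (and arguably cleaner) route than the paper. For $\delta_x(\Lambda)$ the paper computes $\tr(\mathcal{D}_x(\Lambda))=\tr(\Lambda^T\mathcal{D}_x(\Lambda)\Lambda)$ and expands the resulting matrix inner products as a double sum over columns, using $A\cdot B=\frac12\sum_i A\mathbf{e_i}\cdot B\mathbf{e_i}$ and then cancelling terms via $(\Omega\cdot\nabla_x)(\uu\cdot\vv)=0$ and its permutations; you instead write the trace directly in the moving orthonormal frame and convert each diagonal entry $\Omega\cdot\mathcal{D}_x(\Lambda)\Omega$, etc., with one application of the cyclic property of the scalar triple product, which shortcuts the bookkeeping. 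For $\rvec_x(\Lambda)$ the paper proves $\rvec_x(\Lambda)\cdot\Omega=\nabla_x\cdot\Omega$ by a chain of frame manipulations starting from $\rvec_x(\Lambda)\cdot(\uu\times\vv)$ and ending with the expression of the divergence in the basis $\{\Omega,\uu,\vv\}$; you isolate the general linear-algebra identity $\sum_i\mathbf{e_i}\times M\mathbf{e_i}=\mathbf{a}$ whenever $M-M^T=[\mathbf{a}]_\times$ (correctly verified: symmetric parts vanish by antisymmetry of the Levi-Civita symbol, and $M=[\mathbf{a}]_\times$ contributes $2\mathbf{a}$), which makes the mechanism behind $\nabla_x\cdot\Omega=\rvec_x(\Lambda)\cdot\Omega$ transparent and reusable. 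The derivation of the evolution equations from~\eqref{eq:macro_lambda} is essentially identical in both treatments: apply the equation to $\mathbf{e_1},\mathbf{e_2},\mathbf{e_3}$ and use the vector triple product together with $\Omega\times\uu=\vv$, $\Omega\times\vv=-\uu$. Your signs check out throughout.
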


\begin{proof}
We first prove~\eqref{eq-delta-omega-u-v}. We have
\beqar
\delta_x(\Lambda) &=&\tr(\mathcal{D}_x(\Lambda))= \tr(\Lambda^T\mathcal{D}_x(\Lambda)\Lambda)=\sum_{k} \Lambda^T\mathcal{D}_x(\Lambda)\Lambda\mathbf{e_k}\cdot \mathbf{e_k} =\sum_{k} \big(\mathcal{D}_x(\Lambda)\Lambda\mathbf{e_k}\big)\cdot \Lambda\mathbf{e_k}\\
&=& \sum_{k} [\mathcal{D}_x(\Lambda)\Lambda\mathbf{e_k}]_\times \cdot [\Lambda \mathbf{e_k}]_\times=\sum_{k}[\mathcal{D}_x(\Lambda)\Lambda\mathbf{e_k}]_\times\Lambda  \cdot [\Lambda \mathbf{e_k}]_\times\Lambda\\
&=&\sum_{k} \lp(\Lambda \mathbf{e_k}\cdot \nabla_x)\Lambda \rp \cdot [\Lambda \mathbf{e_k}]_\times \Lambda,
\eeqar
thanks to the definition of~$\mathcal{D}_x$ given in \eqref{eq-def-D}. Now we use the fact that for two matrices~$A$,~$B$, we have~$A\cdot B=\frac12\tr(A^TB)=\frac12\sum_{i}A\mathbf{e_i}\cdot B\mathbf{e_i}$ (half the sum of the scalar products of the corresponding columns of the matrices~$A$ and~$B$), to get
\beqar
\delta_x(\Lambda)&=&\frac{1}{2}\sum_{k}\sum_{i} \big[\lp \Lambda \mathbf{e_k}\cdot \nabla_x\rp (\Lambda \mathbf{e_i})\big]\, \cdot \, \big[(\Lambda \mathbf{e_k})\times (\Lambda \mathbf{e_i})\big]\\
&=&\frac{1}{2}\Big((\Omega \cdot \nabla_x)\uu \cdot \vv - (\uu\cdot \nabla_x)\Omega \cdot \vv  - (\Omega\cdot \nabla_x) \vv\cdot \uu \\
&&\hspace{1cm}+ (\vv\cdot \nabla_x)\Omega\cdot \uu + (\uu\cdot \nabla_x)\vv\cdot \Omega - (\vv\cdot\nabla_x)\uu\cdot \Omega \Big)\\
&=& [(\Omega \cdot \nabla_x) \uu]\cdot \vv +[(\uu\cdot \nabla_x)\vv]\cdot \Omega +[(\vv\cdot \nabla_x)\Omega]\cdot \uu\, .
\eeqar
 For this last equality we used the fact that 
\[% \label{eq:derivates_orthogonal}
0=(\Omega \cdot \nabla_x)(\uu\cdot \vv) =(\Omega \cdot \nabla_x)\uu\cdot \vv + (\Omega \cdot \nabla_x) \vv \cdot \uu\]
since~$\uu \perp \vv$ and analogously for the other components.

\medskip
We proceed next to proving the expression of $\rvec_x(\Lambda)$ given by~\eqref{eq-r-omega-u-v}. We first prove that~$\rvec_x(\Lambda)\cdot\Omega=\nabla_x\cdot\Omega$. We have (recall that~$[\rvec_x(\Lambda)]_\times=\mathcal{D}_x(\Lambda)-\mathcal{D}_x(\Lambda)^T$ and that for all~$\ww$ in~$\R^3$,~$\ww\cdot\nabla_x\Lambda=[\mathcal{D}_x(\Lambda)\ww]_\times\Lambda$):
\begin{align*}
\rvec_x(\Lambda)\cdot\Omega&=\rvec_x(\Lambda)\cdot(\uu\times\vv)=\vv\cdot([\rvec_x(\Lambda)]_\times\uu)=\vv\cdot\big(\mathcal{D}_x(\Lambda)-\mathcal{D}_x(\Lambda)^T\big)\uu\\
&=\vv\cdot\mathcal{D}_x(\Lambda)\uu-\uu\cdot\mathcal{D}_x(\Lambda)\vv\\
&=(\Omega\times\uu)\cdot\mathcal{D}_x(\Lambda)\uu+(\Omega\times\vv)\cdot\mathcal{D}_x(\Lambda)\uu\\
&=[\mathcal{D}_x(\Lambda)\uu]_\times\Omega\cdot\uu+[\mathcal{D}_x(\Lambda)\vv]_\times\Omega\cdot\vv\\
&=[\mathcal{D}_x(\Lambda)\uu]_\times\Lambda\mathbf{e_1}\cdot\uu+[\mathcal{D}_x(\Lambda)\vv]_\times\Lambda\mathbf{e_1}\cdot\vv\\
&=\big((\uu\cdot\nabla_x)\Lambda\mathbf{e_1}\big)\cdot\uu+\big((\vv\cdot\nabla_x)\Lambda\mathbf{e_1}\big)\cdot\vv\\
&=\big((\uu\cdot\nabla_x)\Omega\big)\cdot\uu+\big((\vv\cdot\nabla_x)\Omega\big)\cdot\vv.
\end{align*}
Since~$(\Omega\cdot\nabla_x)\Omega$ is orthogonal to~$\Omega$, we therefore get
\begin{align*}
\rvec_x(\Lambda)\cdot\Omega&=\big((\Omega\cdot\nabla_x)\Omega\big)\cdot\Omega+\big((\uu\cdot\nabla_x)\Omega\big)\cdot\uu+\big((\vv\cdot\nabla_x)\Omega\big)\cdot\vv\\
&= \sum_{i,k,j} \Lambda_{ik}\partial_i \Omega_j \Lambda_{jk} =\sum_{i,j}\partial_i \Omega_j\sum_{k} \Lambda_{ik}\Lambda_{kj}^T=\sum_{i}\partial_i \Omega_i=\nabla_x\cdot\Omega,\nonumber
\end{align*}
since~$\Lambda\Lambda^T=\Id$ (the first line is actually the expression of the divergence of~$\Omega$ in the basis~$\{\Omega, \uu, \vv\}$).
For the other two components of~$\rvec_x(\Lambda)$, we perform exactly the same computations with a circular permutation of the roles of~$\Omega,\uu,\vv$ to get~$\rvec_x(\Lambda) \cdot \uu = \nabla_x \cdot \uu$ and~$\rvec_x(\Lambda)\cdot \vv = \nabla_x\cdot \vv$. Therefore we obtain~\eqref{eq-r-omega-u-v}.

Finally we rewrite the equation for~$\Lambda$ as the evolution of the basis~$\{\Omega,\uu, \vv\}$. To obtain the evolution of~$\Lambda \mathbf{e_k}$ for~$k=1,2,3$, we multiply the Eq. \eqref{eq:macro_lambda} by~$\mathbf{e_k}$ and compute to obtain:
\beqar
&&\rho D_t\Omega + P_{\Omega^\perp}\lp c_3 \nabla_x\rho + c_4 \rho \,\rvec_x(\Lambda) \rp =0,\\ 
&& \rho D_t \uu - \uu \cdot \lp c_3 \nabla_x \rho + c_4 \rho \,\rvec_x(\Lambda)\rp \Omega + c_4\rho \delta_x(\Lambda)\, \vv =0,\\
&&\rho D_t \vv - \vv \cdot \lp c_3 \nabla_x \rho + c_4 \rho \,\rvec_x(\Lambda)\rp \Omega - c_4 \rho \delta_x(\Lambda)\, \uu =0,
\eeqar
where~$D_t= \partial_t + c_2(\Omega\cdot\nabla_x)$. To perform the computations we have used, for~$\ww=\nabla_x\rho$ or~$\ww=\rvec$ that
$$[\ww\times\Omega]_\times \Omega = -P_{\Omega^\perp}(\ww)\quad\text{ and }\quad(\ww\times \Omega)\times \uu = (\uu\cdot \ww)\Omega$$
since~$\Omega\perp \uu$ (analogously for~$\vv$). From here, using~\eqref{eq-r-omega-u-v} we obtain straightforwardly Eqs.~\eqref{eq:Omega},~\eqref{eq:u}, and~\eqref{eq:v} for~$\Omega$,~$\uu$ and~$\vv$  respectively.
\end{proof}

\section{Conclusions and open questions}

In the present work we have presented a new flocking model through body attitude coordination. We have proposed an  Individual Based Model where  agents are described by their position and a rotation matrix (corresponding to the body attitude). From the Individual Based Model we have derived the macroscopic equations via the mean-field equations. We observe that the macroscopic equation gives rise to  a new class of models, the Self-Organized Hydrodynamics for body attitude coordination (SOHB).
This model does not reduce to the more classical Self-Organized Hydrodynamics (SOH), which is the continuum version of the Vicsek model.
The dynamics of the SOHB system are more complex than those of the SOH ones of the Vicsek model. In a future work, we will carry out simulations of the Individual Based Model and the SOHB model  and study the patterns that arise to compare them with the ones of the Vicsek and SOH model. 

Also, there exist yet many open questions on the modelling side. For instance, one could consider that agents have a limited angle of vision, thus the so-called influence kernel~$K$ (see Section~\ref{sec:derivation_IBM}) is not isotropic any more, see \cite{frouvelle2012continuum} for the case of the Vicsek and SOH models. One could also consider a different interaction range for the influence kernel~$K$ that may give rise to  a diffusive term in the macroscopic equations, see \cite{degond2013macroscopic}. Moreover, in the case of the SOH model, when the coordination frequency and noise intensity (quantities~$\nu$ and~$D$ in the Individual Based Model \eqref{eq:IBM}-\eqref{eq:IBM2}) are functions of the flux of the agents, then phase transitions occur at the macroscopic level \cite{degond2013macroscopic}, (see also \cite{barbaro2012phase,bertin2009hydrodynamic,degond2015phase,toner1995long}). An analogous feature is expected to happen in the present case. Finally, one could think of elaborating on the model by adding repulsive effects at short range and attraction effects at large range.

On the analytical side, this model opens also many questions like making Prop.~\ref{prop:mean_field_limit}  rigorous, which means dealing with Stochastic Differential Equations with non-Lipschitz coefficients. In the context of the Vicsek model, the global well-posedness has been proven for the homogeneous mean-field Vicsek equation and also its convergence to the von Mises equilibria in \cite{Figalli}, see also \cite{Gamba}; an analogous result for our model will be desirable. The convergence of the Vicsek model to the model which was formally done in  \cite{degond2008continuum} has been recently achieved rigorously in \cite{Hydro_limit}. Again, one could also think of generalizing these results to our case.

     \bigskip
     
     \paragraph{Acknowledgements.}
     
P.D. acknowledges support from the Royal Society and the Wolfson foundation through a Royal Society Wolfson Research Merit Award; the National Science Foundation under NSF Grant RNMS11-07444 (KI-Net); the British ``Engineering and Physical Research Council''	 under grant ref: EP/M006883/1. 
P.D. is on leave from CNRS, Institut de Math\'ematiques de Toulouse, France.

A.F acknowledges support for the ANR projet ``KIBORD'', ref: ANR-13-BS01-0004 funded by the
French Ministry of Research.

 S.M.A. was supported by the British ``Engineering and Physical Research Council'' under grant ref: EP/M006883/1. S.M.A. gratefully acknowledges the hospitality of CEREMADE, Universit\'e Paris Dauphine, where part of this research was conducted.

\appendix

\section{Special Orthogonal Group~$SO(3)$}
\label{ap:SO(3)}

Throughout the text, we used repeatedly the following properties:
\begin{proposition}[Space decomposition in symmetric and antisymmetric matrices] \label{prop:spacedecompositionsymmetricandanti}
Denote by~$\mathcal{S}$ the set of symmetric matrices in~$\mathcal{M}$ and by~$\mathcal{A}$ the set of antisymmetric ones. Then
$$\mathcal{S} \oplus \mathcal{A}=\mathcal{M} \mbox{ and } \mathcal{A}\perp \mathcal{S}.$$
\end{proposition}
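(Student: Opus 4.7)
The plan is to prove the two assertions separately, using the inner product on $\mathcal{M}$ given in~\eqref{eq:dot_product_SO3}, namely $A\cdot B=\tfrac12\tr(A^TB)$.

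First I would establish the decomposition $\mathcal{S}+\mathcal{A}=\mathcal{M}$ by exhibiting it explicitly: for any $M\in\mathcal{M}$, write
\[
M=\tfrac12(M+M^T)+\tfrac12(M-M^T),
\]
where the first summand is symmetric and the second antisymmetric. To turn this into a direct sum, I would check that $\mathcal{S}\cap\mathcal{A}=\{0\}$: if $M$ satisfies both $M^T=M$ and $M^T=-M$, then $M=-M$, so $M=0$.

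For the orthogonality $\mathcal{A}\perp\mathcal{S}$, I would take $S\in\mathcal{S}$ and $P\in\mathcal{A}$ and compute
\[
S\cdot P=\tfrac12\tr(S^TP)=\tfrac12\tr(SP),
\]
while on the other hand, using cyclicity of the trace and $P^T=-P$,
\[
S\cdot P=P\cdot S=\tfrac12\tr(P^TS)=-\tfrac12\tr(PS)=-\tfrac12\tr(SP).
\]
Combining the two gives $S\cdot P=-S\cdot P$, hence $S\cdot P=0$. This also provides an alternative argument for $\mathcal{S}\cap\mathcal{A}=\{0\}$, since any element in the intersection would be orthogonal to itself and therefore zero.

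There is no genuine obstacle here, as the proof rests only on the symmetry/antisymmetry conditions and the cyclicity of the trace; the only point requiring a moment of care is making sure that the inner product used is the one fixed in~\eqref{eq:dot_product_SO3} so that the orthogonality statement is unambiguous.
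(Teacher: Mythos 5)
Your proof is correct and follows essentially the same route as the paper: the explicit splitting $M=\tfrac12(M+M^T)+\tfrac12(M-M^T)$ together with a trace manipulation (transpose invariance and cyclicity) showing $S\cdot P=-S\cdot P=0$. Your extra remark that orthogonality already forces $\mathcal{S}\cap\mathcal{A}=\{0\}$ is a nice touch but does not change the argument.
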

\begin{proof}For~$A\in\mathcal{M}$ we have~$A=\frac12(A+A^T)+\frac12(A-A^T)$, the first term being symmetric and the second antisymmetric. The orthogonality comes from the properties of the trace, namely~$\tr(A^T)=\tr(B)$, and~$\tr(AB)=\tr(BA)$~for~$B\in\mathcal{M}$. Indeed if~$P\in \mathcal{A}$ and~$S \in \mathcal{S}$ then~$\tr(P^TS)=\tr(SP^T)=\tr(PS^T)=-\tr(P^TS)$. Hence~$P \cdot S=\frac12\tr(P^TS)=0$.
\end{proof} 
\begin{proposition}[Tangent space to~$SO(3)$] \label{prop:tangentspaceSOn}
For~$A\in SO(3)$, denote by~$T_A$ the tangent space to~$SO(3)$ at~$A$. Then 
\[M\in T_A \mbox{ if and only if there exists } P\in \mathcal{A}\mbox{ s.t } M=AP,\]
or equivalently the same statement with~$M=PA$. Consequently, we have that
\[M\in T^\perp_A \mbox{ if and only if there exists } S\in \mathcal{S} \mbox{ s.t. } M=AS, \]
or equivalently the same statement with~$M=SA$.
 \end{proposition}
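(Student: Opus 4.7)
The plan is to characterize $T_A$ directly via the defining constraint of $SO(3)$, then deduce $T_A^\perp$ using Proposition~\ref{prop:spacedecompositionsymmetricandanti} together with the invariance of the inner product~\eqref{eq:dot_product_SO3} under left multiplication by $A$.

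First I would show that $T_A=A\mathcal{A}$. Let $\gamma:(-\varepsilon,\varepsilon)\to SO(3)$ be a smooth curve with $\gamma(0)=A$ and $\gamma'(0)=M$; such an $M$ is a generic element of $T_A$. Differentiating the identity $\gamma(t)\gamma(t)^T=\Id$ at $t=0$ gives $MA^T+AM^T=0$, i.e.\ $MA^T\in\mathcal{A}$. Setting $P=A^TM$, one then has $M=AP$ and $P+P^T=A^TM+M^TA=A^T(MA^T+AM^T)A=0$ (using $A^TA=\Id$), so $P\in\mathcal{A}$. Conversely, given $P\in\mathcal{A}$, the curve $t\mapsto A\exp(tP)$ lies in $SO(3)$ (since $\exp(tP)\in SO(3)$ as $P$ is antisymmetric) and has derivative $AP$ at $t=0$, showing that $AP\in T_A$. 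Thus the linear map $P\mapsto AP$ sends $\mathcal{A}$ into $T_A$ surjectively; since both spaces have dimension $3$, the characterization $M\in T_A \Leftrightarrow M=AP$ for some $P\in\mathcal{A}$ follows. The equivalent form $M=PA$ is obtained by setting $P'=MA^T=APA^T$ and checking antisymmetry by the analogous calculation $P'+(P')^T=MA^T+AM^T=0$.

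Next I would deduce the characterization of $T_A^\perp$. By Proposition~\ref{prop:spacedecompositionsymmetricandanti}, $\mathcal{M}=\mathcal{S}\oplus\mathcal{A}$ with $\mathcal{S}\perp\mathcal{A}$ for the inner product~\eqref{eq:dot_product_SO3}. This inner product is left-invariant under $SO(3)$: for $B,C\in\mathcal{M}$,
\[
(AB)\cdot(AC)=\tfrac12\tr\bigl((AB)^T(AC)\bigr)=\tfrac12\tr(B^TA^TAC)=\tfrac12\tr(B^TC)=B\cdot C.
\]
Applying this with $B\in\mathcal{S}$ and $C\in\mathcal{A}$ gives $A\mathcal{S}\perp A\mathcal{A}=T_A$. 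Since $A$ is invertible, $A\mathcal{S}$ has dimension $6$ and $A\mathcal{S}\oplus A\mathcal{A}=A\mathcal{M}=\mathcal{M}$, hence $T_A^\perp=A\mathcal{S}$, which is exactly the stated characterization $M=AS$ with $S\in\mathcal{S}$. The equivalent form $M=SA$ follows by the same trick as in the tangent case: given $M=AS$, set $S'=MA^T=ASA^T$, which is symmetric since $(ASA^T)^T=ASA^T$.

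No step looks like a genuine obstacle here; the only subtlety is being careful to distinguish $A^TM$ from $MA^T$ (both give representations but as elements of $\mathcal{A}$ they correspond to left- versus right-trivialization of $T_A$), and to invoke the left-invariance of the inner product~\eqref{eq:dot_product_SO3} to transport the $\mathcal{S}\perp\mathcal{A}$ decomposition from $T_{\Id}\mathcal{M}\oplus(T_{\Id}\mathcal{M})^\perp$ to the corresponding decomposition at $A$.
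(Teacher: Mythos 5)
Your proof is correct and follows essentially the same route as the paper: differentiate the orthogonality constraint along a curve to get $A^TM\in\mathcal{A}$, use the curve $t\mapsto A\exp(tP)$ for the converse, conjugate by $A$ for the $PA$ form, and transport the orthogonal decomposition $\mathcal{M}=\mathcal{S}\oplus\mathcal{A}$ by left-invariance of the inner product to identify $T_A^\perp=A\mathcal{S}$. The only differences are cosmetic (your explicit dimension counts are redundant given that you already prove both inclusions, but harmless).
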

\begin{proof}
We have that~$M \in T_A$ if and only if there exists a curve~$\Lambda(t)$ from the neighborhood of~$0$ in~$\R$ to~$SO(3)$ such that~$\Lambda(0)=A$ and~$\Lambda'(0)=M$. We then have
\[\Id=\Lambda(t)\Lambda^T(t)=(A+tM+o(t))(A^T+tM^T+o(t))=\Id +t(A^T M +M^TA)+o(t).\]
So if~$M\in T_A$, we must have~$(A^T M +M^TA)=0$, that is to say that~$P=A^TM\in\mathcal{A}$. 

Conversely if~$M=AP$ with~$P\in\mathcal{A}$, the solution of the linear differential equation~$\Lambda'(t)=\Lambda(t)P$ with~$\Lambda(0)=A$ is given by~$\Lambda(t)=Ae^{tP}$ so it is a curve in~$SO(3)$. Indeed we have~$\Lambda(t)^T\Lambda=(e^{tP})^Te^{tP}=e^{tP^T}e^{tP}=e^{-tP}e^{tP}=\Id$. Since~$\Lambda'(0)=AP=M$, we get that~$M\in T_A$. The equivalent condition comes from the fact that if~$M=AP$, with~$P\in\mathcal{A}$, then~$M=APA^TA=\widetilde PA$ with~$\widetilde P\in\mathcal{A}$. Finally the last part is obtained thanks to Prop.~\ref{prop:spacedecompositionsymmetricandanti}) and the fact that the dot product is left (and right) invariant with respect to~$SO(3)$: if~$B,C\in\mathcal{M}$ and~$A\in SO(3)$, then~$AB\cdot AC=\frac12\tr(B^TA^TAC)=B\cdot C$.
 \end{proof}
 
 \begin{proposition}[Projection operator on the tangent space]
 \label{prop:projection_tangentspace}
 Let~$A\in SO(3)$ and~$M\in \mathcal{M}$ (set of square matrices). Let~$P_{T_A}$ be the orthogonal projection on~$T_A$ (tangent space at~$A$), then
\begin{equation}
\label{eq:definition_projection}
 P_{T_A}(M) = \frac{1}{2}\lp M-AM^T A\rp.
 \end{equation}
 Notice that then
$$P_{T^\perp_A}(M) = \frac{1}{2} \lp M +AM^T A \rp.$$
\end{proposition}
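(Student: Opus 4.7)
The plan is to exploit the characterization of $T_A$ and $T_A^\perp$ provided by Proposition~\ref{prop:tangentspaceSOn}, combined with the direct sum decomposition $\mathcal{M}=\mathcal{S}\oplus\mathcal{A}$ of Proposition~\ref{prop:spacedecompositionsymmetricandanti}. Since $A\in SO(3)$ satisfies $A^TA=\Id$, I can write any $M\in\mathcal{M}$ as $M=A(A^TM)$, which reduces the decomposition of $M$ to a decomposition of the square matrix $A^TM$ into its antisymmetric and symmetric parts.

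Explicitly, I would set
\[
P=\tfrac12\bigl(A^TM-(A^TM)^T\bigr)=\tfrac12(A^TM-M^TA)\in\mathcal{A},\qquad S=\tfrac12(A^TM+M^TA)\in\mathcal{S},
\]
so that $A^TM=P+S$, and hence $M=AP+AS$. By Proposition~\ref{prop:tangentspaceSOn}, $AP\in T_A$ and $AS\in T_A^\perp$. Since $\mathcal{M}=T_A\oplus T_A^\perp$ and the orthogonal decomposition of $M$ is unique, this identifies $P_{T_A}(M)=AP$ and $P_{T_A^\perp}(M)=AS$. Expanding gives
\[
P_{T_A}(M)=\tfrac12 A(A^TM-M^TA)=\tfrac12(M-AM^TA),
\]
and similarly $P_{T_A^\perp}(M)=\tfrac12(M+AM^TA)$, which are exactly the claimed formulas.

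There is no real obstacle here; the only thing to double-check is that the two previous propositions are being applied correctly, namely that $P\in\mathcal{A}\Rightarrow AP\in T_A$ and $S\in\mathcal{S}\Rightarrow AS\in T_A^\perp$, both of which are stated explicitly in Proposition~\ref{prop:tangentspaceSOn}. The uniqueness of the orthogonal decomposition then closes the argument in one line. If desired, one could also verify the result a posteriori by checking directly that $\frac12(M-AM^TA)=A\bigl[\tfrac12(A^TM-M^TA)\bigr]$ lies in $T_A$ and that the residual $\frac12(M+AM^TA)$ is orthogonal to every element $AQ$ with $Q\in\mathcal{A}$, using the left-invariance of the inner product $B\cdot C=\tfrac12\tr(B^TC)$ recalled in the proof of Proposition~\ref{prop:tangentspaceSOn}.
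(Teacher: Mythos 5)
Your proposal is correct and follows essentially the same route as the paper: both arguments reduce to checking that $A^T\cdot\tfrac12(M-AM^TA)=\tfrac12(A^TM-M^TA)$ is antisymmetric and $A^T\cdot\tfrac12(M+AM^TA)$ is symmetric, and then invoke Proposition~\ref{prop:tangentspaceSOn} together with the orthogonal splitting $\mathcal{M}=\mathcal{S}\oplus\mathcal{A}$. The only difference is presentational (you decompose $A^TM$ first and identify the projections, the paper verifies the stated formula directly), so nothing further is needed.
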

\begin{proof}It suffices to verify that the expression given for~$P_{T_A}(M)$ satisfies~$P_{T_A}(M)\in T_A$ and~$M-P_{T_A}(M)\in T_A^\perp$, that is to say~$A^TP_{T_A}M\in\mathcal{A}$ and~$A^T(M-P_{T_A}(M))\in\mathcal{S}$ thanks to Prop.~\ref{prop:tangentspaceSOn}.
We have indeed~$A^T\frac12(M-AM^TA)=\frac12(A^TM-M^TA)$ which is clearly antisymmetric, and~$A^T\frac12(M+AM^TA)=\frac12(A^TM+M^TA)$ which is symmetric.
\end{proof}

To compute the gradient in~$SO(3)$ of a function~$\psi:SO(3)\to \R$ we will consider~$A(\eps)$ a differentiable curve in~$SO(3)$ such that
$$A(0)=A, \qquad \left. \frac{d}{d\eps}A(\eps)\right|_{\eps=0}= \delta_A\in T_A$$
then~$\nabla_A\psi(A)$ is the element of~$T_A$ such that for any~$\delta_A\in T_A$, we have
$$\lim_{\eps \to 0} \frac{\psi(A(\eps))-\psi(A)}{\eps}= \nabla_A\psi(A) \cdot \delta_A.$$
In particular, one can check that
\be \label{eq:gradient_product}
\nabla_A (A \cdot M) = P_{T_A}(M), \qquad M\in \mathcal{M}.
\ee
We now show that the differential equation corresponding to following this gradient has trajectories supported on geodesics.
\begin{proposition}
\label{prop:relax_geodesic}
If~$B\in SO(3)$ and~$A_0\in SO(3)$, the trajectory of the solution of the differential equation~$\frac{dA}{dt}=\nu(A\cdot B)\, P_{T_A}B=\nu(A\cdot B)\,\nabla_A (A \cdot B)$ with~$A(0)=A_0$ (and with~$\nu$ smooth and positive) is supported on a geodesic from~$A_0$ to~$B$.

\end{proposition}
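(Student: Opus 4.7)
The plan is to exhibit an explicit parametrization of the trajectory along the geodesic joining $A_0$ to $B$ and then invoke Cauchy-Lipschitz uniqueness. By Rodrigues' formula~\eqref{eq:exponential_form} applied to $A_0^T B\in SO(3)$, I would write $A_0^T B = \exp(\theta_0 [\nvec_0]_\times)$ for some $\theta_0 \in [0,\pi]$ and $\nvec_0 \in S^2$; as recalled after~\eqref{eq:properties_asym_matrix}, the geodesic from $A_0$ to $B$ is precisely $t \in [0,\theta_0] \mapsto A_0 \exp(t [\nvec_0]_\times)$. I would then try the ansatz $A(t) = A_0 \exp(\phi(t)[\nvec_0]_\times)$ with $\phi(0) = 0$ and reduce the matrix ODE to a scalar ODE for $\phi$.

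Under this ansatz, since $[\nvec_0]_\times$ commutes with $\exp(\phi(t)[\nvec_0]_\times)$, a one-line computation gives $\frac{dA}{dt} = \phi'(t)\, A\, [\nvec_0]_\times$. On the other side, $A^T B = \exp((\theta_0 - \phi(t))[\nvec_0]_\times)$, so that $A B^T A = A\exp(-(\theta_0-\phi(t))[\nvec_0]_\times)$. Combining Prop.~\ref{prop:projection_tangentspace} with the identity $\exp(\alpha [\nvec_0]_\times) - \exp(-\alpha [\nvec_0]_\times) = 2\sin\alpha\,[\nvec_0]_\times$ (an immediate consequence of Rodrigues' formula~\eqref{eq:Rodrigues_formula}), one gets
\[P_{T_A} B = \tfrac{1}{2}(B - A B^T A) = \sin(\theta_0 - \phi(t))\, A\, [\nvec_0]_\times.\]
Moreover $A\cdot B = \frac{1}{2}\tr(A^T B) = \frac{1}{2} + \cos(\theta_0 - \phi(t))$ by~\eqref{eq:trace_theta}. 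Since $A[\nvec_0]_\times\neq 0$, matching the two sides of $\frac{dA}{dt}=\nu(A\cdot B)P_{T_A}B$ collapses the matrix equation to the scalar ODE
\[\phi'(t) = \nu\!\left(\tfrac{1}{2} + \cos(\theta_0 - \phi(t))\right)\sin(\theta_0 - \phi(t)), \qquad \phi(0) = 0,\]
whose right-hand side is smooth, non-negative on $[0,\theta_0]$ and vanishes only at the endpoints, so it admits a unique smooth solution valued in $[0,\theta_0)$.

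To conclude, the right-hand side of the original matrix equation is smooth on $SO(3)$, hence its Cauchy problem with initial data $A_0$ has a unique solution; the curve $A(t) = A_0 \exp(\phi(t)[\nvec_0]_\times)$ constructed above is one such solution, and its image is visibly contained in the geodesic from $A_0$ to $B$. The only point to flag is the degenerate case $\theta_0 = \pi$, where $\nvec_0$ is only defined up to a sign: both choices produce the same geodesic and the same scalar ODE for $\phi$, so the conclusion is unchanged. I do not anticipate any serious obstacle here beyond the algebraic bookkeeping with Rodrigues' formula; the whole argument rests on the fact that the velocity field $\nu(A\cdot B) P_{T_A}B$ is tangent to the one-parameter subgroup generated by $[\nvec_0]_\times$ along the ansatz, which is what the identity for $P_{T_A} B$ above expresses.
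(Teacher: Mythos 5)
Your proof is correct and follows essentially the same route as the paper's: both exhibit an explicit solution supported on the geodesic via Rodrigues' formula and the projection identity $P_{T_A}B=\frac12(B-AB^TA)$, reducing the matrix ODE to the same scalar ODE (your $\phi$ is the paper's $\theta$ via $\theta=\theta_0-\phi$, with the base point taken at $A_0$ instead of $B$). The only addition is that you make the Cauchy--Lipschitz uniqueness step explicit, which the paper leaves implicit.
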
 
\begin{proof}
Indeed, write~$B^TA_0=\exp(\theta_0 \asymn)$ thanks to Rodrigues' formula~\eqref{eq:exponential_form} with~$\asymn$ an antisymmetric matrix of unit norm and~$\theta_0\in[0,\pi]$. If we set~$A(t)=B\exp(\theta(t)\asymn)$ where~$\theta$ satisfies the equation~$\theta'=-\nu(\frac12+\cos\theta)\sin\theta$ with~$\theta(0)=\theta_0$, we get 
\[\frac{dA}{dt}=B\exp(\theta(t)\asymn)\theta'(t)\asymn=-\nu(\tfrac12+\cos\theta(t)) B\exp(\theta(t)\asymn)\sin\theta(t)\asymn.\]
Now, thanks to the expression~\eqref{eq:Rodrigues_formula}, we have 
\[\sin\theta\asymn=\tfrac12(\exp(\theta\asymn)-\exp(\theta\asymn)^T)=\tfrac12(B^TA-A^TB),\]
and~$A\cdot B=\Id\cdot AB^T=\frac12\tr(\exp(\theta\asymn))=\frac12+\cos\theta$ thanks to~\eqref{eq:trace_theta}.
Therefore we obtain
\[\frac{dA}{dt}=-\nu(A\cdot B) A\,\tfrac12(B^TA-A^TB)=\nu(A\cdot B) P_{T_A}B,\] 
thanks to~\eqref{eq:definition_projection} and we have~$A(0)=A_0$. Since~$\theta\in[0,\theta_0]\mapsto\exp(\theta \asymn)$ is a geodesic between~$\Id$ and~$B^TA_0$, then~$\theta\mapsto B\exp(\theta \asymn)$ is a geodesic between~$B$ and~$A$, and the solution~$A(t)$ is supported on this geodesic. It is also easy to see that, except in the case~$\theta_0=\pi$ or~$\theta_0=0$, for which the solution is constant, the function~$t\mapsto \theta(t)$ (solution of the one-dimensional differential equation~$\theta'=-\nu(\frac12+\cos\theta)\sin\theta$) is positive, decreasing, and converge exponentially fast to~$0$, with an asymptotic exponential rate~$\nu(\frac32)$. Therefore, as time goes to infinity, the trajectory covers the whole geodesic from~$A_0$ to~$B$ (excluded).
\end{proof}

% \medskip
% To integrate on~$SO(3)$ one can defined the so-called Haar measure~$dA$  which has the following properties:
% \be \label{eq:total_mass_SO3}
% \int_{SO(3)} dA =1
% \ee
%  and for a function~$f$ and any~$B\in SO(3)$
% \be \label{eq:Haar_measure}
% \int_{SO(3)} f(A) \, dA= \int_{SO(3)} f(A^{-1}) \, dA= \int_{SO(3)} f(BA) \, dA = \int_{SO(3)}f(AB)\, dA
% \ee
% see \cite[Theorem 2.1]{remling}. 

We now turn to the proofs of the expressions of the gradient, the volume form and the divergence in~$SO(3)$ in the so-called Euler axis-angle coordinates, that were presented in section~\ref{sec:differential_calculus}. 

\begin{proof}[Proof of Prop.~\ref{prop:gradient_SO3}: expression of the gradient in~$SO(3)$.]

Consider a curve in~$SO(3)$ given by 
$$A(t)= \exp(\theta(t)\asymn(t))= \Id+\sin(\theta(t))\asymn(t)+(1-\cos(\theta(t)))[\nvec]_\times^2(t)$$
(following \eqref{eq:Rodrigues_formula}-\eqref{eq:def_operator_asym}) with~$A(0)=A$,~$\theta(0)=\theta$ and~$\asymn(t)=\left[\nvec(t) \right]_\times$,~$\nvec(0)=\nvec$. Define:
\beqar
\delta_A &=& A'(0) \in T_A,\\
\delta_\theta &=& \theta'(0) \in \R,\\
\delta_\nvec &=& \nvec'(0),\\
\delta_\asymn &=& \asymn'(0) = [\delta_\nvec ]_\times.
\eeqar
With these notations, for a function~$f=f(A(\theta,\nvec))$ it holds:
\be \label{eq:looked_for_nabla}
\nabla_A f \cdot \delta_A= \frac{\partial f}{\partial \theta} \delta_\theta + \nabla_\nvec f \cdot \delta_\nvec.
\ee
On the other hand, it holds true that
\beqarl
\delta_A &=& A\asymn \delta_\theta +  \sin\theta \delta_\asymn + (1- \cos\theta) \lp \asymn\delta_\asymn + \delta_\asymn \asymn \rp\nonumber\\
&=& A\asymn \delta_\theta +A A^T\Big(\sin\theta \delta_\asymn + (1- \cos\theta) \lp \asymn\delta_\asymn + \delta_\asymn \asymn \rp \Big)\nonumber\\
&=& A\asymn \delta_\theta +A \big(\Id-\sin\theta\,\asymn+(1-\cos\theta)[\nvec]_\times^2\big)\nonumber\\
&&\hspace{3cm}\Big(\sin\theta \delta_\asymn + (1- \cos\theta) \lp \asymn\delta_\asymn + \delta_\asymn \asymn \rp \Big)\nonumber\\
&=& A\asymn \delta_\theta +A \Big(\sin\theta \delta_\asymn + (1- \cos\theta) \lp \delta_\asymn \asymn-\asymn\delta_\asymn  \rp \Big)\nonumber\\
&=& A\asymn \delta_\theta + 2 \sin(\theta/2)A \lp\cos(\theta/2)\left[\delta_\nvec \right]_\times + \sin(\theta/2) \left[ \nvec \times \delta_\nvec \right]_\times \rp, \nonumber\\
&=&A\asymn \delta_\theta+L_\asymn(\delta_\asymn)\label{eq:decomposition_tangent_space},
\eeqarl
where the last line defines~$L_\asymn$.
In the first line, the term in~$\delta_\theta$ is obtained by differentiating the exponential form \eqref{eq:exponential_form} of~$A(t)$ assuming that~$\asymn(t)$ is constant. The term in~$\delta_\asymn$ is obtained by differentiating Rodrigues' formula \eqref{eq:Rodrigues_formula}.
To do the computation we have used Rodrigues' formula \eqref{eq:Rodrigues_formula} to express~$A^T$ and the facts that~$[\nvec]_\times^3=-\asymn$;~$\asymn\delta_\asymn \asymn=0$; and~$\delta_\asymn \asymn-\asymn\delta_\asymn = [\delta_\nvec \times \nvec]_\times$.

In particular notice that~$\{\asymn, [\delta_\nvec]_\times, [\nvec \times \delta_\nvec]_\times \}$ is an orthogonal basis of~$\mathcal{A}$ (antisymmetric matrices) from which we obtain a basis of~$T_A$ (by Prop.~\ref{prop:tangentspaceSOn}). So, we just need to compute the components of~$\nabla_A f$ in~$span\{A\asymn\}$ and~$span\{(A\asymn)^\perp\}$.

We will show that the component in~$span\{A\asymn\}$ is given by
\be \label{eq:projection1}
P_{A\asymn}\lp \nabla_A f \rp = \frac{\partial f}{\partial \theta} A\asymn
\ee
and the one on~$span\{(A\asymn)^\perp\}$ is
\be \label{eq:projection2}
P_{(A\asymn)^\perp}\lp \nabla_A f \rp =\frac{1}{2 \sin(\theta/2)} A \lp\cos(\theta/2) \left[ \nabla_\nvec f\right]_\times+ \sin(\theta/2) \left[\nvec \times \nabla_\nvec f \right]_\times \rp.
\ee
The sum of the two previous expressions gives \eqref{eq:gradient_SO3} ($\nabla_A f= P_{A\asymn}(\nabla_A f) + P_{(A\asymn)^\perp}(\nabla_A f)$).
The component \eqref{eq:projection1} is computed considering the case where~$\delta_\nvec=0$ in \eqref{eq:decomposition_tangent_space}-\eqref{eq:looked_for_nabla}, so that
$$\nabla_A f \cdot \delta_A = \nabla_A f \cdot A\asymn \delta_\theta = \frac{\partial f}{ \partial \theta} \delta_\theta.$$
Expression \eqref{eq:projection1} is obtained by noticing that~$(A\asymn) \cdot (A\asymn) =\asymn\cdot \asymn=\nvec\cdot\nvec=1$ (using \eqref{eq:properties_asym_matrix}).

To obtain the component \eqref{eq:projection2}, consider  the case~$\delta_\theta=0$ in \eqref{eq:decomposition_tangent_space} and \eqref{eq:looked_for_nabla} so that
\begin{equation}
\nabla_A f \cdot \delta_A = \nabla_A f \cdot L_\asymn(\delta_\asymn) = \nabla_\nvec f \cdot \delta_\nvec,\label{eq-nabladelta}
\end{equation}
where~$L_\asymn$ is given in \eqref{eq:decomposition_tangent_space}.

We have that
$$P_{(A\asymn)^\perp}\lp\nabla_A f \rp= A \left[ \uu\right]_\times \quad \mbox{ for some } \uu \perp \nvec.$$
The goal is to compute~$\uu$ as a function of~$\vv:= \nabla_\nvec f$. By~\eqref{eq-nabladelta} we have that
$$A\left[ \uu \right]_\times \cdot L_\asymn(\delta_\asymn)= \nabla_\nvec f \cdot \delta_\nvec.$$
This implies that
$$ 2 \sin(\theta/2) \left[ \uu\right]_\times \cdot \lp \cos(\theta/2) \left[ \delta_\nvec \right]_\times + \sin(\theta/2) \left[ \nvec \times \delta_\nvec \right]_\times\rp = \vv \cdot \delta_\nvec \quad \mbox{for all } \delta_\nvec \perp \nvec,$$
so (see \eqref{eq:properties_asym_matrix}) we get
$$2 \sin(\theta/2) \lp\cos(\theta/2) \uu + \sin(\theta/2) \uu\times \nvec \rp \cdot \delta_\nvec= \vv \cdot \delta_\nvec.$$
Since this is true for all~$\delta_\nvec$ orthogonal to~$\nvec$, we get
\[\vv=2 \sin(\theta/2) \lp\cos(\theta/2) \uu + \sin(\theta/2) \uu\times \nvec \rp.\]
From here can get the expression of~$\nvec \times\vv$ in terms of~$\uu$ and~$\nvec \times \uu$. After some computations we finally obtain that
$$\uu= \frac{1}{2 \sin(\theta/2)}\lp \cos(\theta/2) \vv + \sin(\theta/2) \nvec \times \vv\rp.$$

\end{proof}

\begin{proof}[Proof of the volume form, Lemma~\ref{lem-volume-form}]
We denote by~$g$ the metric of the Riemannian manifold~$SO(3)$ associated to the inner product
$$A\cdot B = \frac{1}{2}\tr(A^TB), \quad A, B\in SO(3).$$
The volume form is proportional to~$\sqrt{\det(g)}$ \cite{gallot1990riemannian}. We compute the volume form using spherical coordinates, i.e., we consider the coordinates~$(\theta, \phi, \psi)\in[0,\pi]\times[0,\pi]\times[0,2\pi]$. Given the Euler axis-angle coordinates~$(\theta,\nvec)$ we have that
$$\nvec = \lp \begin{array}{c}
\sin \phi \cos\psi \\
\sin \phi \sin \psi \\
\cos \phi
\end{array}
\rp.$$
For the spherical coordinate system, we consider the vector field~$\lp\frac{\partial}{\partial \theta}, \frac{\partial}{\partial \phi}, \frac{\partial}{\partial \psi}\rp$. Denoting
$$Y_1= \frac{\partial A}{\partial \theta}, \,\; Y_2 = \frac{\partial A}{\partial \phi},\,\; Y_3= \frac{\partial A}{\partial \psi} , \quad A\in SO(3),$$
we get that~$(Y_{i})_{i=1,2,3}\in T_{A}(SO(3))$ forms a basis of vectors fields at~$A$.

The metric~$g$ is defined as~$g_{ij}=g(Y_i, Y_j)= \frac{1}{2}\tr (Y_i^T Y_j)$,~$i,j=1,2,3$. We compute next each term. Firstly, we know that for a given~$\delta_A \in T_A$, there exists~$\delta_\theta, \delta_\psi, \delta_\phi\in \R$ such that
$$\delta_A = \frac{\partial A}{\partial \theta} \delta_\theta + \frac{\partial A}{\partial \phi}\delta_\phi + \frac{\partial A}{\partial \psi} \delta_{\psi}$$
and also for a given~$\delta_\nvec \in T_{\nvec}(S^2)$ (the tangent plane to the sphere at~$\nvec$), there exists~$\delta_\psi'$,~$\delta_\psi'$ such that
$$\delta_\nvec = \frac{\partial \nvec}{\partial\phi}\delta_\phi' +\frac{\partial \nvec}{\partial \psi} \delta_\psi'.$$

Now, following the computation given in \eqref{eq:decomposition_tangent_space} we have that, for~$\delta_\theta =1, \delta_\phi=0, \delta_\psi=0$ 
$$\frac{\partial A}{\partial \theta}= \delta_A = A[\nvec]_\times.$$
Now, if~$\delta_\theta=0, \delta_\phi=1, \delta_\psi=0$ then, using that~$\delta_\nvec = \frac{\partial \nvec}{\partial \phi}$ we have that
$$\frac{\partial A}{\partial \phi}=\delta_A= 2\sin(\theta/2) A\left[R_{\nvec,\theta/2}\lp\frac{\partial \nvec}{\partial \phi} \rp\right]_\times,$$
where
$$R_{\nvec,\theta/2}(\vv)=\cos(\theta/2) \vv + \sin (\theta/2) (\nvec \times \vv),$$
which corresponds to the rotation of the vector~$\vv$ around~$\nvec$ by an angle~$\theta/2$ (anticlockwise) as long as~$\vv\cdot \nvec =0$.
Analogously one can also deduce that
$$\frac{\partial A}{\partial \psi}= 2 \sin(\theta/2) \left[R_{n,\theta/2}\lp \frac{\partial \nvec}{\partial \psi}\rp\right]_\times.$$

From here, using that~$\|\frac{\partial{\nvec}}{\partial\phi}\|^2=1$ and~$\|\frac{\partial{\nvec}}{\partial\psi}\|^2=\sin^2\phi$, we conclude that
$$g=\lp
\begin{array}{ccc}
1 & 0 & 0\\
0 & 4\sin^2(\theta/2) & 0 \\
0 & 0 & 4 \sin^2(\theta/2)\sin^2\phi
\end{array}
\rp.
$$
Notice that to compute~$g\lp\frac{\partial A}{\partial \theta}, \frac{\partial A}{\partial \phi} \rp$ we use that~$R_{\nvec,\theta/2}\lp\frac{\partial\nvec}{\partial\phi} \rp \perp \nvec$.

Finally we have that
$$\sqrt{det(g)}= 4 \sin^2(\theta/2)\sin \phi$$
and therefore
\beqar
\int_{SO(3)}f(A)\, dA &=& \int_{[0,\pi]\times [0,\pi]\times[0,2\pi]} \widetilde{f}(\theta,\phi, \psi) 4 \sin^2(\theta/2) \sin \phi \, d\theta d\phi d\psi\\
&=& 4\int_{\theta\in[0,\pi]} \lp \int_{ [0,\pi]\times [0.2\pi]} \tilde f(\theta,\phi,\psi)\sin\phi d\phi d\psi\rp \sin^2(\theta/2) \, d\theta.
\eeqar
The term~$\sin \phi d\phi d\psi$ is the volume element in the sphere~$S^2$ so we have that
$$\int_{S^2} \hat{f}(\theta,\nvec) d\nvec = \int_{[0,\pi]\times[0,2\pi]} \tilde f(\theta,\phi,\psi)\sin\phi d\phi d\psi.$$
Therefore, the volume element corresponding to the Euler axis-angle coordinates is proportional to~$\sin^2(\theta/2)d\theta d\nvec$. Since the volume element is defined up to a constant, we choose the constant~$c$ such that
$$\int^{2\pi}_0 c \sin^2(\theta/2)\, d\theta =1,$$
i.e.,~$c= 2/\pi$. In conclusion, the volume element in the Euler axis-angle coordinates corresponds to 
$$\frac{2}{\pi}\sin^2(\theta/2)d\theta d\nvec.$$

\end{proof}
	
\begin{proof}[Proof of divergence formula, Prop.~\ref{prop:divergence_SO3}]
We compute the divergence by duality of the gradient, Prop.~\ref{prop:gradient_SO3}. Let~$f=f(A)$ be a function and consider
\beqar
&&\int_{SO(3)} \nabla_A \cdot B(A)\, f(A) \, dA \\
&=& - \int_{SO(3)} B(A) \cdot \nabla_A f(A) \, dA\\
&=& - \int_{(0,\pi)\times S^2} W(\theta)   b(\theta,\nvec)  \partial_\theta  f(\theta,\nvec) \, d\theta d\nvec\\
&&\, - \int_{(0,\pi)\times S^2} \frac{ W(\theta) }{2 \sin(\theta/2)} \vv(\theta,\nvec) \cdot \big( \cos(\theta/2)\nabla_\nvec f(\nvec, \theta) + \sin(\theta/2) \nvec\times\nabla_\nvec f(\nvec, \theta)\big)  d\theta d\nvec\\
&=&\int_{(0,\pi)\times S^2} \frac{f(\theta,\nvec)}{\sin^2(\theta/2)}\partial_\theta \lp\sin^2(\theta/2) b(\theta,\nvec) \rp W(\theta)\, d\theta d\nvec\\
&&\,+\int_{(0,\pi)\times S^2}  \frac{f(\theta,\nvec) }{2\sin(\theta/2)}\nabla_\nvec \cdot \big( \vv(\theta,\nvec) \cos(\theta/2) + \sin(\theta/2) (\vv(\theta,\nvec) \times \nvec)\big) W(\theta)d \theta d\nvec,
\eeqar
where~$W$ is given by \eqref{eq:volume_element},
from which we deduce the result.
\end{proof}

\bibliographystyle{plain}
\bibliography{Bibliography/biblio_body_attitude}
 
\end{document}